\newtheorem{theorem}{Theorem}
\newtheorem{proposition}[theorem]{Proposition}
\newtheorem{lemma}[theorem]{Lemma}
\newtheorem{corollary}[theorem]{Corollary}
\theoremstyle{definition}
\newtheorem{definition}[theorem]{Definition}
\DeclareMathOperator{\rank}{rank}
\DeclareMathOperator{\Hom}{Hom}
\DeclareMathOperator{\id}{id}
\DeclareMathOperator{\I}{I}
\title{\bf A quasi-optimal lower bound for skew polynomial multiplication}
\date{}
\begin{document}
\author{Qiyuan Chen$^{a,b}$}
		\affiliation{%
			\institution{$^{a}$KLMM, Academy of Mathematics and Systems Science, Chinese Academy of Sciences}
			\institution{$^{b}$School of Mathematical Sciences, University of Chinese Academy of Sciences}
			\city{Beijing}
			\country{China}}
		\email{chenqiyuan@amss.ac.cn}

\author{Ke Ye$^{a,b}$}
		\authornote{Corresponding author.}
		\affiliation{%
			\institution{$^{a}$KLMM, Academy of Mathematics and Systems Science, Chinese Academy of Sciences}
			\institution{$^{b}$School of Mathematical Sciences, University of Chinese Academy of Sciences}
			\city{Beijing}
			\country{China}}
		\email{keyk@amss.ac.cn}

\begin{abstract}
We establish a lower bound for the complexity of multiplying two skew polynomials. The lower bound coincides with the upper bound conjectured by Caruso and Borgne in 2017, up to a log factor. We present algorithms for three special cases, indicating that the aforementioned lower bound is quasi-optimal. In fact, our lower bound is quasi-optimal in the sense of bilinear complexity. In addition, we discuss the average bilinear complexity of simultaneous multiplication of skew polynomials and the complexity of skew polynomial multiplication in the case of towers of extensions.
\end{abstract}


\begin{CCSXML}
<ccs2012>
<concept>
<concept_id>10003752.10003777.10003783</concept_id>
<concept_desc>Theory of computation~Algebraic complexity theory</concept_desc>
<concept_significance>500</concept_significance>
</concept>
</ccs2012>
\end{CCSXML}

\ccsdesc[500]{Theory of computation~Algebraic complexity theory}
\keywords{skew polynomial multiplication, computational complexity, lower bound, average bilinear complexity, \'{e}tale algebra, Kummer extension, Artin extension, tower of extensions}



\maketitle
\section{Introduction}
The skew polynomial ring is a non-commutative analogue of the usual polynomial ring. It is the special case of the Ore algebra first studied in \cite{ore1933theory}. Because of its highly non-trivial algebraic and computational properties, the ring of skew polynomials plays a crucial role in diverse fields of mathematics. For instance, quaternion algebras and cyclic algebras are quotients of skew polynomial rings \cite{Knus98,GS17}; the connection between matrix algebras and skew polynomial rings can be used to design fast algorithms for matrix multiplication \cite{QKX24}; it is just realized in recent years that skew polynomial rings over finite fields provide us new models in coding theory \cite{SK09,DF09,BU14}. Due to aforementionedi applications in complexity theory and coding theory, various operations of skew polynomials are extensively studied in the literature as well, including multiplication \cite{Giesbrecht98, 2017Fast, GHS20}, factorization \cite{Giesbrecht98,XJ17}, Gr\"{o}bner bases \cite{RV13} and interpolation  \cite{SA16,SA18}.

As an algebra, the most important and fundamental operation on skew polynomial rings is the multiplication. We recall that given a $\mathcal{k}$-algebra $\mathcal{A}$ of dimension $r$ and a $\mathcal{k}$-linear automorphism $\sigma$ of $\mathcal{A}$, the skew polynomial ring $\mathcal{A}[x,\sigma]$ consists of polynomials with coefficients in $\mathcal{A}$, whose multiplication is skewed by $\sigma$, i.e., $x a = \sigma (a) x, a\in \mathcal{A}$. We denote by $C_{\mathcal{k}} (\mu_d)$ the number of arithmetic operations over $\mathcal{k}$ required to compute the product of two degree $d$ skew polynomials in $\mathcal{A}[x,\sigma]$. The first fast algorithm of skew polynomial multiplication is proposed in \cite{Giesbrecht98}, which has complexity $C_{\mathcal{k}} (\mu_d) = O(dr^2 + d^2 r)$. Algorithms presented in \cite{SA16,SA18} improve the upper bound to $\widetilde{O}(d^{(\omega + 1)/2} r)$, where $\omega$ denotes the exponent of matrix multiplication. Based on the quasi-optimal bound \cite{BBv12} for the multiplication of linear differential operators, it is conjectured in \cite{2017Fast} that $C_{\mathcal{k}} (\mu_d)=\widetilde{O}(d\min(d,r)^{\omega-2}r)$ . An upper bound is also obtained in \cite{2017Fast}:
\begin{equation}\label{eqn:2017bound}
C_{\mathcal{k}} (\mu_d) =
\begin{cases}
\widetilde{O}(d r^{\omega-1}),\quad d\ge r\\
\widetilde{O}(d^{\omega-2} r^{2}),\quad d< r
\end{cases},
\end{equation}
which coincides with the conjectured upper bound when $d \ge r$. As far as we aware, it is the best upper bound in the literature when $d \ge r^{2/(5 -\omega)}$. However, if $d < r^{2/(5 -\omega)}$, the bound $\widetilde{O}(d^{(\omega + 1)/2} r)$ in \cite{SA16,SA18} is better. By exploiting the mod-$r$ sparsity $R\le r$ of the support of the product, \cite{GHS20} proposes a Las Vegas algorithm of complexity $\widetilde{O}(\max(d,r) rR^{\omega - 2})$, which outperforms existing algorithms if $d \ge  \min(r^{2/(5 - \omega)},r^{2/(\omega + 1)}R^{(2\omega- 4)/(\omega + 1)})$. Lastly, we remark that although the upper bound of skew polynomial multiplication has been studied extensively in the past two decades, the lower bound is still far from being understood. 
\subsection*{Contributions}
This paper is concerned with the computational complexity of skew polynomial multiplication. The primary goal is twofold: 
\begin{enumerate}
\item we establish a lower bound of $C_{\mathcal{k}} (\mu_d)$ in Theorem~\ref{thm:lower bound}, which coincides with the conjectured upper bound $\widetilde{O}(d\min(d,r)^{\omega-2}r)$ in \cite{2017Fast}, up to a log factor;
\item we present in Section~\ref{sec:algorithms} fast algorithms for low degree ($d \le r$) skew polynomial multiplication in several special cases, which cost $\widetilde{O}(d^{\omega-1}r)$ arithmetic operations. This indicates that our lower bound is quasi-optimal.
\end{enumerate}
In particular, our lower bound together with the algorithm presented in  \cite{2017Fast} implies that if $d \ge r$, then $C_{\mathcal{k}} (\mu_d)$ is completely determined (up to a log factor):
\[
C_{\mathcal{k}} (\mu_d) = \widetilde{O}(d r^{\omega - 1}) = \Omega (d r^{\omega - 1}).
\]
Additionally, in Proposition~\ref{prop:average}, we establish an upper bound of the average bilinear complexity of simultaneously multiplying several pairs of low degree skew polynomials.
\section{Preliminaries}
In this section, we record some notations, definitions and basic facts from complexity theory and algebra, which are necessary for the rest of this paper.
\subsection{Notations for complexity}
Given functions $f,g:\mathbb{N} \to \mathbb{N}$, we denote $f(n) = O(g(n))$ if there exists some constant $C > 0$ such that 
\[
f(n) \le C g(n)
\] 
for sufficiently large $n$. We denote $f(n) = \widetilde{O}(g(n))$ if there exists some constant $C,\tau > 0$ such that 
\[
f(n) \le C (\log(n))^\tau g(n)
\] 
for sufficiently large $n$. Moreover, we write $f(n) = \Omega(g(n))$ if there exists some constant $C > 0$ such that 
\[
f(n) \ge C g(n)
\] 
for sufficiently large $n$. 
\subsection{Bilinear complexity}
For convenience, we collect some basic facts about the bilinear complexity. The standard reference for this subsection is \cite{burgisser2013algebraic}.
\begin{definition}\label{def2}
Let $\mathcal{R}$ be a commutative ring and let $\mathbb{U},\mathbb{V},\mathbb{W}$ be finitely generated free $\mathcal{R}$-modules. The bilinear complexity (or rank) of a $\mathcal{R}$-bilinear map $f: \mathbb{U} \times \mathbb{V} \to \mathbb{W}$, denote by $\rank_{\mathcal{R}}(f)$, is the smallest positive integer $r$ to ensure the existence of $\alpha_j \in \Hom (\mathbb{U},\mathcal{R}),\beta_j \in \Hom (\mathbb{V},\mathcal{R})$ and $w_j \in \mathbb{W}$, $1 \le j \le r$, such that
\[
f (u,v) = \sum_{j=1}^r \alpha_j (u) \beta_j(v) w_j, \quad (u,v)\in \mathbb{U}\times \mathbb{V}.
\]
\end{definition}
We denote by $C_{\mathcal{R}}(f)$ the total number of arithmetic operations required to compute $f$ over $\mathcal{R}$. In the literature, $C_{\mathcal{R}}(f)$ is called the \emph{total complexity} of $f$. It is obvious that
\begin{equation}\label{eqn:rank<complexity}
\rank_{\mathcal{R}}(f) \le C_{\mathcal{R}}(f).
\end{equation}
Let $\mathcal{S}$ be a commutative ring containing $\mathcal{R}$ as a sub-ring. For each free $\mathcal{R}$-module $\mathbb{U}$, we denote $\mathbb{U}^{\mathcal{S}} \coloneqq \mathbb{U} \otimes_{\mathcal{R}} \mathcal{S}$. Similarly, if $f:\mathbb{U}\times \mathbb{V} \to \mathbb{W}$ is a $\mathcal{R}$-bilinear map, then we denote by $f^{\mathcal{S}}: \mathbb{U}^{\mathcal{S}} \times \mathbb{V}^{\mathcal{S}} \to \mathbb{W}^{\mathcal{S}}$ the $\mathcal{S}$-bilinear map obtained by extending $f$ naturally. By definition, we have 
\begin{equation}\label{eqn:rank extn}
\rank_{\mathcal{S}} (f^{\mathcal{S}}) \le \rank_{\mathcal{R}} (f).
\end{equation}
If there exist finitely generated free $\mathcal{R}$-modules $\mathbb{U}', \mathbb{V}', \mathbb{W}'$, $\mathcal{R}$-bilinear map $f': \mathbb{U}' \times  \mathbb{V}' \to \mathbb{W}'$ and $\mathcal{R}$-linear maps $\varphi_1:\mathbb{U} \to \mathbb{U}'$, $\varphi_2:\mathbb{V} \to \mathbb{V}'$ and $\varphi_3:\mathbb{W}' \to \mathbb{W}$ such that $f = \varphi_3 \circ f' \circ (\varphi_1 \times \varphi_2)$, then we say that \emph{$f$ is a restriction of $f'$}, denoted by $f \le f'$. Clearly, $f \le f'$ implies
 \begin{equation}\label{eqn:rank restriction}
\rank_{\mathcal{R}} (f) \le \rank_{\mathcal{R}} (f').
\end{equation}
%
\subsection{Exponent of matrix multiplication}
Let $\mathcal{R}$ be a commutative ring. We denote by $\langle m,n,p\rangle_{\mathcal{R}}$ the $\mathcal{R}$-bilinear map of multiplying an $m\times n$ matrix with an $n\times p$ matrix over $\mathcal{R}$. The \emph{exponent of matrix multiplication} over $\mathcal{R}$ is \[
\omega(\mathcal{R}) \coloneqq \inf \lbrace 
\tau \in \mathbb{R}: C_{\mathcal{R}} (\langle n,n,n\rangle_{\mathcal{R}}) = O(n^\tau) 
\rbrace.\]
The same proof of Corollary~15.18 in \cite{burgisser2013algebraic} leads to:
\begin{lemma}\label{lem:rank matrix extn}
For any commutative $\mathcal{k}$-algebra $\mathcal{R}$, $\omega(\mathcal{R}) = \omega(\mathcal{k})$.
\end{lemma}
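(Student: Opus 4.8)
The plan is to mimic the argument behind Corollary~15.18 in \cite{burgisser2013algebraic}, which expresses the exponent $\omega$ purely in terms of the bilinear complexity (rank and border rank) of the matrix multiplication tensors $\langle n,n,n\rangle$, and then to observe that these ranks are insensitive to passing from $\mathcal{k}$ to the larger algebra $\mathcal{R}$. First I would recall the standard equivalences: up to constants and a negligible loss in the exponent, $C_{\mathcal{R}}(\langle n,n,n\rangle_{\mathcal{R}}) = O(n^\tau)$ holds iff $\rank_{\mathcal{R}}(\langle n,n,n\rangle_{\mathcal{R}}) = O(n^\tau)$ (one direction is \eqref{eqn:rank<complexity}; the other is the routine fact that a bilinear map of rank $r$ can be evaluated in $O(r + \dim \mathbb{U} + \dim \mathbb{V} + \dim \mathbb{W})$ operations), and that by tensor-power/compression the exponent can equivalently be read off from $\liminf_n \log_n \rank_{\mathcal{R}}(\langle n,n,n\rangle_{\mathcal{R}})$.

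The two inequalities then go as follows. For $\omega(\mathcal{R}) \le \omega(\mathcal{k})$, apply \eqref{eqn:rank extn} with $\mathcal{S} = \mathcal{R}$ and $f = \langle n,n,n\rangle_{\mathcal{k}}$: since $\langle n,n,n\rangle_{\mathcal{k}}^{\mathcal{R}} = \langle n,n,n\rangle_{\mathcal{R}}$, we get $\rank_{\mathcal{R}}(\langle n,n,n\rangle_{\mathcal{R}}) \le \rank_{\mathcal{k}}(\langle n,n,n\rangle_{\mathcal{k}})$, hence $\omega(\mathcal{R}) \le \omega(\mathcal{k})$. For the reverse inequality $\omega(\mathcal{k}) \le \omega(\mathcal{R})$, the key point is that a bilinear algorithm over $\mathcal{R}$ does not directly give one over $\mathcal{k}$, since the scalars $\alpha_j, \beta_j, w_j$ live in $\mathcal{R}$, not $\mathcal{k}$. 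The remedy, exactly as in \cite{burgisser2013algebraic}, is to pick any $\mathcal{k}$-linear projection $\pi: \mathcal{R} \to \mathcal{k}$ with $\pi(1) = 1$ (take a basis of $\mathcal{R}$ as a $\mathcal{k}$-vector space containing $1$ and project onto the $1$-coordinate; if $\mathcal{R}$ is not finite-dimensional over $\mathcal{k}$, first restrict to the finite-dimensional subalgebra generated by the finitely many structure constants appearing in a fixed bilinear algorithm). Given an optimal bilinear algorithm $\langle n,n,n\rangle_{\mathcal{R}}(u,v) = \sum_{j=1}^r \alpha_j(u)\beta_j(v) w_j$, one checks that for matrices over $\mathcal{k}$ the entries of the product are $\mathcal{k}$-linear in the entries of $u$ and in those of $v$; applying $\pi$ entrywise to the identity and using that $\langle n,n,n\rangle$ is defined by the same structure constants over $\mathcal{k}$ and $\mathcal{R}$, one obtains a bilinear computation over $\mathcal{k}$ of the same length $r$. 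Hence $\rank_{\mathcal{k}}(\langle n,n,n\rangle_{\mathcal{k}}) \le \rank_{\mathcal{R}}(\langle n,n,n\rangle_{\mathcal{R}})$, giving $\omega(\mathcal{k}) \le \omega(\mathcal{R})$, and combining the two yields equality.

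The main obstacle, and the only place the argument needs care, is the reverse inequality: making precise that a bilinear algorithm over $\mathcal{R}$ descends to one over $\mathcal{k}$ of the same rank. Concretely one must verify that applying the $\mathcal{k}$-linear map $\pi$ to a decomposition $\langle n,n,n\rangle_{\mathcal{R}} = \sum_j \alpha_j \otimes \beta_j \otimes w_j$ in $\mathbb{U}^{\mathcal{R}*}\otimes \mathbb{V}^{\mathcal{R}*}\otimes \mathbb{W}^{\mathcal{R}}$ recovers the tensor $\langle n,n,n\rangle_{\mathcal{k}}$ in $\mathbb{U}^*\otimes \mathbb{V}^*\otimes \mathbb{W}$ — this uses precisely that the matrix-multiplication tensor has coordinates $0$ and $1$, which are fixed by $\pi$, together with the fact that rank is non-increasing under the $\mathcal{k}$-linear maps $\mathbb{U}^{\mathcal{R}}\to \mathbb{U}$, $\mathbb{V}^{\mathcal{R}}\to \mathbb{V}$, $\mathbb{W}\to \mathbb{W}^{\mathcal{R}}$ induced by $\pi$ (an instance of \eqref{eqn:rank restriction}). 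Once this descent is in place, the passage between total complexity and rank at the level of exponents is standard, and the lemma follows; since all of this is verbatim the proof of Corollary~15.18 in \cite{burgisser2013algebraic}, I would simply cite it and record the conclusion.
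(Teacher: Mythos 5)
Your first inequality ($\omega(\mathcal{R})\le\omega(\mathcal{k})$ via \eqref{eqn:rank extn}) and the reduction of total complexity to rank at the level of exponents are fine. The genuine gap is in the reverse direction. A $\mathcal{k}$-linear projection $\pi:\mathcal{R}\to\mathcal{k}$ with $\pi(1)=1$ is not multiplicative, and multiplicativity is exactly what the descent needs: writing $\alpha_j(u)=\sum_{s,t}a^j_{st}u_{st}$, $\beta_j(v)=\sum_{s,t}b^j_{st}v_{st}$, the statement that $\sum_j\alpha_j\otimes\beta_j\otimes w_j$ computes $\langle n,n,n\rangle_{\mathcal{R}}$ is the coefficientwise identity $\sum_j a^j_{st}\,b^j_{s't'}\,(w_j)_{pq}=\text{(structure constant)}$, which involves \emph{products of three elements of $\mathcal{R}$}; applying $\pi$ to it does not give $\sum_j\pi(a^j_{st})\pi(b^j_{s't'})\pi((w_j)_{pq})=\text{(structure constant)}$. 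The fact that the structure constants are $0$ and $1$ does not help, and \eqref{eqn:rank restriction} does not apply because $\alpha_j|_{\mathbb{U}},\beta_j|_{\mathbb{V}}$ take values in $\mathcal{R}$, not $\mathcal{k}$. What restriction actually yields is $\langle n,n,n\rangle_{\mathcal{k}}\le\langle r\rangle_{\mathcal{R}}$ viewed over $\mathcal{k}$, i.e.\ $\rank_{\mathcal{k}}(\langle n,n,n\rangle_{\mathcal{k}})\le r\cdot\rank_{\mathcal{k}}(\mu_{\mathcal{R}})$, not $\le r$. Indeed the blanket claim ``$\rank_{\mathcal{k}}(f)\le\rank_{\mathcal{R}}(f^{\mathcal{R}})$'' is false in general: for $\mathcal{k}=\RR$, $\mathcal{R}=\CC$ and $f$ the multiplication of $\CC$ as an $\RR$-algebra, $\rank_{\RR}(f)=3$ while $\rank_{\CC}(f^{\CC})=2$. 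For matrix multiplication, equality of ranks over different fields is not known; only the exponent is invariant, and that is all the lemma asserts. (A secondary inaccuracy: the $\mathcal{k}$-subalgebra of $\mathcal{R}$ generated by the finitely many coefficients of an algorithm need not be finite-dimensional, so your reduction to the finite-dimensional case also fails as stated.)

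The argument the paper points to (the proof of Corollary~15.18 in \cite{burgisser2013algebraic}) circumvents this differently, and a correct write-up would do the same: replace $\pi$ by a genuine $\mathcal{k}$-\emph{algebra} homomorphism. For a nonzero commutative $\mathcal{k}$-algebra $\mathcal{R}$, quotient by a maximal ideal to get a $\mathcal{k}$-algebra homomorphism $\mathcal{R}\to K$ onto a field extension $K/\mathcal{k}$; applying this homomorphism to a decomposition \emph{does} preserve the coefficientwise identities, so $\rank_{K}(\langle n,n,n\rangle_K)\le\rank_{\mathcal{R}}(\langle n,n,n\rangle_{\mathcal{R}})$ and hence $\omega(K)\le\omega(\mathcal{R})$. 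One is then reduced to the field-extension case $\omega(K)=\omega(\mathcal{k})$, whose proof is not a length-preserving descent either: the coefficients of a fixed algorithm are specialized into a finite extension $K'/\mathcal{k}$, arithmetic in $K'$ is simulated over $\mathcal{k}$ with a constant-factor overhead depending only on $[K':\mathcal{k}]$, and this constant is made harmless at the level of the exponent by passing to high tensor powers $\langle n^N,n^N,n^N\rangle$. Without this maximal-ideal/specialization/tensor-power mechanism, your proposal does not establish $\omega(\mathcal{k})\le\omega(\mathcal{R})$.
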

Because of Lemma~\ref{lem:rank matrix extn}, we simply abbreviate $\omega(\mathcal{R})$ by $\omega$. The proof of Theorem~15.11 in \cite{burgisser2013algebraic} can be extended to show:
\begin{theorem}\label{thm5}
If there exist positive integers $e_{i},h_{i},l_{i}$, $1\le i \le s$, such that $\rank_{\mathcal{R}} \left(\bigoplus\limits_{i=1}^{s}\langle e_{i},h_{i},l_{i}\rangle_{\mathcal{R}}\right)\le \tau$, then $\sum\limits_{i=1}^s (e_{i}h_{i}l_{i})^{\omega/3}\le\tau$.
 \end{theorem}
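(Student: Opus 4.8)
The plan is to carry the proof of the asymptotic sum inequality (Sch\"onhage's $\tau$-theorem; see the proof of Theorem~15.11 in \cite{burgisser2013algebraic}) over from a field to the commutative $\mathcal{k}$-algebra $\mathcal{R}$. The only non-formal ingredient is the identity $\omega(\mathcal{R}) = \omega$ (Lemma~\ref{lem:rank matrix extn}); otherwise I will only use that $\rank_{\mathcal{R}}$ is submultiplicative under tensor product, that $\langle a,b,c\rangle_{\mathcal{R}}\otimes\langle a',b',c'\rangle_{\mathcal{R}} = \langle aa',bb',cc'\rangle_{\mathcal{R}}$, that a direct summand is a restriction of the direct sum, and that $\rank_{\mathcal{R}}$ of a matrix-multiplication tensor is invariant under cyclic rotation of its index triple. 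Write $T := \bigoplus_{i=1}^{s}\langle e_{i},h_{i},l_{i}\rangle_{\mathcal{R}}$ and $p_{i} := e_{i}h_{i}l_{i}$, and let $q\cdot f$ denote the direct sum of $q$ copies of $f$.

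First I would fix $N\ge 1$ and record that $\rank_{\mathcal{R}}(T^{\otimes N})\le\tau^{N}$. Distributing $\otimes$ over $\oplus$ and collecting the $s^{N}$ resulting summands according to their type $\mathbf{a} = (a_{1},\dots,a_{s})$ with $a_{1}+\dots+a_{s} = N$ gives
\[
T^{\otimes N}\;\cong\;\bigoplus_{|\mathbf{a}| = N}\ q_{\mathbf{a}}\cdot\langle E_{\mathbf{a}},H_{\mathbf{a}},L_{\mathbf{a}}\rangle_{\mathcal{R}},\qquad q_{\mathbf{a}} := \binom{N}{a_{1},\dots,a_{s}},\ \ E_{\mathbf{a}} := \prod_{i}e_{i}^{a_{i}},\ H_{\mathbf{a}} := \prod_{i}h_{i}^{a_{i}},\ L_{\mathbf{a}} := \prod_{i}l_{i}^{a_{i}}.
\]
Each summand is a restriction of $T^{\otimes N}$, so $\rank_{\mathcal{R}}\bigl(q_{\mathbf{a}}\cdot\langle E_{\mathbf{a}},H_{\mathbf{a}},L_{\mathbf{a}}\rangle_{\mathcal{R}}\bigr)\le\tau^{N}$ for every type $\mathbf{a}$.

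The crux is to merge these $q_{\mathbf{a}}$ disjoint copies into a single matrix multiplication without much loss. Fix $\epsilon>0$. By Lemma~\ref{lem:rank matrix extn} and the definition of $\omega$ there is a constant $C_{\epsilon}$ with $\rank_{\mathcal{R}}\langle m,m,m\rangle_{\mathcal{R}}\le C_{\epsilon}\,m^{\,\omega+\epsilon}$ for all $m$, i.e.\ $\langle m,m,m\rangle_{\mathcal{R}}$ is a restriction of $C_{\epsilon}m^{\,\omega+\epsilon}$ copies of $\langle 1,1,1\rangle_{\mathcal{R}}$. Letting $m_{\mathbf{a}}$ be the largest integer with $C_{\epsilon}m_{\mathbf{a}}^{\,\omega+\epsilon}\le q_{\mathbf{a}}$ and tensoring this restriction with $\langle E_{\mathbf{a}},H_{\mathbf{a}},L_{\mathbf{a}}\rangle_{\mathcal{R}}$ produces $\langle m_{\mathbf{a}}E_{\mathbf{a}},\,m_{\mathbf{a}}H_{\mathbf{a}},\,m_{\mathbf{a}}L_{\mathbf{a}}\rangle_{\mathcal{R}}\le q_{\mathbf{a}}\cdot\langle E_{\mathbf{a}},H_{\mathbf{a}},L_{\mathbf{a}}\rangle_{\mathcal{R}}$; hence this tensor and its two cyclic rotations all have $\rank_{\mathcal{R}}\le\tau^{N}$. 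Multiplying the three rotations, using $\langle a,b,c\rangle\otimes\langle b,c,a\rangle\otimes\langle c,a,b\rangle = \langle abc,abc,abc\rangle$ and the cube bound $\rank_{\mathcal{R}}\langle n,n,n\rangle_{\mathcal{R}}\ge n^{\omega(\mathcal{R})} = n^{\omega}$ (a standard tensor-power bootstrap, again via Lemma~\ref{lem:rank matrix extn}), gives $\bigl(m_{\mathbf{a}}^{3}E_{\mathbf{a}}H_{\mathbf{a}}L_{\mathbf{a}}\bigr)^{\omega}\le\tau^{3N}$, that is $m_{\mathbf{a}}^{\omega}\bigl(E_{\mathbf{a}}H_{\mathbf{a}}L_{\mathbf{a}}\bigr)^{\omega/3}\le\tau^{N}$. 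Since one checks $m_{\mathbf{a}}^{\omega}\ge c_{\epsilon}\,q_{\mathbf{a}}^{\theta}$ with $\theta := \omega/(\omega+\epsilon)\in(0,1]$, and $\bigl(E_{\mathbf{a}}H_{\mathbf{a}}L_{\mathbf{a}}\bigr)^{\omega/3} = \prod_{i}p_{i}^{a_{i}\omega/3}$, we arrive at $q_{\mathbf{a}}^{\theta}\prod_{i}p_{i}^{a_{i}\omega/3}\le C'_{\epsilon}\tau^{N}$ for every $\mathbf{a}$. Now $q_{\mathbf{a}}\le s^{N}$ gives $q_{\mathbf{a}}\le s^{N(1-\theta)}q_{\mathbf{a}}^{\theta}$; summing over the at most $(N+1)^{s}$ types and invoking the multinomial theorem (with $x_{i} := p_{i}^{\omega/3}$) yields
\[
\Bigl(\sum_{i=1}^{s}p_{i}^{\omega/3}\Bigr)^{N}=\sum_{|\mathbf{a}| = N} q_{\mathbf{a}}\prod_{i}x_{i}^{a_{i}}\;\le\;(N+1)^{s}\,C'_{\epsilon}\,s^{N(1-\theta)}\,\tau^{N}.
\]
Taking $N$-th roots and letting $N\to\infty$ gives $\sum_{i}p_{i}^{\omega/3}\le s^{\,1-\theta}\tau$; finally letting $\epsilon\to 0^{+}$, so $\theta\to 1$, yields $\sum_{i=1}^{s}(e_{i}h_{i}l_{i})^{\omega/3}\le\tau$.

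The main obstacle, and the only step that needs care, is this merge. Using schoolbook multiplication, $\langle m,m,m\rangle\le m^{3}\langle 1,1,1\rangle$, one would only get $\langle q^{1/3}E,\,q^{1/3}H,\,q^{1/3}L\rangle\le q\cdot\langle E,H,L\rangle$, leaving a factor $q_{\mathbf{a}}^{\omega/3}$ rather than $q_{\mathbf{a}}$; the multinomial sum above would then only close to $\max_{i}(e_{i}h_{i}l_{i})^{\omega/3}\le\tau$, which is strictly weaker than the claim (e.g.\ for $\bigoplus_{i=1}^{s}\langle 1,1,1\rangle_{\mathcal{R}}$, where $\rank_{\mathcal{R}} = s = \sum_{i}(e_{i}h_{i}l_{i})^{\omega/3}$). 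Replacing schoolbook multiplication by a near-optimal matrix-multiplication algorithm — i.e.\ invoking the definition of $\omega$ — improves the exponent $\tfrac13$ to $1/(\omega+\epsilon)$, which is exactly what makes the sum close after $\epsilon\to 0$. Everything else is formal, and because Lemma~\ref{lem:rank matrix extn} gives $\omega(\mathcal{R}) = \omega$, the argument runs over $\mathcal{R}$ exactly as over a field.
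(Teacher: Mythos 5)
Your proposal is correct and is essentially the paper's proof: the paper simply invokes the argument of Theorem~15.11 in \cite{burgisser2013algebraic} (Sch\"onhage's asymptotic sum inequality), which is exactly your tensor-power/type decomposition, merging of multiplicities via a near-optimal algorithm, cyclic-rotation symmetrization, and multinomial limit, transported to $\mathcal{R}$ through Lemma~\ref{lem:rank matrix extn}. The only point to note is the degenerate case $q_{\mathbf{a}} < C_{\epsilon}$ (where $m_{\mathbf{a}} = 0$), which is handled by applying the rotation bound to a single copy of $\langle E_{\mathbf{a}},H_{\mathbf{a}},L_{\mathbf{a}}\rangle_{\mathcal{R}}$ and absorbing $q_{\mathbf{a}}$ into the constant.
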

\subsection{\'{E}tale algebra and Galois algebra}\label{subsec:etale algebra}
Let $\mathcal{k}$ be a field and let $\mathcal{A}$ be a finite \'{e}tale $\mathcal{k}$-algebra, i.e., $\mathcal{A}$ is a finite product of finite separable field extensions of $\mathcal{k}$. 
\begin{theorem}[primitive element theorem]\cite[Proposition~4.1]{UZ17}\label{thm:primitive element thm}
If $\mathcal{k}$ is an infinite field and $\mathcal{A}$ is a finite \'{e}tale $\mathcal{k}$-algebra, then there exists $a\in \mathcal{A}$ such that $\mathcal{A} = \mathcal{k}[a]$.
\end{theorem}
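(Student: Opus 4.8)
The plan is to combine the structure theory of étale algebras with the elementary fact that a finite-dimensional vector space over an infinite field is never a union of finitely many proper subspaces.

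First I would set up coordinates. Put $r = \dim_{\mathcal{k}} \mathcal{A}$ and fix an algebraic closure $\overline{\mathcal{k}}$. Since $\mathcal{A}$ is finite \'{e}tale over $\mathcal{k}$, it has exactly $r$ distinct $\mathcal{k}$-algebra homomorphisms $\phi_1,\dots,\phi_r\colon \mathcal{A}\to\overline{\mathcal{k}}$, and the map $\Phi\colon \mathcal{A}\to\overline{\mathcal{k}}^{\,r}$, $a\mapsto(\phi_1(a),\dots,\phi_r(a))$, is injective (it is the composition of the injection $\mathcal{A}\hookrightarrow \mathcal{A}\otimes_{\mathcal{k}}\overline{\mathcal{k}}$ with the isomorphism $\mathcal{A}\otimes_{\mathcal{k}}\overline{\mathcal{k}}\cong\overline{\mathcal{k}}^{\,r}$ provided by \'{e}taleness). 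Under $\Phi$, multiplication by an element $a\in\mathcal{A}$ becomes the diagonal operator $\operatorname{diag}(\phi_1(a),\dots,\phi_r(a))$, so the minimal polynomial of $a$ over $\mathcal{k}$ equals $\prod_{v}(x-v)$, the product running over the \emph{distinct} values among $\phi_1(a),\dots,\phi_r(a)$; minimal polynomials are insensitive to the field extension $\mathcal{k}\subseteq\overline{\mathcal{k}}$, which justifies this. Consequently $\dim_{\mathcal{k}}\mathcal{k}[a] = \#\{\phi_1(a),\dots,\phi_r(a)\}\le r$, with equality exactly when the $\phi_i(a)$ are pairwise distinct; and in that case $\mathcal{k}[a]$ is an $r$-dimensional $\mathcal{k}$-subspace of the $r$-dimensional space $\mathcal{A}$, forcing $\mathcal{k}[a]=\mathcal{A}$.

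It then remains to produce $a\in\mathcal{A}$ with $\phi_i(a)\ne\phi_j(a)$ for all $i\ne j$. For each pair $i<j$ the map $\phi_i-\phi_j\colon\mathcal{A}\to\overline{\mathcal{k}}$ is $\mathcal{k}$-linear and nonzero (because $\phi_i\ne\phi_j$), hence its kernel $H_{ij}$ is a proper $\mathcal{k}$-subspace of $\mathcal{A}$. Since $\mathcal{k}$ is infinite and $\mathcal{A}$ is a finite-dimensional $\mathcal{k}$-vector space, $\mathcal{A}$ cannot be the union of the finitely many proper subspaces $\{H_{ij}\}_{i<j}$, so any $a\in\mathcal{A}\setminus\bigcup_{i<j}H_{ij}$ satisfies $\mathcal{A}=\mathcal{k}[a]$. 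The only genuine input here is the covering fact for vector spaces, which is standard; I do not anticipate a serious obstacle, the main point being simply to isolate the right reformulation (distinctness of the $\overline{\mathcal{k}}$-coordinates of $a$) so that the infinitude of $\mathcal{k}$ can be invoked.

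For completeness I would also record the more classical route. Write $\mathcal{A}\cong L_1\times\cdots\times L_m$ with each $L_i/\mathcal{k}$ finite separable, choose a primitive element $\alpha_i$ of $L_i$ by the classical primitive element theorem, and then replace $\alpha_i$ by $\alpha_i+c_i$ for suitable $c_i\in\mathcal{k}$ — possible since $\mathcal{k}$ is infinite — so that the minimal polynomials $q_1,\dots,q_m$ of the shifted elements have pairwise disjoint root sets in $\overline{\mathcal{k}}$, hence are pairwise coprime. By the Chinese Remainder Theorem, $\mathcal{k}[x]/(q_1\cdots q_m)\cong\prod_{i=1}^m\mathcal{k}[x]/(q_i)\cong\prod_{i=1}^m L_i=\mathcal{A}$ via $x\mapsto a=(\alpha_1+c_1,\dots,\alpha_m+c_m)$, so $\mathcal{A}=\mathcal{k}[a]$. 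In this variant the infinitude of $\mathcal{k}$ is used precisely in the coprimality step, which is where the care is needed.
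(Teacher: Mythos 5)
The paper itself gives no proof of this statement: it is imported verbatim from \cite{UZ17} (Proposition~4.1) and used as a black box, so there is no internal argument to compare yours against. Your proposal is a correct, self-contained proof, and in fact you give two valid routes. The first is clean and complete: \'{e}taleness gives exactly $r=\dim_{\mathcal{k}}\mathcal{A}$ distinct $\mathcal{k}$-algebra homomorphisms $\phi_1,\dots,\phi_r\colon\mathcal{A}\to\overline{\mathcal{k}}$, and your identification $\dim_{\mathcal{k}}\mathcal{k}[a]=\#\{\phi_1(a),\dots,\phi_r(a)\}$ is justified exactly as you say, via the minimal polynomial of the multiplication-by-$a$ operator being invariant under base change to $\overline{\mathcal{k}}$, where it becomes a diagonal operator; then the kernels $H_{ij}=\ker(\phi_i-\phi_j)$ are proper $\mathcal{k}$-subspaces precisely because the $\phi_i$ are pairwise distinct, there are only $\binom{r}{2}$ of them, and the standard fact that a finite-dimensional vector space over an infinite field is not a finite union of proper subspaces (this is the sole place the infinitude of $\mathcal{k}$ enters) produces the required $a$. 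The second, classical route also works, and you correctly flag its only delicate step: choosing the shifts $c_i$ so that the translated conjugate sets are disjoint amounts to avoiding, for each pair $i\neq j$, finitely many elements of $\overline{\mathcal{k}}$ as values of $c_i-c_j$, which is possible since $c_i-c_j$ ranges over the infinite field $\mathcal{k}$; separability is what makes the resulting $q_i$ pairwise coprime and the CRT argument close. One remark worth keeping in mind: separability is genuinely needed (e.g.\ for $\mathcal{k}=\FF_p(s,t)$ the inseparable extension $\mathcal{k}(s^{1/p},t^{1/p})$ of degree $p^2$ is not monogenic), so your use of the \'{e}tale hypothesis in counting the $r$ homomorphisms, respectively in invoking the classical primitive element theorem, is not removable.
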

Assume further that $\sigma$ is an automorphism of $\mathcal{A}$ such that $\mathcal{A}^{\sigma} = \mathcal{k}$ and the cyclic group $\langle \sigma \rangle$ generated by $\sigma$ has order $r = \dim_{\mathcal{k}} \mathcal{A} > 1$. We say that $\mathcal{A}$ is a \emph{$\langle \sigma \rangle$-Galois algebra} \cite[Section~18.~B]{Knus98}. Examples of $\langle \sigma \rangle$-Galois algebras include:
\begin{itemize}
\item \textbf{totally split $\mathcal{k}$-algebra}: $\mathcal{A} = \mathcal{k}^r$ and $\sigma$ is defined by the cyclic left shift $(a_1,\dots, a_{r-1}, a_r) \mapsto (a_2,\dots, a_r, a_1)$. 
\item \textbf{Kummer extension}: $\mathcal{A} = \mathcal{k}(a)$ and $\sigma$ is defined by $a \mapsto \zeta a$, where $\zeta \in \mathcal{k}$ is a primitive $r$-th root of unity and $a^r \in \mathcal{k}$.
\item \textbf{Artin extension}: $\mathcal{A} = \mathcal{k}(a)$ and $\sigma$ is defined by $a \mapsto a + 1$, where $\operatorname{char}(\mathcal{k}) = r$ and $a^r - a \in \mathcal{k}$.
\end{itemize}
\subsection{Skew polynomial ring}
Let $\mathcal{A}$ be a $\langle \sigma \rangle$-Galois algebra. The \emph{skew polynomial ring} $\mathcal{A}[x,\sigma]$ is the ring whose underlying group is $\mathcal{A}[x]$ and the multiplication is defined by $x x^{d} = x^{d} x =  x^{d+1}, x a=\sigma(a) x$ for $d\in \mathbb{N}, a\in \mathcal{A}$. Since $r > 1$, $\mathcal{A}[x,\sigma]$ is a non-commutative graded $\mathcal{k}$-algebra. We denote by $\mathcal{A}[x,\sigma]_d$ (resp. $\mathcal{A}[x,\sigma]^d$) the subspace consisting of polynomials of degree $d$ (resp. at most $d$). For ease of reference, we record two basic properties of $\mathcal{A}[x,\sigma]$ below.
 \begin{lemma}\cite[Lemma~1.4]{2017Fast} \label{lem6}
     The map $\varphi: \mathcal{A}[x,\sigma]\to End_{\mathcal{k}}(\mathcal{A})$ sending $\sum_{i} a_{i} x^{i}$ to $\sum_{i}a_{i}\sigma^{i}$ is a surjective homomorphism of $\mathcal{k}$-algebras whose kernel is $(x^r - 1)$.
 \end{lemma}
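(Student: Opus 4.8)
The plan is to verify in turn that $\varphi$ is a homomorphism of $\mathcal{k}$-algebras, that $(x^r-1)\subseteq\ker\varphi$, and finally that equality holds and $\varphi$ is surjective. The first point is routine: $\varphi$ is $\mathcal{k}$-linear by construction, so it is enough to check $\varphi(fg)=\varphi(f)\varphi(g)$ on the generating set $\mathcal{A}\cup\{x\}$. Here $\varphi(a)$ is the operator of left multiplication by $a$ and $\varphi(x)=\sigma$; since $\sigma$ is a $\mathcal{k}$-algebra automorphism we have $\sigma\circ(\text{mult. by }a)=(\text{mult. by }\sigma(a))\circ\sigma$, which is precisely the image under $\varphi$ of the defining relation $xa=\sigma(a)x$. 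Extending multiplicatively then gives the homomorphism property.

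For the second point, $\sigma^r=\id$ yields $x^r a=\sigma^r(a)x^r=ax^r$ for every $a\in\mathcal{A}$, so $x^r$ is central and $(x^r-1)$ is a two-sided ideal; and $\varphi(x^r-1)=\sigma^r-\id=0$, so $(x^r-1)\subseteq\ker\varphi$. Since $x^r$ is central, $\mathcal{A}[x,\sigma]/(x^r-1)$ is a free $\mathcal{k}$-module with basis $\{a_s x^i\}$, where $\{a_s\}$ is a $\mathcal{k}$-basis of $\mathcal{A}$ and $0\le i\le r-1$; hence $\dim_{\mathcal{k}}\mathcal{A}[x,\sigma]/(x^r-1)=r^2=\dim_{\mathcal{k}}\operatorname{End}_{\mathcal{k}}(\mathcal{A})$. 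Thus it suffices to show that the induced map $\bar\varphi\colon\mathcal{A}[x,\sigma]/(x^r-1)\to\operatorname{End}_{\mathcal{k}}(\mathcal{A})$ is injective; surjectivity of $\varphi$ and the equality $\ker\varphi=(x^r-1)$ then follow from the dimension count.

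The heart of the argument is this injectivity, i.e.\ that a relation $\sum_{i=0}^{r-1}a_i\sigma^i=0$ in $\operatorname{End}_{\mathcal{k}}(\mathcal{A})$ forces $a_0=\dots=a_{r-1}=0$; this is where the Galois-algebra hypotheses ($\mathcal{A}$ \'{e}tale, $\mathcal{A}^\sigma=\mathcal{k}$, $|\langle\sigma\rangle|=\dim_{\mathcal{k}}\mathcal{A}$) enter, and it is essentially the defining property of a Galois algebra (see \cite[\S18]{Knus98}). Concretely, I would decompose the \'{e}tale algebra as $\mathcal{A}=\prod_{j=0}^{m-1}L_j$ with each $L_j$ a finite separable field extension of $\mathcal{k}$; the condition $\mathcal{A}^\sigma=\mathcal{k}$ forces $\langle\sigma\rangle$ to act transitively on the primitive idempotents (otherwise a partial sum of idempotents over one orbit would be a nontrivial $\sigma$-invariant idempotent), so $L_0/\mathcal{k}$ is a cyclic Galois extension of degree $r/m$ with group generated by the restriction of the stabilizer $\langle\sigma^m\rangle$. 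Evaluating the relation $\sum a_i\sigma^i=0$ on elements supported in a single factor $L_j$ and reading off one coordinate at a time then reduces the claim, component by component, to Dedekind's theorem on the $L_0$-linear independence of the distinct $\mathcal{k}$-automorphisms of $L_0$. (Alternatively, one can base-change to a separable closure of $\mathcal{k}$, over which $\mathcal{A}$ becomes the split $\langle\sigma\rangle$-Galois algebra with $\sigma$ acting by translation on $\langle\sigma\rangle$, so that $\bar\varphi$ becomes a classical isomorphism onto a full matrix algebra; injectivity of a $\mathcal{k}$-linear map between finite-dimensional spaces may be tested after this field extension.) The main obstacle is exactly this last step: coaxing the independence statement out of the abstract Galois-algebra axioms; everything else is bookkeeping and a dimension count.
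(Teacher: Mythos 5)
Your proof is correct, but note that the paper itself offers no argument for this statement: it is imported wholesale from \cite[Lemma~1.4]{2017Fast}, where it is stated in the setting of a cyclic field extension, while here it is invoked for general $\langle\sigma\rangle$-Galois algebras (including the totally split case $\mathcal{A}=\mathcal{k}^r$). So your write-up supplies exactly the argument the paper leaves implicit, and it correctly handles the extra generality: the homomorphism property and the inclusion $(x^r-1)\subseteq\ker\varphi$ are routine as you say; the dimension count $\dim_{\mathcal{k}}\mathcal{A}[x,\sigma]/(x^r-1)=r^2=\dim_{\mathcal{k}}\operatorname{End}_{\mathcal{k}}(\mathcal{A})$ reduces everything to injectivity of the induced map (for the freeness claim the honest justification is left division by the monic central element $x^r-1$, not centrality alone, but this is standard bookkeeping); and the heart of the matter is, as you identify, the $\mathcal{A}$-linear independence of $\id,\sigma,\dots,\sigma^{r-1}$ in $\operatorname{End}_{\mathcal{k}}(\mathcal{A})$. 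Your orbit analysis is sound: $\mathcal{A}^{\sigma}=\mathcal{k}$ forces transitivity on the primitive idempotents (a proper orbit sum would be a $\sigma$-invariant idempotent $\neq 0,1$), an averaging argument gives $L_0^{\langle\sigma^m\rangle}=\mathcal{k}$ so that $L_0/\mathcal{k}$ is cyclic of degree $r/m$ by Artin's lemma, and projecting the relation $\sum_i a_i\sigma^i=0$ onto one factor at a time, the surviving maps are $r/m$ distinct isomorphisms $L_j\to L_l$, killed coefficientwise by Dedekind's independence of characters. The alternative route you mention (base change to a separable closure, where $\mathcal{A}$ splits and $\sigma$ becomes the cyclic shift, then check injectivity there) is equally valid and slightly slicker, since injectivity of a $\mathcal{k}$-linear map may be tested after a field extension; either version is a complete proof of the lemma as used in this paper.
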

\begin{lemma}\cite[Proposition~2.7]{2017Fast}    \label{lem7}
    Let $p_{1}, \dots,p_{m}\in \mathcal{k}[x]$ be pairwise coprime and let $p= \prod_{j=1}^m p_j$. The natural map:
    \begin{equation*}
        \mathcal{A}[x,\sigma]/p(x^{r})\to \mathcal{A}[x,\sigma]/p_{1}(x^{r})\times\cdots \mathcal{A}[x,\sigma]/p_{m}(x^{r})
    \end{equation*}
    is a $\mathcal{k}$-algebra isomorphism.
\end{lemma}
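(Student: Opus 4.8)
The plan is to reduce the claim to the classical Chinese Remainder Theorem over the commutative ring $\mathcal{k}[x^r]$, the key observation being that $x^r$ is central in $\mathcal{A}[x,\sigma]$. Indeed, for $a\in\mathcal{A}$ one has $x^r a = \sigma^r(a)\,x^r = a\,x^r$, since $\langle\sigma\rangle$ has order $r$ and hence $\sigma^r = \mathrm{id}_{\mathcal{A}}$; as $x^r$ also commutes with $x$, it is central. Consequently $\mathcal{R}\coloneqq\mathcal{k}[x^r]$ is a commutative subring of the center of $\mathcal{A}[x,\sigma]$, every element $q(x^r)$ with $q\in\mathcal{k}[x]$ is central, the left ideal $q(x^r)\mathcal{A}[x,\sigma]$ it generates is two-sided, and $\mathcal{A}[x,\sigma]/q(x^r)\mathcal{A}[x,\sigma]$ is a $\mathcal{k}$-algebra on which the quotient map is a $\mathcal{k}$-algebra homomorphism. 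In particular the map in the statement is a well-defined $\mathcal{k}$-algebra homomorphism, so only bijectivity remains. Writing $t=x^r$, the substitution $x\mapsto x^r$ identifies $\mathcal{R}$ with a polynomial ring $\mathcal{k}[t]$, carries $p(x^r)$ to $\prod_{j}p_j(t)$, and carries the B\'ezout identities witnessing $\gcd(p_i,p_j)=1$ to identities $u_{ij}(t)p_i(t)+v_{ij}(t)p_j(t)=1$ in $\mathcal{R}$; hence the ideals $p_j(x^r)\mathcal{A}[x,\sigma]$ are pairwise comaximal.

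From here I see two routes. The first invokes the noncommutative Chinese Remainder Theorem: the $p_j(x^r)\mathcal{A}[x,\sigma]$ are pairwise comaximal two-sided ideals, so their product equals their intersection, and since the generators are central this product is $\bigl(p_1(x^r)\cdots p_m(x^r)\bigr)\mathcal{A}[x,\sigma]=p(x^r)\mathcal{A}[x,\sigma]$; the theorem then yields the stated isomorphism, which is a $\mathcal{k}$-algebra isomorphism because all the ideals involved are $\mathcal{k}$-subspaces. The second, arguably cleaner, route is base change. Choosing a $\mathcal{k}$-basis $\{\omega_i\}$ of $\mathcal{A}$, one sees that $\mathcal{A}[x,\sigma]$ is a free $\mathcal{R}$-module with basis $\{\omega_i x^j: 1\le i\le r,\ 0\le j\le r-1\}$, so that $\mathcal{A}[x,\sigma]/q(x^r)\mathcal{A}[x,\sigma]\cong\mathcal{A}[x,\sigma]\otimes_{\mathcal{R}}\mathcal{R}/(q(t))$ for every $q$; applying the functor $\mathcal{A}[x,\sigma]\otimes_{\mathcal{R}}(-)$ to the commutative CRT isomorphism $\mathcal{R}/(p(t))\cong\prod_{j}\mathcal{R}/(p_j(t))$ and using that tensor product commutes with finite direct sums produces exactly the map of the statement, now exhibited as an isomorphism.

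I do not anticipate a real obstacle, since the only non-formal ingredient is the centrality of $x^r$; the sole point that deserves care is the well-definedness of the ``natural map'', i.e. that each $p_j(x^r)$ generates a two-sided ideal, which is precisely what centrality supplies. Should one prefer to avoid citing any CRT blackbox, bijectivity can be checked directly: surjectivity follows from the B\'ezout identities above (lifting the associated orthogonal idempotents of $\prod_j\mathcal{R}/(p_j(t))$), and injectivity from $\bigcap_j p_j(x^r)\mathcal{A}[x,\sigma]=p(x^r)\mathcal{A}[x,\sigma]$, which again follows from comaximality together with the centrality of the generators.
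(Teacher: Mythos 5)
Your proof is correct. The paper itself gives no argument for this lemma --- it simply cites \cite[Proposition~2.7]{2017Fast} --- so there is nothing to compare line by line; but your reduction is exactly the standard one underlying that reference: since $\langle\sigma\rangle$ has order $r$, $\sigma^r=\id_{\mathcal{A}}$ and $x^r$ is central, so $\mathcal{k}[x^r]\cong\mathcal{k}[t]$ sits in the center, each $p_j(x^r)$ generates a two-sided ideal, B\'ezout identities transport along $x\mapsto x^r$ to give pairwise comaximality, and either the noncommutative CRT or base change of the commutative CRT along the free $\mathcal{k}[x^r]$-module $\mathcal{A}[x,\sigma]$ (basis $\{\omega_i x^j\}$, $0\le j\le r-1$) yields the isomorphism. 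Both of your routes are sound, and you correctly isolate the only nontrivial point, namely that centrality of $x^r$ makes the ideals two-sided and the natural map well defined.
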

Let $\mathcal{k}^{r\times r}$ be the algebra of $r\times r$ matrices over $\mathcal{k}$. The following is a direct consequence of \cite[Proposition~30.6]{Knus98}.
 \begin{lemma}\label{lem:modular ring}
If $\mathcal{k}$ is algebraically closed, then $\mathcal{A}[x,\sigma]/(x^r - c) \simeq \mathcal{k}^{r\times r}$ for any nonzero $c \in \mathcal{k}$.
 \end{lemma}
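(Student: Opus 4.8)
The plan is to reduce the statement to the special case $c = 1$, where it is immediate from Lemma~\ref{lem6}, and then to transport the resulting isomorphism by a rescaling of the variable $x$. First observe that $x^r$ is central in $\mathcal{A}[x,\sigma]$: for $a\in\mathcal{A}$ we have $x^r a = \sigma^r(a)\,x^r = a\,x^r$ since $\langle\sigma\rangle$ has order $r$, and $x^r$ trivially commutes with $x$. Hence $(x^r - c)$ is a two-sided ideal and $\mathcal{A}[x,\sigma]/(x^r-c)$ is a genuine $\mathcal{k}$-algebra; reducing higher powers of $x$ via $x^r = c$ and fixing a $\mathcal{k}$-basis $a_1,\dots,a_r$ of $\mathcal{A}$, the residues of $a_i x^j$ for $0\le j\le r-1$ form a $\mathcal{k}$-basis, so the quotient has $\mathcal{k}$-dimension $r^2$.

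Next I would use algebraic closedness of $\mathcal{k}$ only to rescale $x$. Since $c\neq 0$, pick $\gamma\in\mathcal{k}^\times$ with $\gamma^r = c^{-1}$. Because $\gamma$ is central, the assignment $a\mapsto a$ for $a\in\mathcal{A}$ together with $x\mapsto \gamma x$ preserves the defining relation $xa=\sigma(a)x$, namely $(\gamma x)a = \gamma\sigma(a)x = \sigma(a)(\gamma x)$; by the universal property of $\mathcal{A}[x,\sigma]$ it therefore extends to a $\mathcal{k}$-algebra endomorphism $\psi$, which is invertible (inverse $x\mapsto\gamma^{-1}x$) and hence an automorphism. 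One computes $\psi(x^r - 1) = \gamma^r x^r - 1$, and since $\gamma^r$ is a unit the ideal it generates equals $(x^r - \gamma^{-r}) = (x^r - c)$. Thus $\psi$ carries $(x^r - 1)$ onto $(x^r - c)$ and descends to a $\mathcal{k}$-algebra isomorphism
\[
\mathcal{A}[x,\sigma]/(x^r - 1) \;\simeq\; \mathcal{A}[x,\sigma]/(x^r - c).
\]

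To finish, apply Lemma~\ref{lem6}: the surjection $\varphi$ has kernel exactly $(x^r-1)$, so $\mathcal{A}[x,\sigma]/(x^r-1)\simeq \mathrm{End}_{\mathcal{k}}(\mathcal{A})\simeq \mathcal{k}^{r\times r}$, the last step by choosing a basis of the $r$-dimensional space $\mathcal{A}$; composing with the isomorphism above gives the claim. Alternatively, as the citation suggests, one recognizes $\mathcal{A}[x,\sigma]/(x^r-c)$ as the crossed-product, i.e.\ cyclic, algebra $(\mathcal{A},\sigma,c)$ of the $\langle\sigma\rangle$-Galois algebra $\mathcal{A}$, which \cite[Proposition~30.6]{Knus98} identifies as a central simple $\mathcal{k}$-algebra of degree $r$; over an algebraically closed $\mathcal{k}$ the only such algebra, being of dimension $r^2$, is $\mathcal{k}^{r\times r}$. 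I expect no genuine obstacle here: the only points requiring attention are the routine verifications that $\psi$ is a well-defined automorphism tracking the relevant ideal correctly, or — on the alternative route — reconciling the sign and indexing conventions of the crossed-product construction in \cite{Knus98} with the skew-polynomial quotient.
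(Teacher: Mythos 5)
Your argument is correct, but it takes a different route from the paper: the paper gives no written proof at all beyond invoking \cite[Proposition~30.6]{Knus98}, i.e.\ it treats $\mathcal{A}[x,\sigma]/(x^r-c)$ as the cyclic (crossed-product) algebra attached to the $\langle\sigma\rangle$-Galois algebra $\mathcal{A}$, which is Azumaya/central simple of degree $r$ and hence a matrix algebra over an algebraically closed field --- exactly your ``alternative'' second route. Your primary route is more elementary and stays inside the paper's own toolkit: the rescaling $x\mapsto\gamma x$ with $\gamma^r=c^{-1}$ is a legitimate automorphism (it respects $xa=\sigma(a)x$ because $\gamma\in\mathcal{k}=\mathcal{A}^{\sigma}$ is central), it carries the two-sided ideal $(x^r-1)$ onto $(\gamma^r x^r-1)=(x^r-c)$, and then Lemma~\ref{lem6} plus $\mathrm{End}_{\mathcal{k}}(\mathcal{A})\simeq\mathcal{k}^{r\times r}$ finishes the $c=1$ case; the centrality of $x^r$ (from $\sigma^r=\mathrm{id}$) and the dimension count are fine, though the dimension count is not even needed on this route. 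What your approach buys is self-containedness and the sharper observation that algebraic closedness is used only to extract an $r$-th root of $c$ (so the conclusion holds whenever $c$ has an $r$-th root in $\mathcal{k}$); what the paper's citation buys is brevity and the structural identification with cyclic algebras, at the cost of importing the crossed-product formalism and its conventions from \cite{Knus98}.
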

\section{Lower bound}\label{sec:lower bound}
Since $\mathcal{A}[x,\sigma]$ is a $\mathcal{k}$-algebra, the multiplication map $\mu$ on $\mathcal{A}[x,\sigma]$ is $\mathcal{k}$-bilinear. For each $d\in \mathbb{N}$, we denote by $\mu_d$ the restriction of $\mu$ on polynomials of degree at most $d$. According to \eqref{eqn:rank<complexity}, we have $\rank_{\mathcal{k}} (\mu_d) \le C_{\mathcal{k}} (\mu_d)$. The goal of this section is to prove that $\rank_{\mathcal{k}} (\mu_d) =\Omega(\min(d,r)^{\omega-2}dr)$, which also provides a lower bound for $C_{\mathcal{k}} (\mu_d)$.
\begin{lemma} \label{thm9}
    If $d \ge r$ and $\mathcal{k}$ is algebraically closed, then $\rank_{\mathcal{k}} (\mu_d) \ge  d r^{\omega-1}$.
\end{lemma}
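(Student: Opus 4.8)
The plan is to reduce a large direct sum of square matrix multiplications to $\mu_d$ (in the sense of bilinear restriction $\le$), and then invoke Theorem~\ref{thm5}. First I would use Lemma~\ref{lem6} together with the grading: the product of two skew polynomials of degree $\le d$ can be read off inside the quotient $\mathcal{A}[x,\sigma]/(x^{r(2d+1)} - c)$ for a suitable nonzero $c\in\mathcal{k}$ (or more simply inside $\mathcal{A}[x,\sigma]/(x^{rN}-c)$ for $N$ slightly larger than $2d$), because reduction modulo such a polynomial loses no information about the degree-$\le 2d$ product. Concretely, $\mu_d$ is a restriction of the multiplication bilinear map of the algebra $\mathcal{B}_N := \mathcal{A}[x,\sigma]/(x^{rN} - c)$: the maps $\varphi_1,\varphi_2$ are the (linear) inclusions of $\mathcal{A}[x,\sigma]^d$ into $\mathcal{B}_N$, and $\varphi_3$ is the linear projection recovering the coefficients of degree $\le 2d$. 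Hence $\rank_{\mathcal{k}}(\mu_d) \ge \rank_{\mathcal{k}}(\mu_{\mathcal{B}_N})$ is false in that direction — I must be careful: restriction gives $\rank(\mu_d)\le\rank(\mu_{\mathcal{B}_N})$, the wrong way. So instead I would go the other direction: exhibit the multiplication of $\mathcal{B}_N$ (or a convenient piece of it) as a \emph{restriction of} $\mu_{d'}$ for $d'$ comparable to $d$, using that any product in $\mathcal{B}_N$ lifts to a product of two polynomials of degree $< rN \le$ some constant times $d$, followed by reduction mod $x^{rN}-c$; this reduction and the lift are both $\mathcal{k}$-linear, so $\mu_{\mathcal{B}_N}\le \mu_{D}$ with $D = rN-1 = O(d)$, giving $\rank_{\mathcal{k}}(\mu_{D}) \ge \rank_{\mathcal{k}}(\mu_{\mathcal{B}_N})$.

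Next, since $\mathcal{k}$ is algebraically closed, I factor $x^{rN}-c = \prod_{j=1}^{N} p_j(x^r)$ into pairwise coprime linear-in-$x^r$ factors $p_j(x^r) = x^r - c_j$ with the $c_j$ distinct (the $N$ distinct $r$-th powers among the $rN$-th roots of $c$; more carefully, pick $c$ so that $x^{rN}-c$ has $N$ distinct roots of the form $c_j$, each contributing $x^r-c_j$). By Lemma~\ref{lem7}, $\mathcal{B}_N \simeq \prod_{j=1}^N \mathcal{A}[x,\sigma]/(x^r - c_j)$ as $\mathcal{k}$-algebras, and by Lemma~\ref{lem:modular ring} each factor is isomorphic to $\mathcal{k}^{r\times r}$ since $c_j\neq 0$ and $\mathcal{k}$ is algebraically closed. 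Therefore the multiplication bilinear map of $\mathcal{B}_N$ is isomorphic to $\bigoplus_{j=1}^N \langle r,r,r\rangle_{\mathcal{k}}$, and $\rank_{\mathcal{k}}(\mu_D) \ge \rank_{\mathcal{k}}\!\left(\bigoplus_{j=1}^N \langle r,r,r\rangle_{\mathcal{k}}\right)$.

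Now Theorem~\ref{thm5} applies with $s = N$ and $(e_i,h_i,l_i) = (r,r,r)$: it yields $\sum_{j=1}^N (r\cdot r\cdot r)^{\omega/3} = N r^{\omega} \le \rank_{\mathcal{k}}(\mu_D)$. Choosing $N = \lceil (2D+1)/r\rceil$ — equivalently working backwards, given the target degree $d$ set $D = d$ and pick the smallest $N$ with $rN - 1 \ge 2d$, so $N \approx (2d)/r$ — gives $\rank_{\mathcal{k}}(\mu_d) \ge N r^{\omega} \ge (2d/r)\, r^{\omega} = 2 d r^{\omega-1} \ge d r^{\omega-1}$, using $d \ge r$ to absorb ceilings and the additive constants. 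I would double-check the bookkeeping so that the constant comes out at least $1$; a slightly cleaner route is to only use $N = \lfloor d/r \rfloor$ distinct factors and note $rN \le d$ already makes products of two degree-$\le d$ polynomials contain a faithful copy of $\mathcal{B}_N$-multiplication after reduction, which still gives $\Omega(dr^{\omega-1})$ and with care the clean constant $1$.

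The main obstacle is getting the \emph{direction} of the restriction right: one needs multiplication in the matrix-algebra quotient to be a restriction of $\mu_d$ (not the reverse), which forces a careful argument that products of low-degree skew polynomials, after a linear change-of-representative and a linear reduction modulo $x^{rN}-c$, realize the full multiplication of $\mathcal{B}_N$ — in particular that the lift $\mathcal{B}_N \to \mathcal{A}[x,\sigma]^{D}$ of a ring element to a polynomial representative can be chosen $\mathcal{k}$-linearly, and that $\varphi_1,\varphi_2,\varphi_3$ are honest $\mathcal{k}$-linear maps with $\mu_{\mathcal{B}_N} = \varphi_3\circ\mu_d\circ(\varphi_1\times\varphi_2)$. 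The rest (factoring $x^{rN}-c$, applying Lemmas~\ref{lem7}, \ref{lem:modular ring}, and Theorem~\ref{thm5}) is routine.
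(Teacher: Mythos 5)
Your overall strategy is the same as the paper's: realize the multiplication of the quotient algebra $\mathcal{A}[x,\sigma]/(x^{rN}-c)$ as a restriction of $\mu_d$ (lift to low-degree representatives, multiply, reduce), split that quotient by Lemma~\ref{lem7} and Lemma~\ref{lem:modular ring} into $N$ copies of $\mathcal{k}^{r\times r}$, and apply Theorem~\ref{thm5}; the paper simply takes $c=1$, assumes WLOG $r\mid d$ and $\operatorname{char}(\mathcal{k})\nmid d/r$, and factors $x^d-1=\prod_j\bigl(x^r-\zeta^j\bigr)$ with $\zeta$ a primitive $(d/r)$-th root of unity. One correction to your bookkeeping: the constraint is $rN-1\le d$, not $rN-1\ge 2d$. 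For $\mu_{\mathcal{B}_N}\le\mu_d$ the lifts (of degree $<rN$) must fit into $\mathcal{A}[x,\sigma]^d$, while nothing about the product needs to survive beyond its reduction mod $x^{rN}-c$; so the middle passage with $N\approx 2d/r$ and the claimed constant $2dr^{\omega-1}$ does not work, and only your final ``cleaner route'' with $N=\lfloor d/r\rfloor$ (exactly the paper's choice) is valid, giving $\lfloor d/r\rfloor\, r^{\omega}=dr^{\omega-1}$ when $r\mid d$. Also note that no choice of $c$ makes $t^N-c$ separable when $\operatorname{char}(\mathcal{k})\mid N$; like the paper, you need to dispose of that case by a WLOG adjustment of $N$ rather than by choosing $c$.
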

\begin{proof}
Without loss of generality, we may assume that $r | d$ and $\operatorname{char}(\mathcal{k}) \nmid (r/d)$. Let $\zeta\in \mathcal{k}$ be a primitive $d/r$-th root of unity. We denote $\mathcal{R} \coloneqq \mathcal{A}[x,\sigma]/(x^{d} - 1)$. Lemmas~\ref{lem7} and \ref{lem:modular ring} applied to $p(x) = x^{d/r} - 1$ and $p_j(x) = x - \zeta^j, 1 \le j \le d/r$ implies that 
\[
\mathcal{R} \simeq \bigoplus_{j=1}^{d/r} \mathcal{A}[x,\sigma]/(x^r - \zeta^j) \simeq \bigoplus_{j=1}^{d/r} \mathcal{k}^{r\times r}.
\]
Assume $\mu_{\mathcal{R}}$ is the multiplication map on $\mathcal{R}$. Let $\varphi_1:\mathcal{R} \to \mathcal{A}[x,\sigma]^{d}$ be the map defined by $\sum_{j=0}^{d-1} a_j x^j + (x^d - 1) \mapsto \sum_{j=0}^{d-1} a_j x^j$ and let $\varphi_2$ be the restriction to $\mathcal{A}[x,\sigma]^{2d}$ of the quotient map $\mathcal{A}[x,\sigma] \to \mathcal{R}$. Then we have $ \mu_\mathcal{R} = \varphi_2 \circ \mu_d   \circ (\varphi_1 \times \varphi_1)$. By \eqref{eqn:rank restriction} and Theorem~\ref{thm5}, we may conclude that $d r^{\omega-1} \le \rank_{\mathcal{k}} (\mu_d)$.
\end{proof}
\begin{lemma}\label{lem11}
If $\mathcal{k}$ is an infinite field, then there exist $g, p\in \mathcal{k}[t]$ of degrees $r$ and $r-1$ respectively, such that $\mathcal{A}[x,\sigma] \simeq \mathcal{k}\langle A,X\rangle/I$ as $\mathcal{k}$-algebras, where $\mathcal{k}\langle A, X\rangle$ is the non-commutative polynomial ring in variables $A, X$ over $\mathcal{k}$ and $I$ is the two-sided ideal generated by $g(A)$ and $X A - p(A)X$. 
\end{lemma}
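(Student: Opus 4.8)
The goal is to present $\mathcal{A}[x,\sigma]$ by two generators $A, X$ and the relations $g(A) = 0$ and $XA = p(A)X$. The natural choice is to take $A \in \mathcal{A}$ to be a primitive element, which exists by Theorem~\ref{thm:primitive element thm} since $\mathcal{k}$ is infinite, so that $\mathcal{A} = \mathcal{k}[A]$ and $g \in \mathcal{k}[t]$ is the (degree $r$) minimal polynomial of $A$ over $\mathcal{k}$; and to take $X = x$, the distinguished variable of the skew polynomial ring.

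The first step is to produce the polynomial $p$. Since $\sigma$ is a $\mathcal{k}$-algebra automorphism of $\mathcal{A} = \mathcal{k}[A]$, the element $\sigma(A)$ again lies in $\mathcal{A} = \mathcal{k}[A]$, so there is a polynomial $p \in \mathcal{k}[t]$ with $\sigma(A) = p(A)$; reducing modulo $g$ we may take $\deg p \le r-1$, and in fact $\deg p = r-1$ because $\sigma$ is an automorphism (equivalently, $p$ induces a bijection on the finite étale algebra $\mathcal{A}$, so it cannot be a polynomial of smaller degree — this is where I would be slightly careful, arguing that if $\deg p < r - 1$ then $p(A)$ would lie in a proper $\mathcal{k}$-subspace and $\sigma$ could not be surjective). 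The defining skew-commutation relation $xa = \sigma(a)x$ then specializes, at $a = A$, to $XA = p(A)X$; conversely, knowing $XA = p(A)X$ one recovers $Xa = \sigma(a)X$ for every $a = q(A) \in \mathcal{A}$ by induction on $\deg q$, using that $\sigma$ is a ring homomorphism. This shows the two relations $g(A), XA - p(A)X$ suffice to recover all relations of $\mathcal{A}[x,\sigma]$.

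The second step is to make the isomorphism precise. Define a $\mathcal{k}$-algebra homomorphism $\Phi : \mathcal{k}\langle A, X\rangle \to \mathcal{A}[x,\sigma]$ by $A \mapsto A$, $X \mapsto x$. It is surjective because $A$ generates $\mathcal{A}$ over $\mathcal{k}$ and $x$ together with $\mathcal{A}$ generates $\mathcal{A}[x,\sigma]$. One checks $g(A)$ and $XA - p(A)X$ lie in $\ker \Phi$, so $\Phi$ factors through $\overline{\Phi}: \mathcal{k}\langle A,X\rangle/I \to \mathcal{A}[x,\sigma]$. To see $\overline{\Phi}$ is an isomorphism it suffices to exhibit a spanning set of $\mathcal{k}\langle A,X\rangle/I$ of cardinality matching a basis of $\mathcal{A}[x,\sigma]$ in each degree: using the relation $XA = p(A)X$ repeatedly, every monomial in $A, X$ can be rewritten as a $\mathcal{k}$-linear combination of $A^i X^j$, and using $g(A) = 0$ we may assume $0 \le i \le r-1$; thus $\{A^i X^j : 0 \le i \le r-1,\ j \ge 0\}$ spans $\mathcal{k}\langle A,X\rangle/I$. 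Their images $A^i x^j$ form a $\mathcal{k}$-basis of $\mathcal{A}[x,\sigma]$, so $\overline\Phi$ sends a spanning set to a basis, hence is injective, hence an isomorphism.

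**Main obstacle.** The only genuinely non-formal point is the normal-form / spanning argument in the last step — verifying that pushing all $X$'s to the right using $XA = p(A)X$ terminates and produces exactly the monomials $A^i X^j$ with $i < r$, so that the quotient has the right dimension in each $X$-degree and no collapse occurs. Everything else (existence of the primitive element, existence of $p$, that the stated relations generate the kernel) is a direct unwinding of definitions together with Theorem~\ref{thm:primitive element thm}.
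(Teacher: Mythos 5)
Your main argument is correct and follows essentially the same route as the paper: take a primitive element $a$ (Theorem~\ref{thm:primitive element thm}), let $g$ be its minimal polynomial, choose $p$ with $p(a)=\sigma(a)$, map $\mathcal{k}\langle A,X\rangle \to \mathcal{A}[x,\sigma]$ by $A\mapsto a$, $X\mapsto x$, observe that $g(A)$ and $XA-p(A)X$ lie in the kernel, and then prove injectivity of the induced map on the quotient. The paper finishes by writing down an explicit $\mathcal{k}$-linear inverse ($a^ix^j\mapsto A^iX^j$) and checking it is an algebra homomorphism, whereas you finish by showing $\{A^iX^j: 0\le i\le r-1,\ j\ge 0\}$ spans $\mathcal{k}\langle A,X\rangle/I$ and is carried bijectively onto the $\mathcal{k}$-basis $\{a^ix^j\}$ of $\mathcal{A}[x,\sigma]$, which forces the kernel to vanish; both are standard and equally valid, and the normal-form step you flag as the only non-formal point is easily completed (by induction $XA^i\equiv p(A)^iX \pmod{I}$, so every word reduces to a combination of $A^iX^j$, and then $g(A)=0$ caps $i$ at $r-1$ --- no confluence is needed, only spanning).

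One genuine error: your parenthetical claim that $\deg p$ equals $r-1$ exactly, justified by saying that otherwise $p(A)$ lies in a proper subspace and $\sigma$ could not be surjective. Surjectivity of $\sigma$ only forces $p(a)$ to be again a primitive element of $\mathcal{A}$, and primitive elements can be expressed by polynomials of any degree $\ge 1$ in $a$: for the Kummer extension $\sigma(a)=\zeta a$ and for the Artin extension $\sigma(a)=a+1$, so there $p$ has degree $1$, not $r-1$. The paper's own proof only takes $\deg p\le r-1$ (the ``degree $r-1$'' in the statement should be read as ``degree at most $r-1$''), and nothing downstream, in particular Lemma~\ref{thm12}, uses more than that; so the exact-degree claim should simply be dropped rather than argued.
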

\begin{proof}
By Theorem~\ref{thm:primitive element thm}, there exists $a\in \mathcal{A}$ such that $\mathcal{A} = \mathcal{k}[a]$. Thus one can find $p\in \mathcal{k}[t]$ of degree at most $(r-1)$ such that $p(a) = \sigma(a)$. Let $g \in \mathcal{k}[t]$ be the minimal polynomial of $a$. We claim that $g$ and $p$ are the desired polynomials. Indeed, the $\mathcal{k}$-linear map $q: \mathcal{k}\langle A,X\rangle\to \mathcal{A}[x,\sigma]$ induced by $A^i \to a^i, X^j \to x^j$ is surjective since $\mathcal{A} = \mathcal{k}[a]$. It is obvious that $I \subseteq  \ker(\psi')$ thus $\rho$ descends to $\psi': \mathcal{k}\langle A,X\rangle/I \to \mathcal{A}[x,\sigma]$. Next we define $\psi: \mathcal{A}[x,\sigma]\to \mathcal{k}\langle A,X\rangle/I$ by $\mathcal{k}$-linearly extending $\psi(a^i) = A^i, \psi(x^j) = X^j$. It is straightforward to verify that $\psi$ is a $\mathcal{k}$-algebra homomorphism and it is the inverse of $\psi'$.
\end{proof}
\begin{lemma}\label{thm12}
If $\mathcal{k}$ is an infinite field and $d \le r/3$, then $\rank_{\mathcal{k}}(\mu_d) \ge d^{\omega-1} r$.
\end{lemma}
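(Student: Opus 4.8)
The plan is to reduce to the totally split Galois algebra over an algebraically closed field, reinterpret $\mu_d$ as the multiplication of banded matrices, exhibit a large direct sum of matrix multiplication tensors as a restriction of it, and then invoke Theorem~\ref{thm5}.

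First I would pass to an algebraic closure $\overline{\mathcal{k}}$ of $\mathcal{k}$. Since $\mathcal{A}$ is \'etale, $\mathcal{A}\otimes_{\mathcal{k}}\overline{\mathcal{k}}$ is \'etale over $\overline{\mathcal{k}}$, hence isomorphic to $\overline{\mathcal{k}}^{\,r}$; moreover $\sigma\otimes\operatorname{id}$ is an automorphism of order $r$ whose fixed ring is $\overline{\mathcal{k}}$, and an automorphism of $\overline{\mathcal{k}}^{\,r}$ with a single orbit on the $r$ factors must be the cyclic shift. (In the language of Lemma~\ref{lem11}, over $\overline{\mathcal{k}}$ the polynomial $g$ splits into distinct linear factors and $p$ permutes its roots by an $r$-cycle.) Thus $(\mu_d)^{\overline{\mathcal{k}}}$ is exactly the degree-$\le d$ multiplication map of the skew polynomial ring of the totally split Galois algebra over $\overline{\mathcal{k}}$, and by \eqref{eqn:rank extn} it suffices to bound $\rank_{\overline{\mathcal{k}}}\big((\mu_d)^{\overline{\mathcal{k}}}\big)$. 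So from now on I assume $\mathcal{k}$ algebraically closed and $\mathcal{A}=\mathcal{k}^r$ with $\sigma$ the cyclic shift.

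Because $d\le r/3$ we have $2d<r$, so the homomorphism $\varphi$ of Lemma~\ref{lem6} is injective on $\mathcal{A}[x,\sigma]^{2d}$; consequently $\mu_d$ is linearly isomorphic to the multiplication map $W\times W\to\mathcal{k}^{r\times r}$, where $W=\varphi(\mathcal{A}[x,\sigma]^d)$ is the space of matrices supported on the $d+1$ cyclic diagonals $\{(p,q):(p-q)\bmod r\in\{0,\dots,d\}\}$, with arbitrary entries there. Now group $\mathbb{Z}/r\mathbb{Z}$ into $m=\lfloor r/d\rfloor\ge 3$ consecutive blocks of size $\approx d$. In the induced block form, a matrix of $W$ has nonzero blocks only on the block diagonal (lower triangular blocks) and the cyclic block subdiagonal (upper triangular blocks), since the bandwidth is $\le d+1$ and $m\ge 3$ forbids wraparound onto a third block-diagonal. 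For a product $M_1M_2$ of two such matrices and $m\ge 3$, the block in position $(\gamma+2,\gamma)$ equals exactly $M_1[\gamma+2,\gamma+1]\,M_2[\gamma+1,\gamma]$, a product of two upper triangular blocks of size $\approx d$; if I feed into $M_1$ only its block-subdiagonal entries and into $M_2$ only its block-subdiagonal entries (all other entries set to zero, which keeps both matrices in $W$), then the products in distinct positions $\gamma$ involve disjoint sets of entries. Hence $\bigoplus_{\gamma=1}^{m}T\le\mu_d$, where $T$ denotes the multiplication of two $\approx d\times d$ upper triangular matrices. Finally the standard corner trick embeds $\langle e,e,e\rangle_{\mathcal{k}}$ with $e=\lfloor d/3\rfloor$ into $T$ (split the index set into three consecutive intervals $J_1,J_2,J_3$, keep only the block $(J_1,J_2)$ of the first factor and $(J_2,J_3)$ of the second, both automatically free, and read the product off block $(J_1,J_3)$), so $\bigoplus_{\gamma=1}^{m}\langle e,e,e\rangle_{\mathcal{k}}\le\mu_d$. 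Applying Theorem~\ref{thm5} gives $\rank_{\mathcal{k}}(\mu_d)\ge m\,e^{\omega}=\Omega\big((r/d)\,d^{\omega}\big)=\Omega(d^{\omega-1}r)$, which together with the first step yields the bound (one can sharpen the constant by also using the block-diagonal part of the product, which is disjoint from the above, and by tuning the number of intervals $J_i$).

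The main obstacle is the embedding step: a bandwidth-$d$ matrix has no free $d\times d$ sub-block, since for any row set $R$ and column set $M$ one has $|R-M|\ge|R|+|M|-1$, so a free block forces $|R|+|M|\le d+1$. This is precisely why one must route through triangular blocks, and why the hypothesis is $d\le r/3$ rather than merely $d<r$: having at least three block columns is exactly what makes the ``distance-$2$'' blocks of the product \emph{clean} single products $M_1[\cdot,\cdot]M_2[\cdot,\cdot]$ with no cross terms, so that a genuine direct-sum structure (and hence Theorem~\ref{thm5}) is available. The remaining ingredients — injectivity of $\varphi$ in low degree, the corner trick, and the scalar-extension reduction — are routine.
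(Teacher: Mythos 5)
Your proof is correct, and it reaches the same core mechanism as the paper---realizing degree-$\le d$ skew polynomials as cyclically banded $r\times r$ matrices and extracting a direct sum of $\Theta(r/d)$ matrix multiplications of size $\Theta(d)$, then applying Theorem~\ref{thm5}---but by a different reduction. The paper keeps the hypothesis that $\mathcal{k}$ is infinite because it invokes the primitive element theorem via Lemma~\ref{lem11} to present $\mathcal{A}[x,\sigma]$ as $\mathcal{k}\langle A,X\rangle/I$, then extends scalars to $\mathcal{A}$ itself and represents $A\mapsto\operatorname{diag}(a,\sigma(a),\dots,\sigma^{r-1}(a))$, $X\mapsto\beta$ inside $\mathcal{A}^{r\times r}$; you instead extend scalars to $\overline{\mathcal{k}}$, observe that an \'etale $\langle\sigma\rangle$-Galois algebra splits totally there with $\sigma$ acting as an $r$-cycle on the idempotents, and use Lemma~\ref{lem6} (the kernel $(x^r-1)$ is generated by a central element, hence meets degree $\le 2d<r$ trivially). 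Your route has the small advantages of not needing the infiniteness hypothesis and of reusing the base-change step the paper performs anyway in Theorem~\ref{thm:lower bound}, while the paper's route stays over $\mathcal{A}$ and avoids discussing the splitting. The extraction step also differs: the paper translates the band onto the main diagonal by powers of $\beta$ and restricts to block-diagonal matrices with $\lceil d/4\rceil$-blocks, whereas you restrict to the cyclic block-subdiagonal and read the clean products off the distance-two block positions (your remark that $m\ge 3$, i.e.\ $d\le r/3$, is what prevents cross terms is indeed the role this hypothesis plays), then apply the corner trick inside the triangular blocks. Two minor caveats: with $m=\lfloor r/d\rfloor$ blocks the block size can be as large as roughly $3d/2$, so the free square you can fit in each triangular block is $\Theta(d)$ but not exactly $\lfloor d/3\rfloor$; and your final count gives $\Omega(d^{\omega-1}r)$ with a constant below $1$ rather than the literal $d^{\omega-1}r$ of the statement---but the paper's own count, $\lceil d/4\rceil^{\omega}\lfloor 4r/d\rfloor$, has the same looseness, and only the $\Omega$-form is used downstream in Corollary~\ref{cor:d<r} and Theorem~\ref{thm:lower bound}.
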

\begin{proof}
By the isomorphism $\psi$ in the proof of Lemma~\ref{lem11}, we have $\rank_{\mathcal{k}} (\mu_d) = \rank_{\mathcal{k}} (T_d)$, where $T_d$ is the restriction to $\psi(\mathcal{A}[x,\sigma]^d) = \operatorname{span}_{\mathcal{k}} \lbrace A^j X^k:0\le j \le r-1, 0\le k \le d \rbrace \subseteq \mathcal{k}\langle A, X\rangle/I$ of the multiplication on $\mathcal{k}\langle A,X\rangle/I$. 

Since $\mathcal{k}$ is a field, the functor $\otimes_{\mathcal{k}} \mathcal{A}$ is exact. Therefore we have a short exact sequence of free $\mathcal{A}$-modules:
\begin{equation*}
    0\to I^{\mathcal{A}} \to \mathcal{k}\langle A, X\rangle^{\mathcal{A}} = \mathcal{A}\langle A, X\rangle\to (\mathcal{k}\langle A,X\rangle/I)^{\mathcal{A}} \to 0.
\end{equation*}
This induces an $\mathcal{A}$-module isomorphism $(\mathcal{k} \langle A,X\rangle/I)^{\mathcal{A}} \cong \mathcal{A} \langle A,X\rangle/I^{\mathcal{A}} $, which is in fact an $\mathcal{A}$-algebra isomorphism. 

Next we consider the $\mathcal{A}$-linear map $\rho: \mathcal{A}\langle A, X\rangle/I^{ \mathcal{A}} \to \mathcal{A}^{r\times r}$ induced by $A^i \to \alpha^i, X^j \to \beta^j$ where $0 \le i \le r-1$ and $j\in \mathbb{N}$, where $\alpha = \operatorname{diag}(a,\sigma(a), \dots,\sigma^{r-1}(a))$ and 
\begin{equation}\label{eqn:cyclic permutation matrix}
    \beta =\begin{bmatrix}
        0 &1 &0 &\cdots &0\\
        0 &0 &1 &\cdots &0\\
        \vdots &\vdots &\vdots &\vdots &\vdots\\
        0 &0 &0 &\cdots &1\\
        1 &0 &0 &\cdots &0
    \end{bmatrix}.
\end{equation}
We notice that $I^\mathcal{A}$ is the ideal of $\mathcal{A} \langle A, X\rangle$ generated by $g(A)$ and $X A-p(A)X$, where $g,p\in \mathcal{k}[t]$ are polynomials as in Lemma~\ref{lem11}. Moreover, matrices $\alpha, \beta$ satisfy $g(\alpha) = \beta \alpha - p(\alpha) \beta = 0$.
Thus $\rho$ is a $\mathcal{A}$-algebra homomorphism. Moreover, $\rho$ is an $\mathcal{A}$-module isomorphism from $(\psi(\mathcal{A}[x,\sigma]^d) )^{\mathcal{A}}$ to $\mathcal{A}^{r\times r}$. We denote the inverse of this isomorphism by $\rho':\mathcal{A}^{r\times r} \to (\psi(\mathcal{A}[x,\sigma]^{r-1}) )^{\mathcal{A}}$.

By construction, $\langle r,r,r \rangle_{\mathcal{A}}$ coincides with $\rho \circ T_d^{\mathcal{A}} \circ (\rho' \times \rho')$ on $\mathbb{U}^d \times \mathbb{U}^d$ where $\mathbb{U}^d =  \rho \left( ( \psi(\mathcal{A}[x,\sigma]^d) )^{\mathcal{A}}\right)$. By a direct calculation, $\mathbb{U}^d$ consists of $P = (P_{jk}) \in \mathcal{A}^{r\times r}$ such that $P_{jk} = 0$ if either $k - j \ge d-1$ or $-1 \ge k - j \ge d - r - 1$. Hence $\mathbb{U}^d \beta^{-\lceil d/2 \rceil} = \beta^{-\lceil d/2 \rceil} \mathbb{U}^d $ consists of $Q = (Q_{jk}) \in \mathcal{A}^{r\times r}$ such that $Q_{jk} = 0$ if either $k - j
 \ge \lceil d/2 \rceil$ or $-\lceil d/2 \rceil \ge k - j \ge  \lceil d/2 \rceil - r - 1$. In particular, $\mathbb{U}^d \beta^{-\lceil d/2 \rceil} = \beta^{-\lceil d/2 \rceil} \mathbb{U}^d $ contains all block diagonal matrices where each block is of size $\lceil d/4 \rceil \times \lceil d/4 \rceil$. Moreover, we observe that $PQ =\beta^{\lceil d/2 \rceil} (\beta^{-\lceil d/2 \rceil}P) (Q\beta^{-\lceil d/2 \rceil}) \beta^{\lceil d/2 \rceil}$. Thus we obtain
 \[
\langle \lceil d/4 \rceil, \lceil d/4 \rceil, \lceil d/4 \rceil \rangle^{\oplus \lfloor 4r/d \rfloor} \le \langle r,r,r \rangle_{\mathcal{A}}|_{\mathbb{U}^d \times \mathbb{U}^d} \le T^{\mathcal{A}}_d.
 \]
Theorem~\ref{thm5} and \eqref{eqn:rank extn} imply $
d^{\omega-1} r \le \rank_{\mathcal{A}} (T^{\mathcal{A}}_d) \le \rank_{\mathcal{k}} (T_d)$. 
\end{proof}
We notice that if $d = C r$ for some $1 > C > 1/3$, then we clearly have 
\[
\rank_{\mathcal{k}}(\mu_d) \ge \rank_{\mathcal{k}}(\mu_{\lfloor r/3 \rfloor}) \ge {\lfloor r/3 \rfloor}^{\omega - 1} r \ge \frac{1}{9} d^{\omega - 1} r.
\]
This leads to the corollary that follows.
\begin{corollary}\label{cor:d<r}
If $\mathcal{k}$ is an infinite field and $d < r$, then $\rank_{\mathcal{k}}(\mu_d) \ge \Omega(d^{\omega-1} r)$.
\end{corollary}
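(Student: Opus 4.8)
The plan is to deduce the corollary from Lemma~\ref{thm12} by a short monotonicity argument, so that no new ideas are needed. First I would record the elementary fact that $\rank_{\mathcal{k}}(\mu_d)$ is non-decreasing in $d$: whenever $d' \le d$ we have the inclusions $\mathcal{A}[x,\sigma]^{d'} \subseteq \mathcal{A}[x,\sigma]^{d}$ and $\mathcal{A}[x,\sigma]^{2d'} \subseteq \mathcal{A}[x,\sigma]^{2d}$, and under these inclusions $\mu_{d'}$ is literally a restriction of $\mu_d$ in the sense of Section~2 (take $\varphi_1 = \varphi_2$ to be the inclusion of degree-$\le d'$ polynomials and $\varphi_3$ the truncation onto the degree-$\le 2d'$ part; then $\varphi_3 \circ \mu_d \circ (\varphi_1\times\varphi_2) = \mu_{d'}$ because the product of two polynomials of degree $\le d'$ has degree $\le 2d'$). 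Hence $\mu_{d'} \le \mu_d$, and \eqref{eqn:rank restriction} gives $\rank_{\mathcal{k}}(\mu_{d'}) \le \rank_{\mathcal{k}}(\mu_d)$.

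With this in hand I would split into two cases. If $d \le r/3$, Lemma~\ref{thm12} applies directly and yields $\rank_{\mathcal{k}}(\mu_d) \ge d^{\omega - 1} r$, which is already $\Omega(d^{\omega-1} r)$. If instead $r/3 < d < r$, set $d' \coloneqq \lfloor r/3 \rfloor$; then $d' \le r/3$, so Lemma~\ref{thm12} gives $\rank_{\mathcal{k}}(\mu_{d'}) \ge (d')^{\omega-1} r$, and the monotonicity above upgrades this to $\rank_{\mathcal{k}}(\mu_d) \ge \rank_{\mathcal{k}}(\mu_{d'}) \ge \lfloor r/3 \rfloor^{\omega-1} r$. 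Since $d < r$ we have $\lfloor r/3\rfloor \ge r/3 - 1 \ge d/3 - 1$, so for $d$ bounded below by an absolute constant $\lfloor r/3\rfloor \ge d/4$, whence $\rank_{\mathcal{k}}(\mu_d) \ge (d/4)^{\omega-1} r$; absorbing the universal constant $4^{-(\omega-1)} > 4^{-2}$ into the $\Omega$ closes this case. This is exactly the computation displayed just before the corollary (there phrased with the cleaner constant $1/9$ obtained by comparing $\lfloor r/3\rfloor$ with $d/3$ rather than $d/4$).

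I do not expect a genuine obstacle here: the entire substance of the corollary already lives in Lemma~\ref{thm12}, whose proof invokes the primitive-element presentation of Lemma~\ref{lem11}, the faithful action of the matrices $\alpha,\beta$, and the matrix-multiplication lower bound of Theorem~\ref{thm5}. The only points requiring a little care are bookkeeping ones: that $\Omega(\cdot)$ is asymptotic in $d$ with $r$ free, so it is enough to argue for all sufficiently large $d$, and that the widening of the hypothesis from ``$d \le r/3$'' to ``$d < r$'' costs only a harmless universal multiplicative factor. Via \eqref{eqn:rank<complexity} the same bound then transfers to $C_{\mathcal{k}}(\mu_d)$.
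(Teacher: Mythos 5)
Your proposal is correct and follows essentially the same route as the paper: the displayed computation preceding the corollary is exactly your second case (monotonicity $\rank_{\mathcal{k}}(\mu_d) \ge \rank_{\mathcal{k}}(\mu_{\lfloor r/3\rfloor})$ followed by Lemma~\ref{thm12} and a constant comparison of $\lfloor r/3\rfloor$ with $d$), with the first case $d \le r/3$ being Lemma~\ref{thm12} itself. Your explicit verification that $\mu_{d'} \le \mu_d$ via inclusion and truncation maps only spells out the monotonicity step the paper treats as obvious, so there is no substantive difference.
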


Finally, we are ready to establish the lower bound for the complexity of skew polynomial multiplication over an arbitrary field $\mathcal{k}$.
\begin{theorem}\label{thm:lower bound}
Let $\mathcal{k}$ be a field (not necessarily algebraically closed) and let $\mathcal{A}$ be a $\langle \sigma \rangle$-Galois algebra. We have 
\[
C_{\mathcal{k}} (\mu_d) \ge \rank_{\mathcal{k}}(\mu_d) = \Omega(d\min\{d,r\}^{\omega-2} r),
\]
where $\mu_d$ is the bilinear map of multiplying degree-$d$ skew polynomials and $C_{\mathcal{k}} (\mu_d)$ is the total complexity of $\mu_d$. 
\end{theorem}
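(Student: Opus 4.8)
\emph{Proof plan.} The strategy is to bootstrap from the two cases already settled — Lemma~\ref{thm9} for an algebraically closed ground field with $d\ge r$, and Corollary~\ref{cor:d<r} for an infinite ground field with $d<r$ — to an arbitrary field by passing to the algebraic closure $\overline{\mathcal{k}}$ and invoking the scalar-extension inequality \eqref{eqn:rank extn}.

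Concretely, for any field extension $\mathcal{k}'\supseteq\mathcal{k}$ I would first check that $\mathcal{A}'\coloneqq\mathcal{A}\otimes_{\mathcal{k}}\mathcal{k}'$, equipped with $\sigma'\coloneqq\sigma\otimes\id$, is again a $\langle\sigma'\rangle$-Galois algebra over $\mathcal{k}'$: base change preserves étaleness and dimension, the inclusion $\mathcal{A}\hookrightarrow\mathcal{A}'$ forces $\sigma'$ to have the same order $r>1$ as $\sigma$, and flatness of $\mathcal{k}'$ over $\mathcal{k}$ makes the functor of $\sigma$-invariants commute with $-\otimes_{\mathcal{k}}\mathcal{k}'$, so $(\mathcal{A}')^{\sigma'}=\mathcal{A}^{\sigma}\otimes_{\mathcal{k}}\mathcal{k}'=\mathcal{k}'$. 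Since the skew polynomial construction is linear in the coefficient algebra, $\mathcal{A}'[x,\sigma']\cong\mathcal{A}[x,\sigma]\otimes_{\mathcal{k}}\mathcal{k}'$ as graded $\mathcal{k}'$-algebras; under this isomorphism the degree-$\le d$ part corresponds to $\mathcal{A}[x,\sigma]^{d}\otimes_{\mathcal{k}}\mathcal{k}'$ and the multiplication bilinear map in degree $\le d$ corresponds to $\mu_d^{\mathcal{k}'}$. Thus $\mu_d^{\mathcal{k}'}$ is precisely the degree-$d$ skew-polynomial multiplication map attached to the $\langle\sigma'\rangle$-Galois algebra $\mathcal{A}'$, and \eqref{eqn:rank extn} gives $\rank_{\mathcal{k}'}(\mu_d^{\mathcal{k}'})\le\rank_{\mathcal{k}}(\mu_d)$.

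Taking $\mathcal{k}'=\overline{\mathcal{k}}$, I would then split on the size of $d$. If $d\ge r$, applying Lemma~\ref{thm9} over the algebraically closed field $\overline{\mathcal{k}}$ gives $\rank_{\overline{\mathcal{k}}}(\mu_d^{\overline{\mathcal{k}}})\ge dr^{\omega-1}$, so $\rank_{\mathcal{k}}(\mu_d)\ge dr^{\omega-1}=d\min\{d,r\}^{\omega-2}r$. If $d<r$, then $\overline{\mathcal{k}}$ is infinite, so Corollary~\ref{cor:d<r} over $\overline{\mathcal{k}}$ gives $\rank_{\overline{\mathcal{k}}}(\mu_d^{\overline{\mathcal{k}}})=\Omega(d^{\omega-1}r)$, so $\rank_{\mathcal{k}}(\mu_d)=\Omega(d^{\omega-1}r)=\Omega(d\min\{d,r\}^{\omega-2}r)$. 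Either way we obtain the asserted lower bound on $\rank_{\mathcal{k}}(\mu_d)$, and $C_{\mathcal{k}}(\mu_d)\ge\rank_{\mathcal{k}}(\mu_d)$ by \eqref{eqn:rank<complexity}.

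The step I expect to require the most care is the base-change reduction itself: one must verify that $(\mathcal{A}\otimes_{\mathcal{k}}\overline{\mathcal{k}},\sigma\otimes\id)$ really satisfies the Galois-algebra axioms, so that Lemma~\ref{thm9} and Corollary~\ref{cor:d<r} apply to it verbatim, and that extension of scalars sends $\mu_d$ to the corresponding multiplication map rather than to some coarser bilinear map. A secondary, purely bookkeeping point: Lemma~\ref{thm9} is proved after reducing to the case $r\mid d$, which uses the obvious monotonicity $\rank_{\mathcal{k}}(\mu_{d'})\le\rank_{\mathcal{k}}(\mu_d)$ for $d'\le d$ (since $\mu_{d'}$ is a restriction of $\mu_d$) together with a choice of multiple of $r$ whose cofactor is prime to $\operatorname{char}(\overline{\mathcal{k}})$; the attendant loss is only a constant factor, absorbed into $\Omega(\cdot)$.
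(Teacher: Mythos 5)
Your proposal is correct and follows essentially the same route as the paper's proof: pass to the algebraic closure, check that $\mathcal{A}^{\overline{\mathcal{k}}}$ remains a $\langle\sigma^{\overline{\mathcal{k}}}\rangle$-Galois algebra so that $\mu_d^{\overline{\mathcal{k}}}$ is the corresponding multiplication map, apply \eqref{eqn:rank extn} together with Lemma~\ref{thm9} (for $d\ge r$) and Corollary~\ref{cor:d<r} (for $d<r$), and conclude with \eqref{eqn:rank<complexity}. The extra care you take in verifying the base-change of the Galois-algebra axioms is exactly the point the paper leaves implicit, so there is nothing to change.
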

\begin{proof}
The inequality follows from \eqref{eqn:rank<complexity}. Let $\overline{\mathcal{k}}$ be the algebraic closure of $\mathcal{k}$.  Then $\mathcal{A}[x,\sigma]^{\overline{\mathcal{k}}} = \mathcal{A}^{\overline{\mathcal{k}}}[x,\sigma^{\overline{\mathcal{k}}}]$ is the skew polynomial ring defined by the $\langle \sigma^{\overline{\mathcal{k}}} \rangle$-Galois algebra $\mathcal{A}^{\overline{\mathcal{k}}}$. Moreover, by \eqref{eqn:rank extn} we have $\rank_{\mathcal{k}}(\mu_d) \ge \rank_{\overline{\mathcal{k}}}(\mu_d^{\overline{\mathcal{k}}})$. Lastly, Lemma~\ref{thm9} and Corollary~\ref{cor:d<r} imply $\rank_{\overline{\mathcal{k}}}(\mu_d^{\overline{\mathcal{k}}}) = \Omega(d \min\{d,r\}^{\omega-2} r)$.
\end{proof}
As a consequence of Theorem~\ref{thm:lower bound} and the upper bound \eqref{eqn:2017bound}, we may conclude that when $d \ge r$, both $\rank_{\mathcal{k}}(\mu_d)$ and $C_{\mathcal{k}} (\mu_d)$ are completely determined, up to a log factor.
\begin{corollary}\label{cor:lower bound}
If $d \ge r$, then there exist constant numbers $C_1, C_2,\tau >0$ such that 
\[
C_1 d r^{\omega - 1} \le \rank_{\mathcal{k}}(\mu_d) \le  C_{\mathcal{k}} (\mu_d) \le C_2 d (\log d)^{\tau} r^{\omega - 1}.
\]
\end{corollary}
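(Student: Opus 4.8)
The plan is to obtain Corollary~\ref{cor:lower bound} by pasting together the lower bound of Theorem~\ref{thm:lower bound} with the literature upper bound~\eqref{eqn:2017bound}, using the elementary inequality~\eqref{eqn:rank<complexity} to bridge bilinear rank and total complexity. First I would observe that the hypothesis $d \ge r$ forces $\min\{d,r\} = r$, so the conclusion of Theorem~\ref{thm:lower bound} specializes to $\rank_{\mathcal{k}}(\mu_d) = \Omega(d\, r^{\omega-2}\, r) = \Omega(d\, r^{\omega-1})$. Unwinding the definition of $\Omega$, this produces a constant $C_1 > 0$ such that $C_1\, d\, r^{\omega-1} \le \rank_{\mathcal{k}}(\mu_d)$ for all sufficiently large $d$, which is the leftmost inequality in the claimed chain.

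Next I would invoke~\eqref{eqn:rank<complexity}, namely $\rank_{\mathcal{k}}(\mu_d) \le C_{\mathcal{k}}(\mu_d)$, to obtain the middle inequality. For the rightmost one I would apply the case $d \ge r$ of~\eqref{eqn:2017bound}, which asserts $C_{\mathcal{k}}(\mu_d) = \widetilde{O}(d\, r^{\omega-1})$; unwinding the definition of $\widetilde{O}$ gives constants $C_2, \tau > 0$ with $C_{\mathcal{k}}(\mu_d) \le C_2\, (\log d)^{\tau}\, d\, r^{\omega-1}$ for all sufficiently large $d$. Concatenating the three bounds yields exactly the displayed inequality.

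I do not expect any genuine obstacle here: all the substance resides in Theorem~\ref{thm:lower bound} and in the prior upper bound~\eqref{eqn:2017bound}, and the corollary is a bookkeeping step matching the exponents $d\min\{d,r\}^{\omega-2}r$ and $d r^{\omega-1}$ and absorbing constants. The only minor point to record is that the $\Omega$ bound and the $\widetilde{O}$ bound each hold for $d$ beyond some threshold, and one may take a common threshold since a finite intersection of cofinite subsets of $\mathbb{N}$ is cofinite; hence the chain $C_1 d r^{\omega-1} \le \rank_{\mathcal{k}}(\mu_d) \le C_{\mathcal{k}}(\mu_d) \le C_2 d (\log d)^{\tau} r^{\omega-1}$ is valid for all sufficiently large $d$, as asserted.
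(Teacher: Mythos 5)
Your proposal is correct and matches the paper's reasoning exactly: the corollary is stated as an immediate consequence of Theorem~\ref{thm:lower bound} (with $\min\{d,r\}=r$ since $d\ge r$), the inequality \eqref{eqn:rank<complexity}, and the $d\ge r$ case of the upper bound \eqref{eqn:2017bound}. Your extra remark about taking a common threshold for the $\Omega$ and $\widetilde{O}$ bounds is a harmless bookkeeping point that the paper leaves implicit.
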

\section{Average bilinear complexity}\label{sec:average}
In this section, we discuss the average bilinear complexity of simultaneously multiplying skew polynomials. Let $N$ be a positive integer. We recall that the bilinear map of multiplying skew polynomials of degree at most $d$ is denoted by $\mu_d$. Thus the bilinear map that simultaneously multiplies $N$ pairs of skew polynomials of degree at most $d$ is $\mu_d^{\oplus N}$. We define \emph{the average bilinear complexity} of $\mu_d^{\oplus N}$ by $\operatorname{A-rank}_{\mathcal{k}}(N) \coloneqq \rank_{\mathcal{k}}(\mu_d^{\oplus N})/N$.

For each integer $2- r \le k \le r$, the \emph{$k-$th diagonal} of $P = (P_{ij}) \in \mathcal{k}^{r \times r}$ is the sequence $P^{(k)} \coloneqq (P_{i,i+k-1})_{i = \max\{1,2-k\}}^{\min\{r, r + 1 - k\}}$.
\begin{lemma}\label{lem15}
Let $m \ge 0$ and $2- r \le k_1, k_2 \le r - m$ be fixed integers. Suppose $P_1,P_2$ are $r\times r$ matrices such that $P_i^{(s)} = 0$ whenever $s < k_i$ or $s > k_i + m$, $i = 1,2$. Then one can compute $P_1 P_2$ by $\widetilde{O}( m^{\omega-1} r)$ arithmetic operations. Moreover, the bilinear complexity of multiplying such matrices is also $\widetilde{O}( m^{\omega-1} r)$.
\end{lemma}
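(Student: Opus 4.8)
The plan is to reduce the multiplication of two "banded-by-diagonal" matrices to a small number of dense matrix multiplications of size roughly $m \times m$. First I would normalize the band positions: since the set of nonzero diagonals of $P_1$ is a length-$(m+1)$ window starting at $k_1$, and likewise for $P_2$, I would conjugate by suitable powers of the cyclic permutation matrix $\beta$ from \eqref{eqn:cyclic permutation matrix}. Left- and right-multiplication by $\beta^{\pm t}$ cyclically shifts the diagonals, so after replacing $P_1$ by $\beta^{-a} P_1 \beta^{b}$ and $P_2$ by $\beta^{-b} P_2 \beta^{c}$ for appropriate $a,b,c$ (chosen so the product $\beta^{-a}P_1P_2\beta^{c}$ recovers $P_1P_2$ up to a fixed known conjugation), I may assume both $P_1$ and $P_2$ have their nonzero diagonals supported on indices $\{0,1,\dots,m\}$, i.e. they are "upper triangular with bandwidth $m$" in the cyclic sense. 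Note that conjugation by a power of $\beta$ costs no arithmetic operations (it only permutes entries) and does not change the bilinear complexity, so this reduction is free.

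Next, with both matrices banded on diagonals $0,\dots,m$, I would cut the $r$ rows/columns into $\lceil r/m \rceil$ consecutive blocks of size (at most) $m$. Writing $P_1, P_2$ in this block form, each is block-upper-bidiagonal: the only nonzero blocks of $P_i$ are the diagonal blocks $(P_i)_{\ell,\ell}$ and the superdiagonal blocks $(P_i)_{\ell,\ell+1}$ (indices of blocks taken cyclically, so block $(\lceil r/m\rceil, 1)$ can also be nonzero). Multiplying two block-upper-bidiagonal matrices, the $(\ell,\ell')$ block of the product is nonzero only for $\ell' \in \{\ell, \ell+1, \ell+2\}$, and each such block is a sum of at most two products of $m\times m$ blocks. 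Hence $P_1 P_2$ is obtained from $O(r/m)$ multiplications of $m\times m$ matrices over $\mathcal{k}$, which costs $O((r/m)\, m^\omega) = O(m^{\omega-1} r)$ arithmetic operations (or, using an $\widetilde{O}$ bound for a concrete fast matrix multiplication routine, $\widetilde{O}(m^{\omega-1}r)$), plus a negligible $O(r m)$ for the additions. The same decomposition expresses the bilinear map of multiplying such matrices as a restriction of a direct sum of $O(r/m)$ copies of $\langle m,m,m\rangle_{\mathcal{k}}$, so by \eqref{eqn:rank restriction} its bilinear complexity is $O(m^{\omega-1}r)$, hence $\widetilde{O}(m^{\omega-1}r)$ as well.

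The one point requiring a little care — the main (mild) obstacle — is the bookkeeping of the cyclic wrap-around: after the block partition, the last block may have size strictly less than $m$, and the superdiagonal "wraps" from the last block row back to the first, so one should either pad $r$ up to the next multiple of $m$ (changing $r$ by at most a factor $2$, harmless for the asymptotics) or treat the wrap-around block explicitly. Either way the count of $m\times m$ multiplications stays $O(r/m)$. Once this is handled, both claims of the lemma follow immediately from the block decomposition together with Theorem~\ref{thm5}-style accounting (here only the trivial direction, namely that a direct sum of $\langle m,m,m\rangle$ has rank $O((r/m)m^\omega)$) and \eqref{eqn:rank restriction}.
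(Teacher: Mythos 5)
Your proposal is correct and follows essentially the same route as the paper: normalize the band position by multiplying with powers of the cyclic permutation matrix $\beta$ (a free re-arrangement of entries), multiply the resulting (cyclically) banded matrices via $O(r/m)$ products of $m\times m$ blocks, and undo the permutation, which gives both the arithmetic and the bilinear complexity bounds. The only difference is presentational: the paper shifts so each factor becomes an $\lceil m/2\rceil$-banded matrix plus a corner block and asserts the $O(m^{\omega-1}r)$ cost, whereas you spell out the block-bidiagonal decomposition (and correctly flag the wrap-around/padding bookkeeping), which is exactly the detail the paper leaves implicit.
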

\begin{proof}
Let $\beta\in \mathcal{k}^{r\times r}$ the the permutation matrix defined in \eqref{eqn:cyclic permutation matrix}. On the one hand, there are integers $n_1,n_2$ such that $\beta^{n_1}P_1$ and $P_2 \beta^{n_2}$ are of the form $Q + Q' $, where $Q$ is $\lceil m/2 \rceil$-banded and $Q' = \begin{bmatrix}
0 & 0 \\
E & 0
\end{bmatrix}$ for some $E\in \mathcal{A}^{\lceil m/2 \rceil \times \lceil m/2 \rceil}$. It is clear that we can multiply such matrices by $O(m^{\omega - 1} r )$ arithmetic operations. On the other hand, we have $P_1 P_2 =\beta^{-n_1} \left[ (\beta^{n_1}P_1) (P_2 \beta^{n_2})\right] \beta^{-n_2}$. Thus $P_1 P_2$ can be computed by $\widetilde{O}(m^{\omega - 1} r )$ operations as well. The upper bound for bilinear complexity can be obtained similarly.
\end{proof}

According to the proof of Lemma~\ref{thm12}, $\mu_d^{\mathcal{A}}$ is also the restriction of the multiplication of matrices of the same form as those in Lemma~\ref{lem15}. Thus we obtain the following corollary.
\begin{corollary}\label{coro13}
    $\rank_{\mathcal{A}}(\mu_d^{\mathcal{A}})= \widetilde{O}(d^{\omega-1} r)$.
\end{corollary}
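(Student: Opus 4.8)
The plan is to reduce the bilinear complexity of $\mu_d^{\mathcal{A}}$ directly to the setting of Lemma~\ref{lem15} and then invoke that lemma. Recall from the proof of Lemma~\ref{thm12} that, after passing to the isomorphism $\psi$ and extending scalars to $\mathcal{A}$, we have $\rank_{\mathcal{A}}(\mu_d^{\mathcal{A}}) = \rank_{\mathcal{A}}(T_d^{\mathcal{A}})$, and that under the $\mathcal{A}$-algebra isomorphism $\rho$ the space $(\psi(\mathcal{A}[x,\sigma]^d))^{\mathcal{A}}$ is carried onto the subspace $\mathbb{U}^d \subseteq \mathcal{A}^{r\times r}$ consisting of those $P = (P_{jk})$ with $P_{jk} = 0$ whenever $k - j \ge d - 1$ or $-1 \ge k - j \ge d - r - 1$. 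In the diagonal notation of Section~\ref{sec:average}, this means exactly that $P^{(s)} = 0$ unless $s$ lies in a window of length $O(d)$ (namely $s \in \{d - r, \dots, d - 1\}$, reading indices cyclically). Hence every element of $\mathbb{U}^d$ has the shape described in Lemma~\ref{lem15} with $m = O(d)$ and a fixed choice of $k_1 = k_2$.

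First I would make the above banded description precise: write out which diagonals of $P \in \mathbb{U}^d$ can be nonzero, and observe that this set of diagonals is a single cyclic band of width $O(d)$, so that (after the conjugation by a suitable power of $\beta$ as in the proof of Lemma~\ref{lem15}) a pair of matrices in $\mathbb{U}^d$ is of exactly the form handled there. Second, since $\rho$ and $\rho'$ are $\mathcal{A}$-linear isomorphisms intertwining $T_d^{\mathcal{A}}$ with the restriction of $\langle r,r,r\rangle_{\mathcal{A}}$ to $\mathbb{U}^d \times \mathbb{U}^d$, the bilinear map $\mu_d^{\mathcal{A}}$ is a restriction (in the sense of \eqref{eqn:rank restriction}) of the matrix-multiplication bilinear map on such banded matrices. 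Third, Lemma~\ref{lem15} gives that the bilinear complexity of multiplying two matrices of this form is $\widetilde{O}(d^{\omega-1} r)$; combining with \eqref{eqn:rank restriction} yields $\rank_{\mathcal{A}}(\mu_d^{\mathcal{A}}) = \rank_{\mathcal{A}}(T_d^{\mathcal{A}}) \le \rank_{\mathcal{A}}\big(\langle r,r,r\rangle_{\mathcal{A}}|_{\mathbb{U}^d \times \mathbb{U}^d}\big) = \widetilde{O}(d^{\omega-1} r)$.

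The only subtlety — and the step I would treat most carefully — is bookkeeping the diagonal indices. The band of nonzero diagonals of $\mathbb{U}^d$ wraps around cyclically, so it is not literally "banded" in the naive sense; one has to conjugate by $\beta^{n}$ for an appropriate $n$ (as is already done inside the proof of Lemma~\ref{lem15}, producing the decomposition $Q + Q'$ with $Q$ banded and $Q'$ supported in a corner block of size $O(d)$) before the hypotheses of Lemma~\ref{lem15} apply verbatim. Since $\mathbb{U}^d\beta^n = \beta^n\mathbb{U}^d$ for the relevant $n$ (the band is centered compatibly, just as in Lemma~\ref{thm12} where $\mathbb{U}^d\beta^{-\lceil d/2\rceil} = \beta^{-\lceil d/2\rceil}\mathbb{U}^d$) and conjugation by a permutation matrix is an $\mathcal{A}$-linear isomorphism preserving the product, this introduces no real loss; it is purely a matter of choosing the shift and checking the index ranges. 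Everything else is a direct appeal to results already proved.
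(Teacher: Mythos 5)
Your proposal matches the paper's own argument: the paper justifies Corollary~\ref{coro13} precisely by noting that, via the maps $\psi,\rho,\rho'$ from the proof of Lemma~\ref{thm12}, $\mu_d^{\mathcal{A}}$ is a restriction of matrix multiplication on the cyclically banded subspace $\mathbb{U}^d$, and then invoking the bilinear-complexity bound of Lemma~\ref{lem15}. Your explicit handling of the cyclic wrap-around via conjugation by a power of $\beta$ is exactly the mechanism already inside the proof of Lemma~\ref{lem15}, so the argument is correct and essentially identical to the paper's.
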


\begin{proposition}\label{prop:average}
Let $\mathcal{k}$ be an infinite field and let $\mathcal{A}$ be an $r$-dimensional $\langle \sigma \rangle$-Galois algebra  over $\mathcal{k}$. For $d \ll r$ and $N = \Omega(r)$, we have $\operatorname{A-rank}_{\mathcal{k}}(N) = \widetilde{O}( d^{\omega - 1} r)$.
\end{proposition}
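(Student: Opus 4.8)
The plan is to deduce the estimate from Corollary~\ref{coro13}, using that extending scalars from $\mathcal{k}$ to the $r$-dimensional algebra $\mathcal{A}$ carries out $\Theta(r)$ instances of $\mu_d$ at once, and then trading bilinear operations over $\mathcal{A}$ for bilinear operations over $\mathcal{k}$. First I would reduce to $N=r$. Bilinear rank is subadditive under $\oplus$ (concatenate decompositions) and non-increasing under restriction by \eqref{eqn:rank restriction}, and for $N\ge r$ the map $\mu_d^{\oplus N}$ is a restriction of $(\mu_d^{\oplus r})^{\oplus\lceil N/r\rceil}$; hence $\rank_{\mathcal{k}}(\mu_d^{\oplus N})\le\lceil N/r\rceil\,\rank_{\mathcal{k}}(\mu_d^{\oplus r})$, and since $N=\Omega(r)$ this yields $\operatorname{A-rank}_{\mathcal{k}}(N)=O\!\big(\rank_{\mathcal{k}}(\mu_d^{\oplus r})/r\big)$. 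So it is enough to prove $\rank_{\mathcal{k}}(\mu_d^{\oplus r})=\widetilde{O}(d^{\omega-1}r^2)$.

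The heart of the argument is to realize $\mu_d^{\oplus\Theta(r)}$ as a restriction of $\mu_d^{\mathcal{A}}$ over $\mathcal{k}$. Because $\mathcal{A}$ is a $\langle\sigma\rangle$-Galois algebra, $\mathcal{A}\otimes_{\mathcal{k}}\mathcal{A}\cong\mathcal{A}^r$ as $\mathcal{A}$-algebras, and under this isomorphism $\sigma\otimes\id$ becomes the cyclic left shift $\tau$; consequently $\mathcal{A}[x,\sigma]^{\mathcal{A}}\cong(\mathcal{A}^r)[x,\tau]$, the skew polynomial ring of the totally split $\mathcal{A}$-algebra $\mathcal{A}^r$. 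For $\mathcal{P}=\sum_i P_i x^i$ and $\mathcal{Q}=\sum_j Q_j x^j$ in $(\mathcal{A}^r)[x,\tau]$ one computes that the $n$-th coordinate of the $x^k$-coefficient of $\mathcal{P}\mathcal{Q}$ is $\sum_{i+j=k}(P_i)_n(Q_j)_{n+i}$, coordinates being read mod $r$. Using $d\ll r$, partition $\{1,\dots,r\}$ into consecutive blocks of length $d+1$ and, for each block, store the coefficients of one first factor $\tilde P\in\mathcal{A}[x,\sigma]^d$ in the block's initial coordinate $n_0$ and, for $0\le s\le d$, the $\sigma^s$-twisted coefficients of one second factor $\tilde Q\in\mathcal{A}[x,\sigma]^d$ in coordinate $n_0+s$; by the displayed formula the $n_0$-th coordinate of the resulting product in $(\mathcal{A}^r)[x,\tau]$ equals the skew product $\tilde P\tilde Q$. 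Composing this $\mathcal{k}$-linear encoding with the corresponding coordinate projection exhibits $\mu_d^{\oplus\Theta(r)}$ as a restriction of $\mu_d^{\mathcal{A}}$, so by \eqref{eqn:rank restriction} (and the reduction above, applied with $r$ replaced by $\Theta(r)$) it remains to bound $\rank_{\mathcal{k}}(\mu_d^{\mathcal{A}})$.

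Finally I would descend from $\mathcal{A}$-bilinear to $\mathcal{k}$-bilinear complexity: an $\mathcal{A}$-bilinear map of $\mathcal{A}$-rank $s$ has $\mathcal{k}$-rank at most $s$ times the $\mathcal{k}$-bilinear complexity of multiplication in $\mathcal{A}$ (expand each $\mathcal{A}$-multiplication of the decomposition over $\mathcal{k}$). By the primitive element theorem (Theorem~\ref{thm:primitive element thm}), $\mathcal{A}\cong\mathcal{k}[t]/(g)$ with $\deg g=r$, and since $\mathcal{k}$ is infinite, multiplying two elements of $\mathcal{A}$ costs at most $2r-1$ bilinear operations over $\mathcal{k}$ (evaluation--interpolation followed by reduction modulo $g$, the latter being linear). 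Together with Corollary~\ref{coro13} this gives $\rank_{\mathcal{k}}(\mu_d^{\mathcal{A}})\le(2r-1)\,\rank_{\mathcal{A}}(\mu_d^{\mathcal{A}})=\widetilde{O}(d^{\omega-1}r^2)$, and feeding this back through the previous two steps yields $\operatorname{A-rank}_{\mathcal{k}}(N)=\widetilde{O}(d^{\omega-1}r)$.

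The step I expect to be the main obstacle is the packaging inside $\mu_d^{\mathcal{A}}$: one must arrange the instances so that the cyclic shift $\tau$ never mixes distinct instances and so that $\Omega(r)$ of them fit. This is exactly where $d\ll r$ is needed — each instance consumes a block of $\Theta(d)$ coordinates on the $\tilde Q$-side and the blocks must be pairwise disjoint — and it is also the place where one must check that the encoding and projection are genuinely $\mathcal{k}$-linear and independent of the inputs, so that \eqref{eqn:rank restriction} applies.
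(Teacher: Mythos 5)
Your overall route---reduce to a single batch by subadditivity, pack many independent copies of $\mu_d$ inside $\mu_d^{\mathcal{A}}$ viewed over $\mathcal{k}$, then descend from $\rank_{\mathcal{A}}(\mu_d^{\mathcal{A}})=\widetilde{O}(d^{\omega-1}r)$ (Corollary~\ref{coro13}) to $\mathcal{k}$-rank by paying $\rank_{\mathcal{k}}(\mu_{\mathcal{A}})\le 2r-1$---is exactly the paper's strategy, and your first and third steps are correct. The gap is in the packing step, precisely the point you flagged as the main obstacle. Your encoding assigns to each instance a window of $d+1$ consecutive coordinates on the $\tilde Q$-side of $\mathcal{A}\otimes_{\mathcal{k}}\mathcal{A}\cong\mathcal{A}^r$, and, as you note, these windows must be pairwise disjoint; hence at most $\lfloor r/(d+1)\rfloor=\Theta(r/d)$ instances fit, not $\Theta(r)$ --- fitting $\Omega(r)$ disjoint blocks of length $d+1$ into $r$ slots is impossible unless $d=O(1)$. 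With only $\Theta(r/d)$ copies, dividing $\rank_{\mathcal{k}}(\mu_d^{\mathcal{A}})\le(2r-1)\rank_{\mathcal{A}}(\mu_d^{\mathcal{A}})=\widetilde{O}(d^{\omega-1}r^2)$ by the number of instances only gives $\operatorname{A-rank}_{\mathcal{k}}(N)=\widetilde{O}(d^{\omega}r)$, a factor $d$ weaker than the claim, and the subadditivity reduction cannot recover it.

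The paper packs $\lceil r/2\rceil$ copies, independently of $d$, by a different mechanism: write $\mathcal{A}=\mathcal{k}[a]\simeq\mathcal{k}[t]/(g)$ (Theorem~\ref{thm:primitive element thm}) and use $\langle\lceil r/2\rceil\rangle\le\mu_{\mathcal{A}}\le\langle 2r-1\rangle$ \cite[Exercise~15.6]{burgisser2013algebraic}; tensoring the left-hand restriction with $\mu_d$ gives $\mu_d^{\oplus\lceil r/2\rceil}=\langle\lceil r/2\rceil\rangle\otimes_{\mathcal{k}}\mu_d\le\mu_{\mathcal{A}}\otimes_{\mathcal{k}}\mu_d$, which is exactly $\mu_d^{\mathcal{A}}$ viewed as a $\mathcal{k}$-bilinear map. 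In other words, the $\Theta(r)$ independent copies come from independent products hidden inside the $\mathcal{k}$-bilinear structure of $\mu_{\mathcal{A}}$ itself, not from the $\mathcal{A}$-coordinates of $\mathcal{A}^r$, where the cyclic shift $\tau$ is what forces your $d$-sized windows and kills the count. If you replace your block construction by this tensor argument, the remainder of your proof (the descent via $\mu_{\mathcal{A}}\le\langle 2r-1\rangle$ together with Corollary~\ref{coro13}) goes through essentially verbatim and yields the stated bound $\operatorname{A-rank}_{\mathcal{k}}(N)=\widetilde{O}(d^{\omega-1}r)$.
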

\begin{proof}
Since $\mathcal{A} = \mathcal{k}[a] \simeq \mathcal{k}[t]/(g(t))$, where $g$ is the minimal polynomial of $a$. By \cite[Exercise~15.6]{burgisser2013algebraic}, we have \[
\langle \lceil r/2 \rceil \rangle \le \mu_{\mathcal{A}} \le \langle 2r - 1 \rangle,
\] 
where $\mu_{\mathcal{A}}$ is the multiplication on $\mathcal{A}$ and $\langle m \rangle$ denotes the component-wise multiplication on $\mathcal{k}^{m}, m\in \mathbb{N}$. Thus we have 
\[
\langle \lceil r/2 \rceil \rangle \otimes_{\mathcal{k}} \mu_d \le \mu_{\mathcal{A}}\otimes_{\mathcal{k}} \mu_d
\] 
as $\mathcal{k}$-bilinear maps. Let $s \coloneqq \rank_{\mathcal{A}}(\mu_d^{\mathcal{A}})$. Then $\mu_d^{\mathcal{A}} \le \langle s \rangle^\mathcal{A}$ as $\mathcal{A}$-bilinear maps. We recall \cite[Subsection~15.3]{burgisser2013algebraic} that for any $\mathcal{k}$-bilinear map $T$, $T^\mathcal{A} = T \otimes_{\mathcal{k}} \mathcal{A}$ and as $\mathcal{k}$-bilinear maps. Moreover, for any $\mathcal{A}$-bilinear maps $S \le S'$, it also holds that $S \le S'$ as $\mathcal{k}$-bilinear maps. Therefore, we obtain
\[
\mu_d^{\oplus \lceil r/2 \rceil \rangle} = \langle \lceil r/2 \rceil \rangle \otimes_{\mathcal{k}} \mu_d\le \mu_{\mathcal{A}}\otimes_{\mathcal{k}}  \mu_d   \le  \mu_{\mathcal{A}}\otimes_{\mathcal{k}} \langle s \rangle \le \langle s (2r -1) \rangle
\]
as $\mathcal{k}$-bilinear maps, from which the desired upper bound of $\operatorname{A-rank}_{\mathcal{k}}(N)$ follows from Corollary~\ref{coro13}.
\end{proof}


\section{Quasi-optimal algorithms}\label{sec:algorithms}
In this section, we present quasi-optimal algorithms to compute the product of degree $d\ll r$ skew polynomials, for $\langle \sigma \rangle$-Galois algebras listed in Subsection~\ref{subsec:etale algebra}. As before, we assume without loss of generality that $d<r/3$. Moreover,
as discussed in \cite[Subsection~2.1.2]{GHS20}, we may assume that $\vert \mathcal{k} \vert>3r$. 
\subsection{Totally split algebra}\label{subsec:totally split algebra}
Let $\mathcal{A} = \mathcal{k}^r$ and let $\sigma$ be the cyclic left shift as in Subsection~\ref{subsec:etale algebra}. We denote by $\{e_i\}_{i=1}^r$ the canonical basis of $\mathcal{k}^r$. The map $\varphi$ in Lemma~\ref{lem6} can be viewed as a surjective $\mathcal{k}$-algebra homomorphism 
\begin{equation}\label{eqn:map split algbra}
\varphi: \mathcal{A}[x,\sigma] \to \mathcal{k}^{r\times r}
\end{equation} 
induced by $e_i \mapsto \alpha_i$ and $x \mapsto \beta$, where $\alpha_i$ is the matrix whose elements are all zero, except the $(i,i)$-th one, which is one and $\beta$ is the matrix defined in \eqref{eqn:cyclic permutation matrix}. We remark that given $P\in \mathcal{k}^{r\times r}$ and $m\in \mathbb{Z}$, both $\beta^m P$ and $P \beta^m$ are obtained by re-arranging elements of $P$.
We observe that there is a $\mathcal{k}$-linear map $\psi:\mathcal{k}^{r\times r} \to \mathcal{A}[x,\sigma]$ such that $\psi\circ\varphi= \id_{\mathcal{A}[x,\sigma]^{r-1}}$ where $\varphi$ is the map in \eqref{eqn:map split algbra}.
\begin{algorithm}[!hpbt]
        \SetAlgoLined
		\KwIn{ $f_1,f_2\in \mathcal{A}[x,\sigma]^d$}
		\KwOut{$f_1 f_2$}
		Compute $P_1 = \varphi(f_1)$ and $P_2 = \varphi(f_2)$\;
        Compute $P = P_1 P_2$\;
        Compute $f = \psi(P)$\;
        Return $f$.
        \caption{low degree skew polynomial multiplication for totally split algebra}
        \label{alg-1}
\end{algorithm}
 \begin{proposition}\label{prop16}
Let $\mathcal{k}$ be a field and let $\mathcal{A} = \mathcal{k}^r$ be a totally split extension of $\mathcal{k}$. For $d < r/3$, Algorithm~\ref{alg-1} computes the multiplication of elements in $\mathcal{A}[x,\sigma]^d$ by $\widetilde{O}(d^{\omega-1}r)$ arithmetic operations, where $\sigma$ is the automorphism of $\mathcal{A}$ induced by $(a_1,\dots, a_{r-1}, a_r) \mapsto (a_2,\dots, a_{r}, a_1)$.
 \end{proposition}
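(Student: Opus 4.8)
The plan is to analyze the three steps of Algorithm~\ref{alg-1} separately and check correctness and cost. First I would verify correctness: since $\varphi$ is a $\mathcal{k}$-algebra homomorphism (Lemma~\ref{lem6}), we have $\varphi(f_1 f_2) = \varphi(f_1)\varphi(f_2) = P_1 P_2 = P$. The product $f_1 f_2$ lies in $\mathcal{A}[x,\sigma]^{2d}$, and since $2d < r$ the map $\varphi$ restricted to $\mathcal{A}[x,\sigma]^{r-1}$ is injective (its kernel is $(x^r-1)$ by Lemma~\ref{lem6}, which meets $\mathcal{A}[x,\sigma]^{r-1}$ trivially); thus $\psi(P) = \psi(\varphi(f_1 f_2)) = f_1 f_2$ because $\psi\circ\varphi = \id$ on $\mathcal{A}[x,\sigma]^{r-1}$. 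This establishes that the algorithm returns the correct answer.

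Next I would bound the arithmetic cost. Steps~1 and 3 only rearrange coefficients: $\varphi$ sends $\sum_i a_i x^i$ (with $a_i\in\mathcal{k}^r$) to $\sum_i \operatorname{diag}(a_i)\beta^i$, which is a matrix whose nonzero entries are exactly the entries of the $a_i$ placed along diagonals, and $\psi$ is the corresponding inverse rearrangement; both cost $O(dr)$ operations and no multiplications. The heart of the cost is Step~2, multiplying $P_1 = \varphi(f_1)$ and $P_2 = \varphi(f_2)$. The key structural observation is that $P_i = \sum_{k=0}^{d} \operatorname{diag}(a^{(i)}_k)\beta^k$ is supported only on the diagonals indexed $0,1,\dots,d$ (in the cyclic sense $\beta^k$ shifts), i.e. $P_i$ has the banded-plus-corner shape of Lemma~\ref{lem15} with $m = d$, $k_1 = k_2 = 0$. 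Applying Lemma~\ref{lem15} with these parameters, the product $P_1 P_2$ can be computed in $\widetilde{O}(d^{\omega-1} r)$ arithmetic operations.

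Finally I would assemble the pieces: the total cost is $O(dr) + \widetilde{O}(d^{\omega-1} r) + O(dr) = \widetilde{O}(d^{\omega-1} r)$ since $\omega \ge 2$, which is the claimed bound. The main obstacle is really just making the bookkeeping in Step~2 precise — namely checking that $\varphi(f)$ for $f$ of degree $\le d$ indeed fits the hypothesis of Lemma~\ref{lem15} with band-width $m=d$ and the right values of $k_1,k_2$, so that the lemma applies verbatim. There is nothing deep beyond this; correctness hinges only on the injectivity of $\varphi$ on low-degree polynomials, which follows immediately from $2d<r$ and the description of $\ker\varphi$ in Lemma~\ref{lem6}.
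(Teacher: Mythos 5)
Your proposal is correct and takes essentially the same route as the paper: correctness from $\varphi$ being a $\mathcal{k}$-algebra homomorphism with $\psi\circ\varphi=\id$ on $\mathcal{A}[x,\sigma]^{r-1}$ (valid since $2d<r$), and the cost dominated by Step~2 via Lemma~\ref{lem15}, with the rearrangement steps costing $O(dr)$. The only caveat---present in the paper's own proof as well---is that $\varphi(f)$ has cyclic wrap-around (a lower-left corner block), so Lemma~\ref{lem15} does not apply literally with a single band $k_1=k_2=0$; rather one invokes the banded-plus-corner reduction inside its proof, which yields the same $\widetilde{O}(d^{\omega-1}r)$ bound.
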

 \begin{proof}
 Since $\varphi$ is a $\mathcal{k}$-algebra homomorphism, we have $\psi ( \varphi(f_1) \varphi(f_2) ) = \psi\circ \varphi (f_1 f_2) = f_1 f_2$. Thus Algorithm~\ref{alg-1} indeed computes $f_1 f_2$.
As for the cost of Algorithm~\ref{alg-1}, we notice that for $f(x) = \sum\limits_{i=0}^{r-1}\sum\limits_{j=0}^{d}c_{i}x^{j} \in \mathcal{A}[x,\sigma]^d$, computing $\varphi(f)$ costs $O(d r)$ operations.
Next $\varphi(f)$ is an $r\times r$ matrix of the same form as those in Lemma~\ref{lem15}, thus the multiplication of such matrices costs $\widetilde{O}(d^{\omega-1} r )$ operations. Lastly, given $P = (P_{ij}) \in \mathcal{k}^{r\times r}$, $\psi(P) = \sum_{i=0}^{r-1} (\sum_{j=0}^{r-1} P_{ij} e_i ) x^{j-i \pmod{r}}$ by definition. Thus $\psi(P)$ costs no arithmetic operations and Algorithm~\ref{alg-1} costs $\widetilde{O}( d^{\omega-1} r )$ operations.
 \end{proof}
 \subsection{Kummer extension}\label{subsec:kummer}
Let $\mathcal{A} = \mathcal{k}(a)$ be a degree-$r$ Kummer extension of $\mathcal{k}$ and let $\sigma$ be defined as in Subsection~\ref{subsec:etale algebra}. By definition, there exists some $c\in \mathcal{k}$ such that the minimal polynomial of $a$ is $t^r - c$. We also pick and fix a $r$-th primitive root of unity $\zeta$. We denote by $I_{1}, I_2$ the two-sided ideals of $\mathcal{k}\langle X, A \rangle$ generated by $X A- \zeta A X$ and $A^{r}-c$, respectively. By the same argument as in the proof of Lemma~\ref{lem11}, there is a $\mathcal{k}$-algebra isomorphism 
\begin{equation}\label{eqn:kummer iso}
\psi: \mathcal{A}[x,\sigma] \to \mathcal{k}\langle X, A\rangle/(I_{1}+I_{2}) 
\end{equation}
defined by $x \mapsto X$ and $a\mapsto A$.
\begin{lemma}    \label{lem20}
Let $I_{3}$ be the two-sided ideal of $\mathcal{k}\langle X, A \rangle$ generated by $X^{r}-1$ and $A^{r}-1$. The $\mathcal{k}$-linear map $\varphi: \mathcal{k} \langle X, A\rangle/(I_{1}+I_{3})\to \mathcal{k}^{r \times r}$ induced by $A^i \mapsto \alpha^i, X^j \mapsto \beta^j$ is a $\mathcal{k}$-algebra isomorphism. Here $ \alpha = \operatorname{diag}(\zeta^k)_{k=0}^{r-1}$ and $\beta$ is the permutation matrix defined in \eqref{eqn:cyclic permutation matrix}. 
\end{lemma}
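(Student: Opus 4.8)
The plan is to verify that $\varphi$ is well-defined, that it is a $\mathcal{k}$-algebra homomorphism, and that it is bijective; since source and target both have dimension $r^2$ over $\mathcal{k}$, it suffices to check that $\varphi$ is well-defined and surjective (or, equivalently, injective). First I would check the relations: we need $\beta\alpha - \zeta\alpha\beta = 0$, $\alpha^r = I$, and $\beta^r = I$. The last two are immediate since $\alpha = \operatorname{diag}(1,\zeta,\dots,\zeta^{r-1})$ has $r$-th power equal to $I$ (as $\zeta^r = 1$) and $\beta$ is the cyclic permutation matrix of order $r$. For the commutation relation, a direct computation on basis vectors shows $\beta\alpha = \operatorname{diag}(\zeta,\zeta^2,\dots,\zeta^{r-1},1)\cdot\beta = \zeta\,\alpha\beta$, because conjugating $\alpha$ by the shift $\beta$ cyclically rotates the diagonal entries, which is exactly multiplication of the diagonal by $\zeta$ modulo the cyclic wraparound $\zeta^r = 1$. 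Hence the two-sided ideal $I_1 + I_3$ maps to zero and $\varphi$ descends to a well-defined $\mathcal{k}$-algebra homomorphism on the quotient.

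Next I would establish that $\mathcal{k}\langle X,A\rangle/(I_1+I_3)$ is spanned over $\mathcal{k}$ by the $r^2$ monomials $A^iX^j$ with $0\le i,j\le r-1$: the relation $XA = \zeta AX$ lets one move every $A$ past every $X$ (picking up powers of $\zeta$), so any word in $X,A$ reduces to a scalar times some $A^iX^j$, and then $A^r = 1$, $X^r=1$ reduce the exponents modulo $r$. Thus $\dim_{\mathcal{k}}\big(\mathcal{k}\langle X,A\rangle/(I_1+I_3)\big)\le r^2 = \dim_{\mathcal{k}}(\mathcal{k}^{r\times r})$. It therefore remains only to show $\varphi$ is surjective. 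For this I would argue that the images $\{\alpha^i\beta^j : 0\le i,j\le r-1\}$ are linearly independent in $\mathcal{k}^{r\times r}$: a matrix $\alpha^i\beta^j$ has all its nonzero entries on the $j$-th cyclic diagonal, and restricted to that diagonal the entries are $(1,\zeta^i,\zeta^{2i},\dots)$ up to the cyclic shift; for fixed $j$, the $r$ vectors indexed by $i=0,\dots,r-1$ form a Vandermonde-type system in the distinct values $1,\zeta,\dots,\zeta^{r-1}$ (here I use that $\zeta$ is a \emph{primitive} $r$-th root of unity, so these are distinct and the Vandermonde determinant is nonzero), hence are linearly independent; matrices supported on different cyclic diagonals are trivially independent. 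So the $r^2$ matrices $\alpha^i\beta^j$ span $\mathcal{k}^{r\times r}$, $\varphi$ is surjective, and by the dimension count it is an isomorphism.

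The main obstacle is the surjectivity/independence step, i.e.\ verifying that the $\alpha^i\beta^j$ span $\mathcal{k}^{r\times r}$; this is where primitivity of $\zeta$ (and hence the hypothesis that $\mathcal{k}$ contains a primitive $r$-th root of unity, with $\operatorname{char}(\mathcal{k})\nmid r$) is genuinely used, via the nonvanishing Vandermonde determinant. Everything else is a routine relation check plus a standard monomial-reduction argument bounding the dimension of the quotient; the alternative of proving injectivity directly would require first pinning down $\dim\big(\mathcal{k}\langle X,A\rangle/(I_1+I_3)\big) = r^2$ exactly, which the spanning argument already gives as an upper bound, so the Vandermonde independence is the cleanest way to close the gap. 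I would also remark that this is the "trivial" or split form of the relation underlying \eqref{eqn:kummer iso}: replacing $A^r - c$ by $A^r - 1$ and $X^r$ by $X^r - 1$ produces the $\zeta$-analogue of Lemma~\ref{lem:modular ring}, so one could alternatively cite \cite[Proposition~30.6]{Knus98} or Lemma~\ref{lem:modular ring} (over $\overline{\mathcal{k}}$) and descend, but the explicit matrices $\alpha,\beta$ make the direct computation short enough that I would just do it by hand.
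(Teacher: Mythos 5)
Your proof is correct and follows essentially the same route as the paper's: verify the relations $\beta\alpha=\zeta\alpha\beta$, $\alpha^r=\beta^r=\I_r$ for well-definedness, note that the monomials $A^iX^j$ ($0\le i,j\le r-1$) span the quotient, and conclude bijectivity from the linear independence of the matrices $\alpha^i\beta^j$ together with a dimension count; you merely spell out the Vandermonde argument that the paper leaves as an assertion. As a side note, your orientation of the commutation relation ($\beta\alpha=\zeta\alpha\beta$, matching $XA=\zeta AX$ under $X\mapsto\beta$, $A\mapsto\alpha$) is the correct one, whereas the paper's proof writes it with the factors swapped.
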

\begin{proof}
A direct computation implies that $\alpha \beta=\zeta \beta \alpha$, $\alpha^{r}= \I_r$ and $\beta^{r}= \I_r$. Hence $\varphi$ is well-defined and is a $\mathcal{k}$-algebra homomorphism. It is surjective since $\{\alpha^{i}\beta^{j}: 0 \le i, j \le r - 1\}$ is a $\mathcal{k}$-basis of $\mathcal{k}^{r\times r}$. Moreover, $\{A^{i}X^{j}: 0 \le i, j \le r - 1\}$ is a $\mathcal{k}$-basis of $\mathcal{k} \langle X,A\rangle/(I_{1}+I_{3})$, thus $\varphi$ is injective.
\end{proof}
Since $I_1 = (XA - \zeta A X)$ is homogeneous, $\mathcal{k} \langle X,A\rangle/I_{1}$ is bi-graded. We denote $\deg_X(F)$ (resp. $\deg_A(F)$) the degree of $X$ (resp. $A$) in $f\in \mathcal{k} \langle X,A\rangle$. We say that $F$ has bi-degree $(\deg_X(F),\deg_A(F))$. Given $d ,e \in \mathbb{N}$, we denote $\left( \mathcal{k}\langle X,A\rangle/I_{1} \right)^{d,e} \coloneqq \operatorname{span}_{\mathcal{k}} \lbrace X^{i}A^{j}: 0\le i \le d, 0\le j\le e \rbrace$. Let $\pi_3$ the natural quotient map $\mathcal{k} \langle X, A\rangle/I_{1} \to \mathcal{k} \langle X, A\rangle/ (I_{1}+I_{3})$.
\begin{algorithm}[!htbp]
        
        \SetAlgoLined
		
		\KwIn{ $F_1, F_2\in \left( \mathcal{k}\langle X,A \rangle/I_{1}\right)^{d,e}$}
		\KwOut{$F_1 F_2$}
        Compute $d_1 = \deg_{X}(F_1)$, $e_1 = \deg_{A}(F_1)$, $d_2 = \deg_{X}(F_2)$, $e_2 =\deg_{A}(F_2)$\;
		Compute $G_1 = \pi_3(F_1)$ and $G_2 = \pi_3(F_2)$\;
        Compute $M_1 = \varphi(G_1)$ and $M_2 = \varphi(G_2)$\; 
        Compute $M = M_1 M_2$\; \label{alg2:step4}
        Compute $G = \varphi^{-1}(M)$\; \label{alg-2:step5}
        Find $F \in \left( \mathcal{k} \langle X,A\rangle/I_{1}\right)^{d_1+ d_2, e_1 + e_2}$ such that $\pi_3(F)= G$\;\label{alg-2:step6}
        Return $F$.
        \caption{multiplication on $\left( \mathcal{k} \langle X,A\rangle/I_{1} \right)^{d,e}$}
        \label{alg-2}
	\end{algorithm}
\begin{lemma}\label{lem21}
Let $\mathcal{k}$ be a field containing a primitive $r$-th root of unity and let $d, e < r/3$ be positive integers. Algorithm~\ref{alg-2} computes the multiplication of two elements in $\left( \mathcal{k}\langle X, A\rangle/I_{1} \right)^{d,e}$ by $\widetilde{O}(d^{\omega-1}r)$ arithmetic operations. 
\end{lemma}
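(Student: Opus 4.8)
The plan is to analyze Algorithm~\ref{alg-2} step by step and bound the cost of each line, then argue correctness. The overall strategy mirrors the proof of Proposition~\ref{prop16}: we transport the multiplication problem into the matrix algebra $\mathcal{k}^{r\times r}$ via the isomorphism $\varphi$ of Lemma~\ref{lem20}, perform a cheap (banded-type) matrix multiplication there, and pull back. First I would observe that since $F_1, F_2 \in \left(\mathcal{k}\langle X,A\rangle/I_1\right)^{d,e}$ with $d,e < r/3$, the bi-degrees satisfy $d_1,d_2 \le d < r/3$ and $e_1,e_2 \le e < r/3$, so $d_1+d_2 < r$ and $e_1+e_2 < r$; hence the quotient map $\pi_3$ is injective on $\left(\mathcal{k}\langle X,A\rangle/I_1\right)^{d_1+d_2,\,e_1+e_2}$ (because the monomials $X^iA^j$ with $0\le i,j\le r-1$ form a basis of $\mathcal{k}\langle X,A\rangle/(I_1+I_3)$ by Lemma~\ref{lem20}), which makes Step~\ref{alg-2:step6} well-defined: there is a unique such $F$, and it equals $F_1F_2$ in $\mathcal{k}\langle X,A\rangle/I_1$ since $\pi_3(F_1F_2) = \pi_3(F_1)\pi_3(F_2) = G_1G_2 = \varphi^{-1}(M_1M_2) = G = \pi_3(F)$. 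This establishes correctness.

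Next I would bound the arithmetic cost. Computing the bi-degrees is free. Computing $G_i = \pi_3(F_i)$ is a matter of reading off coefficients (no arithmetic, since the monomial bases are compatible). Applying $\varphi$ to $G_i$: because $\alpha$ is diagonal and $\beta$ is the cyclic permutation matrix, $\varphi(A^iX^j) = \alpha^i\beta^j$ is a matrix supported on a single diagonal (the $j$-th, cyclically), with entries $\zeta^{ki}$ along it; so $M_i = \varphi(G_i)$ is an $r\times r$ matrix whose nonzero entries lie on at most $d_i+1 \le d+1$ cyclic diagonals, and assembling it costs $O(dr)$ operations (one needs the powers $\zeta^{k}$, precomputed in $O(r)$, and then $O(r)$ multiplications per diagonal). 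The key point is then that $M_1, M_2$ are exactly of the ``windowed'' shape covered by Lemma~\ref{lem15}: each is supported, after a suitable cyclic shift $\beta^{n_i}$, on $O(d)$ consecutive diagonals, i.e.\ $M_i^{(s)} = 0$ outside a window of length $\le d+1$. Therefore Step~\ref{alg2:step4}, the product $M = M_1M_2$, costs $\widetilde{O}(d^{\omega-1}r)$ by Lemma~\ref{lem15}. The result $M$ is supported on $O(d)$ diagonals as well, so $\varphi^{-1}(M)$ in Step~\ref{alg-2:step5} is recovered by reading off, for each relevant diagonal of $M$, the coefficients of $A^iX^j$; inverting the Vandermonde-type relation $M^{(j)}_{k} = \sum_i c_{ij}\zeta^{ki}$ along one diagonal of length $\le r$ to get the $c_{ij}$ is a transposed-Vandermonde solve, doable in $\widetilde{O}(r)$ per diagonal, hence $\widetilde{O}(dr)$ total; and Step~\ref{alg-2:step6} is again a free coefficient identification. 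Summing, the dominant term is $\widetilde{O}(d^{\omega-1}r)$.

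The main obstacle I anticipate is making precise that $M_1, M_2$ (and $M$) genuinely fall under the hypotheses of Lemma~\ref{lem15} and that the passages $G_i \mapsto M_i$ and $M \mapsto G$ cost only $\widetilde{O}(dr)$ rather than something larger. Specifically, one must check that the cyclic wrap-around in $\beta^j$ (which sends the ``$j$-th diagonal'' to positions $k-j \equiv$ const mod $r$) still leaves the support inside two short windows as in the $Q + Q'$ decomposition of Lemma~\ref{lem15}, and that recovering $c_{ij}$ from the diagonals of $M$ is a structured (Vandermonde/transposed-Vandermonde) problem solvable in quasi-linear time using $\zeta$ being a primitive $r$-th root of unity — this is where the hypothesis ``$\mathcal{k}$ contains a primitive $r$-th root of unity'' is used, both for $\varphi$ to be defined (Lemma~\ref{lem20}) and for the interpolation to be fast. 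Everything else is bookkeeping parallel to Proposition~\ref{prop16}.
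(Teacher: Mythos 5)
Your proposal is correct and follows essentially the same route as the paper's proof: correctness via the homomorphism property of $\pi_3$ and $\varphi$ together with the observation that low bi-degree forces uniqueness in Step~\ref{alg-2:step6} (your injectivity-of-$\pi_3$ phrasing is equivalent to the paper's bi-degree argument on $(I_1+I_3)/I_1$), and the cost analysis via Lemma~\ref{lem15} for the matrix product plus fast evaluation/interpolation at roots of unity for the transfers $G_i\mapsto M_i$ and $M\mapsto G$. The only differences are cosmetic bookkeeping (e.g.\ your $\widetilde{O}(dr)$ for Step~\ref{alg-2:step5} versus the paper's $\widetilde{O}(de)$, and assembling $M_i$ really being $d+1$ FFT evaluations, hence $\widetilde{O}(dr)$), all of which stay within the claimed $\widetilde{O}(d^{\omega-1}r)$ bound.
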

\begin{proof}
We first prove that $F = F_1 F_2$. By definition, $F_1 F_2$ has bi-degree $(d_1 + d_2, e_1 + e_2)$. Since $\pi_3$ and $\varphi$ are $\mathcal{k}$-algebra homomorphisms and $\varphi$ is an isomorphism, $\pi_3(F_1 F_2)=\varphi^{-1}(\varphi(\pi_3(F_1))\varphi(\pi_3(F_2))) = G$. Hence $F_1 F_2$ is a solution for Step~\ref{alg-2:step6} in Algorithm~\ref{alg-2}. Thus it suffices to prove that it is also unique. Indeed, if $\pi_3(F') =\pi_3(F) = G$ for some $F,F'\in \left( \mathcal{k} \langle X,A\rangle/I_{1}\right)^{d_1+ d_2, e_1 + e_2}$, then $F'-F\in (I_1 + I_{3})/I_1$.
Since a nonzero element in $(I_1 + I_{3})/I_1$ has bi degree $(l,m)$ where $\min\{l,m\} \ge r > \max\{3d, 3 e\} >\max\{d_1 + d_2, e_1 + e_2\}$, we obtain $F' = F$.

Next we analyze the complexity. Clearly, the cost of the first three steps of Algorithm~\ref{alg-2} is $O(dr)$. Since $F_1, F_2 \in \left(\mathcal{k}\langle X,A\rangle/I_{1}\right)^{d,r-1}$, we may deduce from the definition of $\varphi$ in Lemma~\ref{lem20} that matrices $M_1, M_2$ in Step~\ref{alg2:step4} are of the same form as those in Lemma~\ref{lem15}. Thus $M = M_1 M_2$ can be computed by $\widetilde{O}(d^{\omega-1}r)$ arithmetic operations.
Step~\ref{alg-2:step5} is equivalent to determining $\mu_{ij}\in \mathcal{k}$ such that $M =\sum\limits_{i=0}^{2d}\sum\limits_{j=0}^{2e}\mu_{ij} \alpha^{j} \beta^{i}$, where $\alpha$ and $\beta$ are matrices given in Lemma~\ref{lem20}. 
By definition of $\alpha$ and $\beta$, it is sufficient to find polynomials $f_1,\dots, f_{2d+1} \in \mathcal{k}[t]$ of degree at most $2e$ such that
\begin{equation}\label{lem:kummer:eqn:interpolation}
    f_{k}(\zeta^{l})= M_{l+1, l +k},\quad 0 \le l \le r-1, 1 \le k \le 2d+1.
\end{equation}
Here $l  +k$ is understood as $l  + k -r$ if $l + k > r$. 
Since $2e + 1 < r$, the interpolation problem \eqref{lem:kummer:eqn:interpolation} in general has no solution. However, the correctness of Algorithm~\ref{alg-2} ensures that \eqref{lem:kummer:eqn:interpolation} is solvable. Hence we can compute $f_{k}$'s by polynomial interpolation at $2e+1$ points $\zeta^i, 0 \le i \le 2e$. Each interpolation can be done by a fast Fourier transform, which costs $\widetilde{O}(e)$ arithmetic operations \cite[Theorem~2.6]{burgisser2013algebraic}. Thus, the cost of of Step~\ref{alg-2:step5} is $\widetilde{O}(de)$ arithmetic operations. 
It is obvious that Step~\ref{alg-2:step6} only costs $O(dr)$ operations. 

Lastly, since $d, e < r$, the above analysis implies that the total cost of Algorithm~\ref{alg-2} is $\widetilde{O}(d^{\omega - 1} r)$. 
\end{proof}
Let $\pi_2$ be the natural quotient map $\mathcal{k} \langle X, A\rangle/I_{1} \to \mathcal{k} \langle X, A\rangle/I_{1}+I_{2}$. We denote 
\[
\mathcal{A}[x,\sigma]^{d,e} \coloneqq \operatorname{span}_{\mathcal{k}} \lbrace x^{i}a^{j}: 0\le i \le d, 0\le j\le e \rbrace.
\] 
In particular, we have $\mathcal{A}[x,\sigma]^{d,r-1} = \mathcal{A}[x,\sigma]^{d}$.
\begin{algorithm}[h]
        \SetAlgoLined
		\KwIn{ $f_1, f_2\in \mathcal{k}[x,\sigma]^d$}
		\KwOut{$f_1 f_2$}
        Compute $H_1 = \psi(f_1)$ and $H_2 = \psi(f_2)$\; \label{alg-3:step1}
        Find $F_1,F_2\in \left( K[X,A]/I_{1}\right)^{d, r-1}$ such that $\pi_2(F_1)=H_1$ and $\pi_2(F_2)=H_2$\;
       Find $F_{11}, F_{12},F_{13}\in \left( K[X,A]/I_{1}\right)^{d, \lfloor r/3 \rfloor}$ such that $F_1 = F_{11}+ A^{\lceil r/3\rceil}F_{12}+A^{2\lceil r/3\rceil}F_{13}$\;
       Find $F_{21}, F_{22},F_{23}\in \left( K[X,A]/I_{1}\right)^{d, \lfloor r/3 \rfloor}$ such that $F_2 = F_{21}+ A^{\lceil r/3\rceil}F_{22}+A^{2\lceil r/3\rceil}F_{23}$\;\label{alg-3:step4}
        For $1\le i, j \le 3$, compute $G_{ij} = F_{1i} F_{2j}$ \Comment{Algorithm~\ref{alg-2}}\; \label{alg-3:step5}
		Compute $F  =\sum\limits_{i=1}^{3}\sum\limits_{j=1}^{3} A^{(i-1)\lceil r/3\rceil}G_{ij} A^{(j-1)\lceil r/3\rceil}$\;\label{alg-3:step6}
		Compute $f = \psi^{-1}(\pi_2(F))$ \; \label{alg-3:step7}
        Return $f$.
        \caption{low degree skew polynomial multiplication for Kummer extension}
           \label{alg-3}
	\end{algorithm}
\begin{proposition}\label{thm-22}
Let $\mathcal{k}$ be a field containing a primitive $r$-th root of unity $\zeta$ and let $\mathcal{A} = \mathcal{k}(a)$ be a degree-$r$ Kummer extension of $\mathcal{k}$. For $d < r/3$, Algorithm~\ref{alg-3} computes the multiplication of elements in $\mathcal{A}[x,\sigma]^d$ by $\widetilde{O}(d^{\omega-1}r)$ arithmetic operations, where $\sigma$ is the automorphism of $\mathcal{A}$ induced by $a \mapsto \zeta a$.
\end{proposition}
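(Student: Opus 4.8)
The plan is to walk through Algorithm~\ref{alg-3} line by line, establishing correctness and then the complexity bound, the latter being dominated by the nine multiplications performed by Algorithm~\ref{alg-2} in Step~\ref{alg-3:step5}. The guiding observation is that $\psi$ of \eqref{eqn:kummer iso} is a $\mathcal{k}$-algebra isomorphism and that $\pi_2$ is a $\mathcal{k}$-algebra homomorphism, so everything reduces to checking that the element $F$ assembled in Step~\ref{alg-3:step6} equals $F_1 F_2$ inside $\mathcal{k}\langle X,A\rangle/I_1$.

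For correctness I would first record that $\pi_2(F_i) = H_i = \psi(f_i)$; granting $F = F_1 F_2$, this gives $f = \psi^{-1}(\pi_2(F)) = \psi^{-1}(\pi_2(F_1)\pi_2(F_2)) = \psi^{-1}(\psi(f_1)\psi(f_2)) = f_1 f_2$, as wanted. To prove $F = F_1 F_2$, I would substitute the three-block decompositions of $F_1$ and $F_2$ (Step~\ref{alg-3:step4} and the analogous line for $F_1$) into the product and expand by distributivity: arranging the blocks so that $F_1$ carries the powers of $A$ on the left and $F_2$ carries them on the right, the $(i,j)$ cross term is exactly $A^{(i-1)\lceil r/3\rceil}F_{1i}F_{2j}A^{(j-1)\lceil r/3\rceil} = A^{(i-1)\lceil r/3\rceil}G_{ij}A^{(j-1)\lceil r/3\rceil}$, and summing over $1\le i,j\le 3$ reconstructs $F$. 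The step I expect to cost the most care is the non-commutativity of $\mathcal{k}\langle X,A\rangle/I_1$: sliding a power of $A$ across a power of $X$ multiplies by a power of $\zeta$, so the coefficients defining the blocks must absorb the appropriate $\zeta$-twists, and the powers of $A$ must sit on opposite sides of $F_1$ and $F_2$ precisely so that in each cross term they can be pulled to the outside without leaving a residual twist — getting this bookkeeping right is the heart of the proof. I would also verify that Algorithm~\ref{alg-2} is legitimately called in Step~\ref{alg-3:step5}: each $F_{1i}$ and $F_{2j}$ belongs to $(\mathcal{k}\langle X,A\rangle/I_1)^{d,\lfloor r/3\rfloor}$ with $d < r/3$, so Lemma~\ref{lem21} applies and returns $G_{ij} = F_{1i}F_{2j}$ (in the boundary case $3 \mid r$ one splits the $A$-degree range $\{0,\dots,r-1\}$ into three blocks each of length strictly less than $r/3$, which changes nothing else).

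For the complexity I would bound each line. Step~\ref{alg-3:step1}, the two lifts to $\mathcal{k}\langle X,A\rangle/I_1$, the two splittings, and Step~\ref{alg-3:step6} are all mere rearrangements and $\zeta$-rescalings of $O(dr)$ coefficients, hence $O(dr)$ apiece; Step~\ref{alg-3:step7} applies $\pi_2$, i.e.\ the reduction $A^r \mapsto c$ (needed at most once per monomial since $\deg_A(F) \le 2(r-1)$ while $\deg_X(F) \le 2d$), followed by the relabeling $\psi^{-1}$, again $O(dr)$. The only superlinear term is Step~\ref{alg-3:step5}: nine calls to Algorithm~\ref{alg-2}, each $\widetilde{O}(d^{\omega-1}r)$ by Lemma~\ref{lem21}. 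Summing, Algorithm~\ref{alg-3} runs in $\widetilde{O}(d^{\omega-1}r)$ arithmetic operations over $\mathcal{k}$, which is the claimed bound.
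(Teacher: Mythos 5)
Your proposal is correct and follows essentially the same route as the paper: correctness rests on $\pi_2$ being a $\mathcal{k}$-algebra homomorphism and $\psi$ a $\mathcal{k}$-algebra isomorphism, and the cost is dominated by the nine calls to Algorithm~\ref{alg-2} in Step~\ref{alg-3:step5}, each $\widetilde{O}(d^{\omega-1}r)$ by Lemma~\ref{lem21}, with every other step costing $O(dr)$. Your extra care about non-commutativity is well placed: for the formula in Step~\ref{alg-3:step6} to reproduce $F_1F_2$ verbatim, the $A$-powers must indeed sit on the left of the $F_{1i}$ and on the right of the $F_{2j}$ (as you arrange them, at the price of an $O(dr)$ $\zeta$-twist), a bookkeeping point that the printed Step~\ref{alg-3:step4} and the paper's terse proof gloss over.
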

\begin{proof}
The correctness of Algorithm~\ref{alg-3} follows from the facts that $\pi_2$ is a $\mathcal{k}$-algebra homomorphism and that $\psi$ is a $\mathcal{k}$-algebra isomorphism. It is obvious that the cost of Step~\ref{alg-3:step5} in Algorithm~\ref{alg-3} is $\widetilde{O}(d^{\omega-1}r)$, while the total cost of other steps is $O(dr)$. Thus Algorithm~\ref{alg-3} costs $\widetilde{O}(d^{\omega-1}r)$ arithmetic operations.
\end{proof}
\subsection{Artin extension}\label{subsec:artin extension}
Assume that $\mathcal{k}$ is a field of characteristic $r > 0$. Let $\mathcal{A} = \mathcal{k}(a)$ be a degree-$r$ Artin extension of $\mathcal{k}$ and let $\sigma$ be defined as in Subsection~\ref{subsec:etale algebra}. By definition, the minimal polynomial of $a$ is $t^r - t - c \in \mathcal{k}[t]$ for some $c\in \mathcal{k}$. Suppose that $J_{1}, J_2$ are the two-sided ideals of $\mathcal{k}\langle X, A \rangle$ generated by $XA-(A+1)X$ and $A^{r}- A -c$ respectively. Similar to the case of Kummer extension discussed in Subsection~\ref{subsec:kummer}, we have a $\mathcal{k}$-algebra isomorphism 
\begin{equation}\label{eqn:artin iso}
\psi: \mathcal{A}[x,\sigma] \to \mathcal{k}\langle X, A\rangle/(J_{1}+J_{2}) 
\end{equation}
defined by $x \mapsto X$ and $a\mapsto A$.
As in Subsection~\ref{subsec:kummer}, given $d,e\in \mathbb{N}$, we use $\mathcal{A}[x,\sigma]^{d}$, $\mathcal{A}[x,\sigma]^{d,e}$ and $\left( \mathcal{k}\langle X,A\rangle/J_{1} \right)^{d,e}$ to denote the $\mathcal{k}$-subspaces spanned by low degree elements.

\begin{lemma}\label{lem26}
Let $J_{3}$ be the two-sided ideal generated by $X^{r}-1$ and $A^{r}-A$.
     The map $\varphi: \mathcal{k}\langle X,A\rangle/(J_{1}+J_{3})\to \mathcal{k}^{r\times r}$ defined by $A^i \to \alpha^i, X^j \to \beta^j$ is a $\mathcal{k}$-algebra isomorphism, where $\alpha = \operatorname{diag} (k)_{k=0}^{r-1}$ and $\beta$ is the permutation matrix defined in \eqref{eqn:cyclic permutation matrix}.
\end{lemma}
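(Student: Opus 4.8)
The plan is to mirror the proof of Lemma~\ref{lem20} for the Kummer case, replacing the relation $XA = \zeta AX$ by $XA = (A+1)X$ and the group structure $A^r = 1$ by the additive structure $A^r = A$. First I would verify the defining relations at the level of matrices: with $\alpha = \operatorname{diag}(k)_{k=0}^{r-1}$ and $\beta$ the cyclic permutation matrix from \eqref{eqn:cyclic permutation matrix}, a direct computation shows that $\beta \alpha = (\alpha + \I_r)\beta$ — because conjugating $\alpha$ by $\beta$ shifts each diagonal entry $k$ to $k+1$ (modulo $r$), and since $\operatorname{char}(\mathcal{k}) = r$ the entry $r-1$ wraps to $0 \equiv r$, so the shift is exactly "add $1$" everywhere. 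Likewise $\alpha^r = \alpha$ since each eigenvalue $k \in \FF_r$ satisfies $k^r = k$, and $\beta^r = \I_r$. Hence the assignment $A \mapsto \alpha$, $X \mapsto \beta$ kills the generators $XA - (A+1)X$, $X^r - 1$, $A^r - A$ of $J_1 + J_3$, so $\varphi$ is a well-defined $\mathcal{k}$-algebra homomorphism.

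Next I would establish surjectivity and injectivity by a dimension count, exactly as in Lemma~\ref{lem20}. Using the relation $XA = (A+1)X$ repeatedly, any element of $\mathcal{k}\langle X, A\rangle/J_1$ can be rewritten with all $X$'s to the right of all $A$'s, and then the relations $X^r = 1$, $A^r = A$ reduce exponents, so $\{A^i X^j : 0 \le i, j \le r-1\}$ spans $\mathcal{k}\langle X,A\rangle/(J_1 + J_3)$, giving $\dim_{\mathcal{k}} \le r^2$. On the target side, $\{\alpha^i \beta^j : 0 \le i, j \le r-1\}$ is a $\mathcal{k}$-basis of $\mathcal{k}^{r\times r}$: since $\alpha$ has $r$ distinct eigenvalues $0, 1, \dots, r-1$ in $\FF_r \subseteq \mathcal{k}$, the powers $\alpha^0, \dots, \alpha^{r-1}$ span the full diagonal algebra, and multiplying on the right by $\beta^j$ moves that diagonal subspace onto the $j$-th "cyclic diagonal"; these $r$ subspaces are complementary and together fill $\mathcal{k}^{r\times r}$. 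Therefore $\varphi$ is surjective, and comparing dimensions forces it to be an isomorphism (and in particular $\{A^i X^j\}$ is in fact a basis of the source).

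The only genuinely case-specific point — and the one I expect to require the most care — is the verification that conjugation by $\beta$ implements $\alpha \mapsto \alpha + \I_r$ \emph{including the wrap-around entry}, which is precisely where $\operatorname{char}(\mathcal{k}) = r$ enters: the bottom-left corner of $\beta$ sends the eigenvalue $r-1$ to $r = 0$ in $\mathcal{k}$, matching $(r-1)+1 = 0$. In characteristic zero this would fail, which is why the Artin construction requires $\operatorname{char}(\mathcal{k}) = r$. All remaining steps are routine and parallel to the Kummer case, so I would state them briefly and refer to Lemma~\ref{lem20} for the analogous bookkeeping.
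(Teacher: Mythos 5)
Your proposal is correct and follows essentially the same route as the paper: verify the matrix relations $\beta\alpha=(\alpha+\I_r)\beta$, $\alpha^r=\alpha$, $\beta^r=\I_r$ to get a well-defined homomorphism, then conclude bijectivity by comparing the spanning set $\{A^iX^j\}$ of the source with the fact that $\{\alpha^i\beta^j:0\le i,j\le r-1\}$ is a basis of $\mathcal{k}^{r\times r}$, exactly as in the Kummer case (Lemma~\ref{lem20}). Your version merely spells out the Vandermonde/cyclic-diagonal argument and the role of $\operatorname{char}(\mathcal{k})=r$ in the wrap-around, which the paper leaves implicit.
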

\begin{proof}
A direct computation shows that $\beta \alpha = (\alpha + 1) \beta$, $\beta^{r}= \I_r$ and $\alpha^{r}= \alpha$. This implies that $\varphi$ is a well-defined $\mathcal{k}$-algebra homomorphism. Moreover, $\varphi$ is bijective since $\{X^{i}A^{j}: 0 \le i, j \le r-1\}$ and $\{\beta^{i}\alpha^{j}: 0 \le i, j \le r-1\}$ are $\mathcal{k}$-bases of $\mathcal{k}\langle X, A \rangle$ and $\mathbb{k}^{r \times r}$, respectively.
\end{proof}
Let $\pi_3: \mathcal{k}\langle X, A\rangle/J_{1} \to \mathcal{k}\langle X, A\rangle/ (J_{1}+J_{3})$ be the natural quotient map. Given $F \in \mathcal{k} \langle X,A\rangle/J_{1}$, we denote by $(\deg_X(F),\deg_A(F))$ the bi-degree of $F$. By the same proof as that for Lemma~\ref{lem21}, we obtain the lemma that follows. 
\begin{algorithm}[h]
        \SetAlgoLined
		\KwIn{ $F_1 , F_2 \in \left(\mathcal{k}\langle X,A\rangle/J_{1}\right)^{d,e}$}
		\KwOut{$F_1 F_2$}
        Compute $d_1 = \deg_{X}(F_1)$, $e_1 = \deg_{A}(F_2)$, $d_2 = \deg_{X}(F_2)$, $e_2 = \deg_{A}(F_2)$\;
		Compute $G_1 = \pi_3(F_1)$ and $G_2 = \pi_3(F_2)$\;
        Compute $M_1 = \varphi(G_1)$ and $M_2= \varphi(G_2)$\;
        Compute $M = M_1 M_2$\;
        Compute $G = \varphi^{-1}(M)$\;
        Find $F \in \left( \mathcal{k}\langle X,A\rangle/J_{1} \right)^{d_1 + d_2, e_1 + e_2}$ such that $\pi_3(F)=\varphi^{-1}(G)$\;
        Return $F$.
        \caption{multiplication on $\left( \mathcal{k} \langle X,A\rangle/J_{1} \right)^{d,e}$}
        \label{alg-4}
	\end{algorithm}
\begin{lemma}
Let $\mathcal{k}$ be a field of characteristic $r > 0$ and let $d, e < r/3$ be positive integers. Algorithm~\ref{alg-4} computes the multiplication of two elements in $\left( \mathcal{k}\langle X, A\rangle/J_{1} \right)^{d,e}$ by $\widetilde{O}(d^{\omega-1}r)$ arithmetic operations. 
\label{lem27}
\end{lemma}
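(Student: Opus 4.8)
The plan is to mirror the proof of Lemma~\ref{lem21} almost verbatim, replacing the Kummer data $(I_1,I_3,\varphi)$ and the diagonal matrix $\alpha=\operatorname{diag}(\zeta^k)$ by the Artin data $(J_1,J_3,\varphi)$ and $\alpha=\operatorname{diag}(k)_{k=0}^{r-1}$ from Lemma~\ref{lem26}. First I would establish correctness: since $\pi_3$ is a $\mathcal{k}$-algebra homomorphism and $\varphi$ is a $\mathcal{k}$-algebra isomorphism (Lemma~\ref{lem26}), we have $\pi_3(F_1F_2)=\varphi^{-1}\bigl(\varphi(\pi_3(F_1))\varphi(\pi_3(F_2))\bigr)=G$, so $F_1F_2$ is \emph{a} solution for the last step; uniqueness follows exactly as before, because any two solutions $F,F'\in(\mathcal{k}\langle X,A\rangle/J_1)^{d_1+d_2,e_1+e_2}$ differ by an element of $(J_1+J_3)/J_1$, and a nonzero element there has bi-degree $(l,m)$ with $\min\{l,m\}\ge r>\max\{3d,3e\}>\max\{d_1+d_2,e_1+e_2\}$ since $d,e<r/3$. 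Note $F_1F_2$ does have bi-degree $(d_1+d_2,e_1+e_2)$ because $J_1$ is homogeneous and $X A=(A+1)X$ does not raise the $A$-degree.

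Next I would bound the cost step by step. The first three steps (computing bi-degrees, $\pi_3$, $\varphi$) cost $O(dr)$. Since $F_1,F_2\in(\mathcal{k}\langle X,A\rangle/J_1)^{d,r-1}$, the matrices $M_1,M_2=\varphi(G_i)$ are of the banded-plus-corner form analyzed in Lemma~\ref{lem15} (with band width $O(d)$), so $M=M_1M_2$ costs $\widetilde{O}(d^{\omega-1}r)$ operations. For the inverse map $\varphi^{-1}$, one must recover $\mu_{ij}\in\mathcal{k}$ with $M=\sum_{i=0}^{2d}\sum_{j=0}^{2e}\mu_{ij}\alpha^j\beta^i$; writing $\beta^i$ out, this reduces to finding polynomials $f_1,\dots,f_{2d+1}\in\mathcal{k}[t]$ of degree at most $2e$ with $f_k(l)=M_{l+1,\,l+k}$ for $0\le l\le r-1$, indices taken mod $r$ — here the evaluation points are $0,1,\dots,r-1\in\mathcal{k}$ rather than roots of unity, but they are still $r>2e$ distinct elements of $\mathcal{k}$, and correctness of the algorithm guarantees the (overdetermined) interpolation system is consistent. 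Each $f_k$ is obtained by polynomial interpolation at $2e+1$ points, costing $\widetilde{O}(e)$ operations by \cite[Theorem~2.6]{burgisser2013algebraic}, for a total of $\widetilde{O}(de)$. The last step costs $O(dr)$. Since $d,e<r$, summing gives $\widetilde{O}(d^{\omega-1}r)$.

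The only point needing care — and the closest thing to an obstacle — is the interpolation substep: over a field of characteristic $r$, the nodes $0,1,\dots,r-1$ are exactly the $r$ distinct elements of the prime field $\FF_r\subseteq\mathcal{k}$, so they are pairwise distinct and interpolation at any $2e+1\le r$ of them is well posed; moreover the FFT-based fast interpolation of \cite[Theorem~2.6]{burgisser2013algebraic} still applies (it does not require the nodes to be roots of unity, only a fast evaluation/interpolation scheme at a fixed point set, which a geometric/arithmetic progression supplies). I would also remark that the bilinear-complexity version follows identically, since every step except the matrix product is $\mathcal{k}$-linear and the matrix product has bilinear complexity $\widetilde{O}(d^{\omega-1}r)$ by Lemma~\ref{lem15}. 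With these observations the proof reduces to the bookkeeping already carried out for Lemma~\ref{lem21}, so I would simply write ``by the same proof as that for Lemma~\ref{lem21}'' and point out the two differences above.
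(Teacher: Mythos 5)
Your proposal is correct and is essentially the paper's own argument: the paper simply states that Lemma~\ref{lem27} follows ``by the same proof as that for Lemma~\ref{lem21}'', with the Artin data of Lemma~\ref{lem26} in place of the Kummer data. Your extra remark about interpolating at the nodes $0,1,\dots,r-1$ of the prime field is the right point to check (the paper handles exactly this in Lemma~\ref{lem28} via \cite[Corollaries~3.20 and 3.22]{burgisser2013algebraic}, which is the more apt citation than Theorem~2.6), and it does not change the $\widetilde{O}(d^{\omega-1}r)$ bound.
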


Next we discuss the cost of rewriting $\sum_{i=0}^{d} \sum_{j=0}^{r-1} \lambda_{ij} X^i A^j$ in $\left( \mathcal{k} \langle X, A \rangle/J_1 \right)^{d,r-1}$ in the form $\sum_{i=0}^{d}\sum_{j=0}^{r-1}\mu_{ij} A^{j}X^{i}$. We remark that in the case of Kummer extension discussed in Subsection~\ref{subsec:kummer}, it is easy to see that rewriting $\sum_{i=0}^{d} \sum_{j=0}^{r-1} \lambda_{ij} X^i A^j\in \left( \mathcal{k} \langle X, A \rangle/I_1 \right)^{d,r-1}$  as $\sum_{i=0}^{d}\sum_{j=0}^{r-1}\mu_{ij} A^{j}X^{i}$ costs $\widetilde{O}(dr)$ arithmetic operations, since $\mu_{ij} = \lambda_{ij} \zeta^{ij}$ by $X A = A X$. However, the rewriting procedure becomes more complicated in the case of Artin extension, since $XA = AX + X$ in $\mathcal{k} \langle X, A \rangle/J_1$.   
\begin{algorithm}[!hbpt]
        \SetAlgoLined
		\KwIn{$F = \sum_{i=0}^{d}\sum_{j=0}^{r-1}\lambda_{ij} X^{i}A^{j}\in \left( \mathcal{k} \langle X, A \rangle/J_1 \right)^{d,r-1}$ }
		\KwOut{$\mu_{ij}\in \mathcal{k}$ such that $F = \sum_{i=0}^{d}\sum_{j=0}^{r-1}\mu_{ij}A^{j} X^{i}$}
        Compute $M = \varphi (\pi_3(F))$ \Comment{fast polynomial evaluation}\; \label{alg-5:step1}
        Compute $\mu_{ij}$ from $M$ \Comment{fast polynomial interpolation}\;\label{alg-5:step2}
        Return $\mu_{ij}$.
        \caption{Exchange $X$ and $A$ in $\left( \mathcal{k} \langle X, A \rangle/J_1 \right)^{d,r-1}$}
        \label{alg-5}
	\end{algorithm}
 \begin{lemma}\label{lem28}
 Given $F =\sum_{i=0}^{d}\sum_{j=0}^{r-1}\lambda_{ij}X^{i} A^{j}$ in $\mathcal{k} \langle X, A \rangle/J_1$, Algorithm~\ref{alg-5} computes $\mu_{ij}\in \mathcal{k}, 0 \le i \le d, 0 \le j \le r-1$ such that $F=\sum_{i=0}^{d}\sum_{j=0}^{r-1}\mu_{ij}A^{j}X^{i}$ by $\widetilde{O}(dr)$ arithmetic operations. Similarly, if $F =\sum_{i=0}^{d}\sum_{j=0}^{r-1}\mu_{ij}z^{j}y^{i}\in  \left( \mathcal{k} \langle X, A \rangle/J_1 \right)^{d,r-1}$ is given, then one can rewrite $F$ as $\sum_{i=0}^{d}\sum_{j=0}^{r-1}\lambda_{ij}y^{i}z^{j}$ by $\widetilde{O}(rd)$ arithmetic operations as well. 
 \end{lemma}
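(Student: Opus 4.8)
The plan is to mimic the strategy of Algorithm~\ref{alg-2} (and Step~\ref{alg-2:step5} in the proof of Lemma~\ref{lem21}): pass through the matrix realization $\varphi$ of Lemma~\ref{lem26}, where $X$ and $A$ become the concrete matrices $\beta$ and $\alpha = \operatorname{diag}(0,1,\dots,r-1)$, and then read off the coefficients $\mu_{ij}$ by polynomial interpolation. The key observation is that the entries of $\varphi(\pi_3(F))$ are controlled by univariate polynomials evaluated at the points $0,1,\dots,r-1 \in \mathcal{k}$ (these are distinct since $\operatorname{char}(\mathcal{k}) = r$), so both directions of the rewriting reduce to fast evaluation and fast interpolation of $O(d)$ polynomials of degree $< r$.

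First I would analyze Step~\ref{alg-5:step1}. Writing $F = \sum_{i=0}^{d} X^i f_i(A)$ with $f_i \in \mathcal{k}[t]$ of degree $< r$, we have $\varphi(\pi_3(F)) = \sum_{i=0}^{d} \beta^i f_i(\alpha)$. Since $f_i(\alpha) = \operatorname{diag}(f_i(0),\dots,f_i(r-1))$ and $\beta^i$ is a cyclic shift, $M \coloneqq \varphi(\pi_3(F))$ has the property that its $i$-th ``wrap-around diagonal'' consists of the values $f_i(0),\dots,f_i(r-1)$ (with indices read modulo $r$). Hence computing $M$ amounts to evaluating each $f_i$ at the $r$ points $0,1,\dots,r-1$; by multipoint evaluation \cite[Theorem~2.6]{burgisser2013algebraic} this costs $\widetilde{O}(r)$ per polynomial, hence $\widetilde{O}(dr)$ in total, and then $O(dr)$ to place the values into $M$.

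Next, for Step~\ref{alg-5:step2}, I would write $\sum_{i=0}^{d}\sum_{j=0}^{r-1}\mu_{ij}A^{j}X^{i}$, apply $\varphi\circ\pi_3$, and get $\sum_{i=0}^{d} g_i(\alpha)\beta^i$ where $g_i(t) = \sum_{j=0}^{r-1}\mu_{ij} t^j$. A short computation shows $g_i(\alpha)\beta^i$ again contributes only to the $i$-th wrap-around diagonal of $M$, with the $l$-th entry on that diagonal equal to $g_i(l)$ for the appropriate $l \in \{0,\dots,r-1\}$; crucially, the diagonals corresponding to distinct $i \in \{0,\dots,d\}$ do not overlap because $d < r$, so each $g_i$ is determined by its values at $0,1,\dots,r-1$, which we read directly off $M$. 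Since $\deg g_i < r$ and the $r$ evaluation points are distinct, fast interpolation recovers $g_i$, hence the $\mu_{ij}$, in $\widetilde{O}(r)$ operations per $i$, for a total of $\widetilde{O}(dr)$. Correctness follows because $\varphi$ is an isomorphism (Lemma~\ref{lem26}) and $\pi_3$ is the identity on $\left(\mathcal{k}\langle X,A\rangle/J_1\right)^{d,r-1}$ in the sense that it is injective on this subspace: any nonzero element of $(J_1+J_3)/J_1$ has bi-degree with each component $\ge r > d$, so no two distinct elements of $\left(\mathcal{k}\langle X,A\rangle/J_1\right)^{d,r-1}$ can have the same image under $\pi_3$. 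The reverse rewriting ($\sum \mu_{ij} z^j y^i \mapsto \sum \lambda_{ij} y^i z^j$) is handled identically, swapping the roles of the evaluation and interpolation steps.

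The main obstacle is the bookkeeping in the two paragraphs above: one must check carefully that in the matrix $M$ the coefficient data for different powers $X^i$ genuinely land on disjoint sets of entries (so that inverting $\varphi$ decouples into $d+1$ independent interpolation problems) and that the index arithmetic modulo $r$ is consistent in both directions. This is exactly the analogue of the interpolation analysis in the proof of Lemma~\ref{lem21}, but here the substitution $A \mapsto \operatorname{diag}(0,\dots,r-1)$ replaces $A \mapsto \operatorname{diag}(\zeta^0,\dots,\zeta^{r-1})$, and since $\operatorname{char}(\mathcal{k}) = r$ the points $0,1,\dots,r-1$ are still pairwise distinct, so the FFT-based interpolation bound \cite[Theorem~2.6]{burgisser2013algebraic} still applies; no new ideas beyond those in Lemma~\ref{lem21} are needed, only care with the Artin-twisted relation $XA = AX + X$.
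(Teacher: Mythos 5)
Your proposal is correct and follows essentially the same route as the paper's proof: compute $M=\varphi(\pi_3(F))$ by fast multipoint evaluation of the $d+1$ polynomials $f_i$ at $0,1,\dots,r-1$, then recover the $\mu_{ij}$ by solving $M=\sum_{i,j}\mu_{ij}\alpha^{j}\beta^{i}$ via fast interpolation, giving $\widetilde{O}(dr)$ in total, with the reverse direction by symmetry. Your added bookkeeping (disjointness of the wrap-around diagonals for $i\le d<r$ and injectivity of $\pi_3$ on the low bi-degree span) only makes explicit what the paper leaves implicit.
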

 \begin{proof}
By symmetry, it is sufficient to prove the first part. We observe that $F =\sum_{i=0}^{d}\sum_{j=0}^{r-1}\lambda_{ij} X^{i} A^{j}=\sum_{i=0}^{d} X^{i}(\sum_{j=0}^{r-1}\lambda_{ij}A^{j})$. Thus, nonzero elements in $M = \varphi(\pi(_3(E))$ are simply evaluations of polynomials $f_0,\dots, f_{d} \in \mathcal{k}[t]$ at points $0,1\dots, r-1$. Here $f_{i}(t)=\sum_{j=0}^{r-1}\lambda_{ij}t^{j}, 0 \le i \le d$. By \cite[Corollary~3.20]{burgisser2013algebraic}, evaluations of each $f_i$ can be completed by $\widetilde{O}(r)$ operations. Hence the cost of Step~\ref{alg-5:step1} is $\widetilde{O}(dr)$.

By Lemma~\ref{lem26}, $\varphi$ is an isomorphism, thus $\mu_{ij}$'s can be determined by solving $M = \sum_{i=0}^d \sum_{j=0}^{r - 1} \mu_{ij} \alpha^j \beta^i$, where $\alpha,\beta$ are matrices defined in Lemma~\ref{lem26}. Therefore, $\mu_{ij}$'s can be obtained by interpolating $d+1$ polynomials of degree at most $r-1$ at points $0,\dots, r-1$. According to \cite[Corollary~3.22]{burgisser2013algebraic}, each polynomial interpolation can be done by $\widetilde{O}(dr)$ operations. The cost of Step~\ref{alg-5:step2} is $\widetilde{O}(dr)$ and this completes the proof.
 \end{proof}
  
Let $\pi_2: \mathcal{k}\langle X, A\rangle/J_{1} \to \mathcal{k}\langle X, A\rangle/(J_{1}+J_{2})$ be the natural projection .
By the same argument as in the proof of Proposition~\ref{thm-22}, we obtain the counterpart of Proposition~\ref{thm-22} for Artin extensions.
\begin{algorithm}[!hpbt]
        \SetAlgoLined
		\KwIn{ $f_1 ,f_2 \in \mathcal{A}[x,\sigma]^d$}
		\KwOut{$f_1 f_2$}
        Compute $G_1 = \psi(f_1)$ and $G_2 = \psi(f_2)$\;
        Find $F_1$ and $F_2$ in $\left(\mathcal{k}\langle X,A \rangle/J_{1}\right)^{d,r-1}$ such that $\pi_2(F_1)=G_1$ and $\pi_2(F_2)=G_2$\;
        Find $F_{11},F_{12}, F_{13}  \in \left(\mathcal{k}\langle X,A \rangle/J_{1}\right)^{d,\lfloor r/3 \rfloor}$ such that
         $F_1 =F_{11}+A^{\lceil r/3\rceil}F_{12}+A^{2\lceil r/3\rceil}F_{13} $ \Comment{ Algorithm~\ref{alg-5}}\;
          Find $F_{21},F_{22}, F_{23}  \in \left(\mathcal{k}\langle X,A \rangle/J_{1}\right)^{d,\lfloor r/3 \rfloor}$ such that
         $F_2 =F_{21}+ A^{\lceil r/3\rceil}F_{22}+ A^{2\lceil r/3\rceil}F_{23} $ \Comment{ Algorithm~\ref{alg-5}}\;
        For $1 \le i, j \le 3$, compute $G_{ij = }F_{1i}F_{2j}$ \Comment{Algorithm~\ref{alg-4}}\;
        For $1 \le i, j \le 3$, rewrite $A^{(i-1)\lceil r/3\rceil}G_{ij}$ as $\sum\limits_{s=0}^{d}\sum\limits_{t}\theta_{st}X^{s}A^{t}$ \Comment{Algorithm~\ref{alg-5}}\;
        Compute $F =\sum_{i=1}^{3}\sum_{j=1}^{3} A^{(i-1)\lceil r/3\rceil}G_{ij}A^{(j-1)\lceil r/3\rceil}$\;
        Compute $f = \psi^{-1}(\pi_2(F))$\;
        Return $f$.
        \caption{low degree skew polynomial multiplication for Artin extension}
        \label{alg-6}
	\end{algorithm}
\begin{proposition} \label{thm-23}
Let $\mathcal{k}$ be a field of characteristic $r$ and let $\mathcal{A} = \mathcal{k}(a)$ be an Artin extension of $\mathcal{k}$. For $d < r/3$, Algorithm~\ref{alg-6} computes the multiplication of elements in $\mathcal{A}[x,\sigma]^d$ by $\widetilde{O}(d^{\omega-1}r)$ arithmetic operations, where $\sigma$ is the automorphism of $\mathcal{A}$ induced by $a \mapsto  a+1$.
\end{proposition}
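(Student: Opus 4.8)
The plan is to transcribe the proof of Proposition~\ref{thm-22} almost verbatim, replacing the pair of ideals $(I_1,I_2)$ by $(J_1,J_2)$, the isomorphism \eqref{eqn:kummer iso} by \eqref{eqn:artin iso}, and Algorithms~\ref{alg-2},~\ref{alg-3} by Algorithms~\ref{alg-4},~\ref{alg-6}. For correctness I would argue exactly as there: $\pi_2\colon\mathcal{k}\langle X,A\rangle/J_1\to\mathcal{k}\langle X,A\rangle/(J_1+J_2)$ is a $\mathcal{k}$-algebra homomorphism and $\psi$ in \eqref{eqn:artin iso} is a $\mathcal{k}$-algebra isomorphism, so it suffices to check that the element $F$ produced in Algorithm~\ref{alg-6} satisfies $\pi_2(F)=\psi(f_1)\psi(f_2)$. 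Writing the two splittings as $F_1=\sum_{i=1}^{3}A^{(i-1)\lceil r/3\rceil}F_{1i}$ and $F_2=\sum_{j=1}^{3}A^{(j-1)\lceil r/3\rceil}F_{2j}$, this is verified as in the proof of Proposition~\ref{thm-22} from the multiplicativity of $\pi_2$ and $\psi$ together with the correctness of each $G_{ij}=F_{1i}F_{2j}$ guaranteed by Lemma~\ref{lem27}; applying $\psi^{-1}\circ\pi_2$ then returns $f_1f_2$.

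For the complexity I would isolate the steps costing more than $O(dr)$. The evaluations $\psi(f_i)$, the liftings along $\pi_2$, and the final map $\psi^{-1}\circ\pi_2$ — which amounts to reducing each $X$-coefficient modulo $A^r-A-c$ — are all $O(dr)$ by inspection. The nine products $G_{ij}=F_{1i}F_{2j}$ are each computed by Algorithm~\ref{alg-4} in $\widetilde{O}(d^{\omega-1}r)$ operations by Lemma~\ref{lem27}, since $F_{1i},F_{2j}\in\left(\mathcal{k}\langle X,A\rangle/J_1\right)^{d,\lfloor r/3\rfloor}$ and $d<r/3$. The remaining nontrivial steps are the changes of monomial order: extracting the pieces $F_{ik}$ from $F_i$, and re-normalizing each $A^{(i-1)\lceil r/3\rceil}G_{ij}$ into the order $X^sA^t$ before the final projection, which are carried out by Algorithm~\ref{alg-5} in $\widetilde{O}(dr)$ operations by Lemma~\ref{lem28} (interleaving a reduction modulo $A^r-A-c$ wherever an $A$-degree would otherwise exceed $r-1$, still at cost $O(dr)$). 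Summing, Algorithm~\ref{alg-6} runs in $9\cdot\widetilde{O}(d^{\omega-1}r)+O(dr)=\widetilde{O}(d^{\omega-1}r)$ arithmetic operations.

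The one real departure from the Kummer case — and where I expect the (modest) main difficulty to lie — is the splitting and reassembly of the operands. In the Kummer situation $X$ and $A$ commute up to the scalar $\zeta$, so writing $F_i$ as $\sum_k A^{(k-1)\lceil r/3\rceil}F_{ik}$ with each $F_{ik}$ of bounded $A$-degree is a mere regrouping of the monomials $X^sA^t$ and costs no arithmetic. The Artin relation $XA=(A+1)X$ obstructs this, which is precisely why Algorithm~\ref{alg-6} routes the splitting and the reassembly through Algorithm~\ref{alg-5}. The point that must be checked carefully is that none of these exchanges inflates degrees beyond the admissible ranges: from the identity $A^sX^t=X^t(A-t)^s$ in $\mathcal{k}\langle X,A\rangle/J_1$ one sees that passing between the orders $X^sA^t$ and $A^tX^s$ preserves the $X$-degree and does not raise the $A$-degree, so each $F_{ik}$ indeed lands in $\left(\mathcal{k}\langle X,A\rangle/J_1\right)^{d,\lfloor r/3\rfloor}$ and each $G_{ij}$ keeps $X$-degree at most $2d$, as required for the recursive calls to Algorithm~\ref{alg-4}. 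Once this bookkeeping is settled, the rest is a verbatim copy of the argument for Proposition~\ref{thm-22}.
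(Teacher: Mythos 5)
Your proposal is correct and follows essentially the same route as the paper, which proves Proposition~\ref{thm-23} by simply invoking the argument of Proposition~\ref{thm-22} (correctness from the multiplicativity of $\pi_2$ and $\psi$, cost $\widetilde{O}(d^{\omega-1}r)$ for the nine calls to Algorithm~\ref{alg-4}, and $\widetilde{O}(dr)$ for the $X$--$A$ exchanges via Algorithm~\ref{alg-5} and Lemma~\ref{lem28}). Your extra bookkeeping via $A^sX^t=X^t(A-t)^s$, confirming that the splittings stay within the admissible bi-degree ranges, is a welcome elaboration of a point the paper leaves implicit, not a different approach.
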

\subsection{Tower of Galois algebras}
This subsection is devoted to generalize Propositions~\ref{prop16}, \ref{thm-22} and \ref{thm-23}. Namely, we prove that the lower bound established in Section~\ref{sec:lower bound} is quasi-optimal for a $\langle \sigma \rangle$-Galois algebra $\mathcal{A}_2$ over $\mathcal{k}$, if there exists a tower $\mathcal{k} \subseteq \mathcal{A}_1 \subseteq \mathcal{A}_2$ such that either $\mathcal{A}_1/\mathcal{k} $ or $\mathcal{A}_2/\mathcal{A}_1$ is one the three types discussed in Subsections~\ref{subsec:totally split algebra}--\ref{subsec:artin extension}.

Let $\mathcal{k} \subseteq \mathcal{A}_1 \subseteq \mathcal{A}_2$ be a tower of finite dimensional \'{e}tale $\mathcal{k}$-algebras. We denote $r_1 \coloneqq \dim_{\mathcal{k}}\mathcal{A}_1$ and $r_2 \coloneqq \dim_{\mathcal{k}}\mathcal{A}_2$. Assume further that $\sigma$ is an automorphism of $\mathcal{A}_2$ such that $\sigma|_{\mathcal{A}_1}$ is also an automorphism of $\mathcal{A}_1$ and that both $\mathcal{A}_1$ and $\mathcal{A}_2$ are $\langle \sigma \rangle$-Galois algebra. 

For $i=1,2$, we denote by $\mu^{(i)}_d$ the $\mathcal{k}$-bilinear map of multiplying two elements in $\mathcal{A}_i[x,\sigma]^{d}$. Moreover, we notice that $\sigma^{r_1}$ is an automorphism of $\mathcal{A}_2$ and $\sigma^{r_1}|_{\mathcal{A}_1} = \id_{\mathcal{A}_1}$. Thus $\mathcal{A}_2[x,\sigma^{r_1}]^d$ is an algebra over $\mathcal{A}_1$. We denote the $\mathcal{A}_1$-bilinear map of multiplying two elements in $\mathcal{A}_2[x,\sigma^{r_1}]^d$ by $\mu_d^{(1,2)}$.

Moreover, we assume that each automorphism of $\mathcal{A}$ can be computed by $\widetilde{O}(r)$ arithmetic operations in $\mathcal{k}$. This is a direct consequence of the availability assumption for representational data in \cite[Assumption H]{GHS20}.
\begin{lemma}\label{lem30} 
Let $\mathcal{k},\mathcal{A}_1,\mathcal{A}_2, \sigma,\mu^{(1)}_d,\mu^{(2)}_d$ and $\mu^{(1,2)}_d$ be as above. Then we have $C_{\mathcal{k}}(\mu_d^{(2)})=\widetilde{O}\left(r_1^3 C_{\mathcal{A}_1}\left(\mu_{\lceil d/r_1\rceil}^{(1,2)}\right)\right)$. If moreover $\mathcal{A}_2 = \mathcal{A}_1[a]$ for some $a\in \mathcal{A}_1$, then $C_{\mathcal{k}}(\mu_d^{(2)}) = O\left( \left(\frac{r_2}{r_1} \right)^2 C_{\mathcal{k}} (\mu_d^{(1)}) \right)$.
\end{lemma}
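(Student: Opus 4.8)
The plan is to decompose a skew polynomial of degree $d$ over $\mathcal{A}_2$ into a matrix of $\mathcal{A}_1$-coefficients with respect to a basis of $\mathcal{A}_2/\mathcal{A}_1$, so that multiplication in $\mathcal{A}_2[x,\sigma]$ becomes, up to a change of variables, multiplication of matrices of skew polynomials over $\mathcal{A}_1$ in the variable $y = x^{r_1}$. Concretely, fix a basis $b_1,\dots,b_{r_1}$ of $\mathcal{A}_2$ as an $\mathcal{A}_1$-module (using $\dim_{\mathcal{A}_1}\mathcal{A}_2 = r_2/r_1 =: s$; here $s = r_1$ in the balanced case treated by the first claim, but I will keep the notation general) and write $f = \sum_{i=0}^{r_1 - 1} x^i g_i(y)$ where $g_i \in \mathcal{A}_2[y,\sigma^{r_1}]$, which is legitimate since $\sigma^{r_1}|_{\mathcal{A}_1} = \id$. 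First I would record the bookkeeping: the product $f_1 f_2$ in $\mathcal{A}_2[x,\sigma]$ is governed by the relations $x^i \cdot x^j = x^{i+j}$ (with a carry into $y$ when $i+j \ge r_1$) and $x^i \, a = \sigma^i(a)\, x^i$ for $a \in \mathcal{A}_2$. So $f_1 f_2 = \sum_{i,j} x^i g_{1,i}(y)\, x^j g_{2,j}(y) = \sum_{i,j} x^{i+j}\, \sigma^{j}(g_{1,i}(y))\, g_{2,j}(y)$ after moving $x^j$ past the coefficients of $g_{1,i}$; applying $\sigma^j$ to a skew polynomial over $\mathcal{A}_2$ in $y$ costs $\widetilde{O}$ of its size by the availability assumption on automorphisms. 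This reorganizes the computation as an $r_1 \times r_1$ (or $s\times s$) matrix product over the ring $\mathcal{A}_2[y,\sigma^{r_1}]^{\lceil d/r_1\rceil}$, i.e. as $\langle r_1, r_1, r_1\rangle$ over that ring, whose entrywise multiplication is exactly $\mu^{(1,2)}_{\lceil d/r_1\rceil}$ up to the carry terms (which only shift indices and add, costing lower-order work).

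Next I would count: a matrix multiplication $\langle r_1,r_1,r_1\rangle$ over a ring where one multiplication costs $C_{\mathcal{A}_1}(\mu^{(1,2)}_{\lceil d/r_1\rceil})$ costs $O(r_1^\omega)$ such multiplications plus $O(r_1^3 \cdot \lceil d/r_1\rceil \cdot s)$ additions over $\mathcal{A}_1$; since $\omega \le 3$ and each multiplication already costs $\Omega$ of the input size, the whole thing is $\widetilde{O}(r_1^3\, C_{\mathcal{A}_1}(\mu^{(1,2)}_{\lceil d/r_1\rceil}))$, and converting between arithmetic over $\mathcal{A}_1$ and over $\mathcal{k}$ costs only a constant factor per operation (as $\dim_\mathcal{k}\mathcal{A}_1 = r_1$, multiplication in $\mathcal{A}_1$ is $\widetilde{O}(r_1)$ $\mathcal{k}$-operations, but this is subsumed into the $\widetilde{O}$ and the bilinear normalization already present in $C$). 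The changes of variable $f \mapsto (g_i)$ and back, together with the $\sigma^j$-twists, contribute only $\widetilde{O}(r_2 d)$ $\mathcal{k}$-operations, which is dominated. This yields the first claimed bound $C_{\mathcal{k}}(\mu_d^{(2)}) = \widetilde{O}\bigl(r_1^3\, C_{\mathcal{A}_1}(\mu^{(1,2)}_{\lceil d/r_1\rceil})\bigr)$.

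For the second claim, assume $\mathcal{A}_2 = \mathcal{A}_1[a]$. Then by the primitive element / tower structure $\mathcal{A}_2 \cong \mathcal{A}_1[t]/(h(t))$ with $\deg h = s = r_2/r_1$, and an element of $\mathcal{A}_2[x,\sigma]^d$ is, writing each coefficient in the power basis $1,a,\dots,a^{s-1}$ over $\mathcal{A}_1$, an element of $\mathcal{A}_1[x,\sigma]^d$ tensored up — but more usefully, multiplication in $\mathcal{A}_2[x,\sigma]$ restricts from multiplication of $s\times s$ matrices over $\mathcal{A}_1[x,\sigma]$ via the regular representation of $\mathcal{A}_2$ over $\mathcal{A}_1$ (the skewing $\sigma$ acts compatibly because $\sigma(a) \in \mathcal{A}_2 = \mathcal{A}_1[a]$ and $\sigma|_{\mathcal{A}_1}$ is a automorphism, so $\sigma$ is realized on the regular representation). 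Hence $\mu_d^{(2)} \le \langle s,s,s\rangle \otimes_{\mathcal{k}} \mu_d^{(1)}$ as $\mathcal{k}$-bilinear maps, giving $C_{\mathcal{k}}(\mu_d^{(2)}) = O(s^\omega\, C_\mathcal{k}(\mu_d^{(1)})) $; but in fact one only needs the \emph{naive} $O(s^2)$-multiplication bilinear map for $\langle s,s,s\rangle$ to get $C_\mathcal{k}(\mu^{(2)}_d) = O(s^2\, C_\mathcal{k}(\mu^{(1)}_d))$ plus $O(s^2 \cdot \text{size})$ additions, which is $O\bigl((r_2/r_1)^2\, C_\mathcal{k}(\mu_d^{(1)})\bigr)$ as desired — the point being that we do not want an $s^\omega$ here, only $s^2$, so we deliberately use the schoolbook matrix product. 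The reduction of the minimal polynomial of $a$ over $\mathcal{A}_1$ (i.e. the $h(t)$-reductions after multiplying) costs $O(s^2)$ multiplications of elements of $\mathcal{A}_1[x,\sigma]^d$ times $O(s)$, again absorbed.

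The main obstacle I expect is the first claim's bookkeeping of the \textbf{carry} when $i + j \ge r_1$ together with the $\sigma^j$-twists: one must check that rewriting $\sum_{i,j} x^{i+j}\sigma^j(g_{1,i})g_{2,j}$ back into the standard form $\sum_{i<r_1} x^i g_i(y)$ does not inflate the degree in $y$ beyond $O(\lceil d/r_1\rceil)$ and that the twisting by powers of $\sigma$ — which does not commute with the matrix-product structure — can be pushed entirely onto the left factor before the $\langle r_1,r_1,r_1\rangle$ product, at a cost of only $\widetilde{O}(r_1)$ $\mathcal{k}$-operations per twist by the availability assumption. Once that is verified, the complexity accounting is routine and the factor $r_1^3$ (rather than $r_1^\omega$) simply reflects that we bound the matrix-multiplication cost by schoolbook multiplication, which is harmless since $C_{\mathcal{A}_1}(\mu^{(1,2)}_{\lceil d/r_1\rceil})$ is the dominant term.
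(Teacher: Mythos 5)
Your overall strategy is the same as the paper's: split each factor along powers of $x$ modulo $r_1$ so that multiplication in $\mathcal{A}_2[x,\sigma]^d$ reduces to products of elements of $\mathcal{A}_2[x,\sigma^{r_1}]^{\lceil d/r_1\rceil}$ over $\mathcal{A}_1$, and (for the second claim) split along the power basis of $a$ over $\mathcal{A}_1$. However, the cost accounting as you wrote it does not actually give the stated bounds; you land on the right exponents only because two errors cancel. First, the product $\sum_{i,j} x^{i+j}\,\sigma^{-j}(g_{1,i})\,g_{2,j}$ is \emph{not} a matrix product $\langle r_1,r_1,r_1\rangle$ over $\mathcal{A}_2[y,\sigma^{r_1}]$: it is a (twisted) convolution of two length-$r_1$ vectors, requiring only $r_1^2$ ring products. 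Second, the claim that converting $\mathcal{A}_1$-operations to $\mathcal{k}$-operations is a constant factor ``subsumed into the $\widetilde{O}$'' is false: one operation in $\mathcal{A}_1$ costs $\widetilde{O}(r_1)$ operations in $\mathcal{k}$, and $r_1$ is not polylogarithmic. The correct count is $r_1^2$ products, each costing $C_{\mathcal{A}_1}\bigl(\mu^{(1,2)}_{\lceil d/r_1\rceil}\bigr)$ operations in $\mathcal{A}_1$, times $\widetilde{O}(r_1)$ per $\mathcal{A}_1$-operation — that factor $\widetilde{O}(r_1)$ is precisely where the cube comes from. With your literal accounting ($r_1^3$ or $r_1^\omega$ ring products at unit conversion cost) the bound would read $\widetilde{O}(r_1^4\,C_{\mathcal{A}_1})$ or coincide with the target only by accident. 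Also note the paper sidesteps your worry about mid-product twists entirely by writing $f_1=\sum_k x^k g_k(x^{r_1})$ (powers of $x$ on the left) and $f_2=\sum_l h_l(x^{r_1})x^l$ (powers of $x$ on the right), so the middle products $g_k h_l$ need no $\sigma$-twisting at all; all twists occur in the initial and final rewritings, which are cheap.

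The second claim has the analogous slip: the naive bilinear algorithm for $\langle s,s,s\rangle$ uses $s^3$ multiplications, not $O(s^2)$, so ``schoolbook $\langle s,s,s\rangle$'' would give $O\bigl((r_2/r_1)^3 C_{\mathcal{k}}(\mu^{(1)}_d)\bigr)$, which is weaker than the lemma. The structure you actually need is the $s^2$ pairwise products: as in the paper, write $f_1=\sum_{k} a^k g_k(x)$ and $f_2=\sum_{l} h_l(x)a^l$ with $g_k,h_l\in\mathcal{A}_1[x,\sigma]^d$, so that $f_1f_2=\sum_{k,l} a^k\bigl(g_k h_l\bigr)a^l$ costs $(r_2/r_1)^2$ invocations of $\mu^{(1)}_d$ plus dominated rewriting work (equivalently, a matrix--vector product $\langle s,s,1\rangle$ over $\mathcal{A}_1[x,\sigma]$ via the right-module structure, but in no case an $s\times s$ by $s\times s$ matrix product). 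So the decomposition idea is right and matches the paper; what needs repair is the identification of the bilinear structure ($r_1^2$, respectively $s^2$, products rather than a square matrix product) and the $\mathcal{A}_1$-to-$\mathcal{k}$ conversion factor.
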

\begin{proof}
Given $f_1,f_2\in \mathcal{A}_2[x,\sigma]^d$, we re-write \[
f_1=\sum_{k=0}^{r_{1}-1}x^{k} g_{k}(x^{r_{1}}),\quad f_2 =\sum_{l=0}^{r_{1}-1}h_{l}(x^{r_{1}})x^{l}
\]
for some $g_0,\dots, g_{r_1-1}, h_0,\dots, h_{r_1-1}\in \mathcal{A}_2[x,\sigma^{r_1}]^{\lceil d/r_1 \rceil}$. According to our assumption, one can find $g_j$'s and $h_j$'s by $\widetilde{O}(dr_{1})$ arithmetic operations, since the re-writing can be done by rearranging terms and computing automorphisms.

We notice that $f_1 f_2 =\sum_{k=0}^{r_{1}-1}\sum_{l=0}^{r_{1}-1}x^{k}g_{k}(x^{r_{1}})h_{l}(x^{r_{1}})x^{l}$.
Since $f_{k}, g_{l}\in \mathcal{A}_2[x,\sigma^{r_1}]^{\lceil d/r_1 \rceil}$, it costs $C_{\mathcal{A}_1} \left(\mu^{(1,2)}_{\lceil d/r_1 \rceil} \right) $ arithmetic operations in $\mathcal{A}_1$ to compute $f_k g_l$. Each operation in $\mathcal{A}_1$ has complexity $\widetilde{O}(r_1)$ over $\mathcal{k}$, thus the total complexity of computing $f_1 f_2$ is $\widetilde{O}\left( r_1^3 C_{\mathcal{A}_1} \left( \mu^{(1,2)}_{\lceil d/r_1 \rceil} \right) \right)$.

For the second part, we may re-write 
\[
f_1 =\sum_{k=0}^{r_{2}/r_{1}-1} a^{k} g_{k}(x),\quad f_2 =\sum_{l=0}^{r_2/r_1-1} h_{l}(x)a^{l}
\]
for some $g_{0},\dots, g_{r_2/r_1 - 1}, h_{0},\dots, h_{r_2/r_1 - 1} \in \mathcal{A}_1[x,\sigma]^{d}$. By the same argument as before, this re-writing again only costs $\widetilde{O}(r_{2}d/r_{1})$ operations in $\mathcal{k}$. We observe that each $f_{k}(x)g_{l}(x)$ is a product of two elements in $\mathcal{A}_1[x,\sigma]^d$. Thus it costs $C_{\mathcal{k}}(\mu_d^{(1)})$ arithmetic operations in $\mathcal{k}$. Since $f_1 f_2$ is the sum of $a^k g_k(x) h_l(x) a^l$, computing $f_1 f_2$ costs $O \left( \left(\frac{r_2}{r_1}\right)^{2} C_{\mathcal{k}}(\mu_d^{(1)}) \right)$.
\end{proof}

\begin{proposition}\label{coro31}
Assume that $\mathcal{A}_1$ is a field and $d = \Omega(r_1)$. If $\mathcal{A}_{2}/\mathcal{A}_{1}$ is a totally split algebra (resp. Kummer extension or Artin extension), then $C_{\mathcal{k}}(\mu_d^{(2)}) =\widetilde{O}\left(d^{\omega-1}{r_{2}}r_{1}^{3-\omega}\right)$.
\end{proposition}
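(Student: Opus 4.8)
The plan is to run the base-case algorithms of Subsections~\ref{subsec:totally split algebra}--\ref{subsec:artin extension} \emph{one level up the tower} --- that is, over the intermediate field $\mathcal{A}_1$ rather than over $\mathcal{k}$ --- and then descend back to $\mathcal{k}$ via Lemma~\ref{lem30}. First I would assemble the algebraic facts that make this meaningful. Since $\mathcal{A}_1$ is a field, $\mathcal{A}_2$ is a free $\mathcal{A}_1$-module, so $r_1 \mid r_2$; write $s \coloneqq r_2/r_1 = \dim_{\mathcal{A}_1}\mathcal{A}_2$. As an element acting on $\mathcal{A}_2$, the automorphism $\sigma$ has order $r_2$, hence $\sigma^{r_1}$ has order $r_2/\gcd(r_1,r_2) = s$; and since $\sigma|_{\mathcal{A}_1}$ has order $r_1$, the automorphism $\sigma^{r_1}$ fixes $\mathcal{A}_1$ pointwise. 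A dimension count for the $\langle\sigma\rangle$-Galois algebra $\mathcal{A}_2/\mathcal{k}$ --- after base change to $\overline{\mathcal{k}}$, $\sigma$ acts on the $r_2$ primitive idempotents of $\mathcal{A}_2\otimes_{\mathcal{k}}\overline{\mathcal{k}}$ as a single $r_2$-cycle, whose $r_1$-th power has exactly $r_1$ orbits --- shows $\mathcal{A}_2^{\sigma^{r_1}} = \mathcal{A}_1$. Thus $\mathcal{A}_2$ is a $\langle\sigma^{r_1}\rangle$-Galois algebra over $\mathcal{A}_1$ of dimension $s$, and by hypothesis $(\mathcal{A}_2,\sigma^{r_1})$, regarded over $\mathcal{A}_1$, is one of the three types of Subsection~\ref{subsec:etale algebra}; in particular the field-theoretic hypotheses of Propositions~\ref{prop16}, \ref{thm-22}, \ref{thm-23} with base field $\mathcal{A}_1$ are met automatically ($\mathcal{A}_1$ contains a primitive $s$-th root of unity in the Kummer case, $\operatorname{char}\mathcal{A}_1 = s$ in the Artin case), and we may assume $|\mathcal{A}_1| > 3s$ exactly as in the opening of Section~\ref{sec:algorithms}.

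With this in place, I would apply the relevant one of Propositions~\ref{prop16}, \ref{thm-22}, \ref{thm-23} with base field $\mathcal{A}_1$, extension dimension $s$, and degree $\lceil d/r_1\rceil$. In the standing low-degree regime $d \ll r_2$ of Section~\ref{sec:algorithms} we have $\lceil d/r_1\rceil < s/3$, so these propositions apply and yield $C_{\mathcal{A}_1}(\mu_{\lceil d/r_1\rceil}^{(1,2)}) = \widetilde{O}(\lceil d/r_1\rceil^{\omega-1} s)$. Substituting this into the first estimate of Lemma~\ref{lem30} gives
\[
C_{\mathcal{k}}\!\left(\mu_d^{(2)}\right) = \widetilde{O}\!\left(r_1^3\, C_{\mathcal{A}_1}\!\left(\mu_{\lceil d/r_1\rceil}^{(1,2)}\right)\right) = \widetilde{O}\!\left(r_1^3\,\lceil d/r_1\rceil^{\,\omega-1}\, s\right).
\]
Since $d = \Omega(r_1)$, we have $\lceil d/r_1\rceil = \Theta(d/r_1)$, so the right-hand side equals $\widetilde{O}\!\left(r_1^3 (d/r_1)^{\omega-1}(r_2/r_1)\right) = \widetilde{O}(d^{\omega-1} r_2\, r_1^{3-\omega})$, which is the assertion; all suppressed logarithmic factors are polylogarithmic in $\max(d,r_2)$.

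The $\widetilde{O}$-bookkeeping and the reduction to the low-degree regime are routine. The step I expect to be the main obstacle is the first paragraph: checking that $(\mathcal{A}_2,\sigma^{r_1})$ is again a $\langle\sigma^{r_1}\rangle$-Galois algebra of the same named type when regarded over $\mathcal{A}_1$, so that the algorithms of Subsections~\ref{subsec:totally split algebra}--\ref{subsec:artin extension} can legitimately be rerun with $\mathcal{A}_1$ as the ground field. This is exactly where the hypothesis that $\mathcal{A}_1$ is a field is used --- it forces $r_1 \mid r_2$ and makes $\mathcal{A}_2/\mathcal{A}_1$ étale --- and the remaining identities (the order of $\sigma^{r_1}$ equals $s$, and $\mathcal{A}_2^{\sigma^{r_1}} = \mathcal{A}_1$) are pinned down by the dimension count sketched above.
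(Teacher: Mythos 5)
Your proposal is correct and follows essentially the same route as the paper: apply the base-case algorithms (Propositions~\ref{prop16}, \ref{thm-22}, \ref{thm-23}) over the intermediate field $\mathcal{A}_1$ to bound $C_{\mathcal{A}_1}\bigl(\mu_{\lceil d/r_1\rceil}^{(1,2)}\bigr) = \widetilde{O}\bigl(\lceil d/r_1\rceil^{\omega-1} r_2/r_1\bigr)$, then descend to $\mathcal{k}$ via the first estimate of Lemma~\ref{lem30} and use $d = \Omega(r_1)$ to simplify the ceiling. The extra verification that $(\mathcal{A}_2,\sigma^{r_1})$ is a $\langle\sigma^{r_1}\rangle$-Galois algebra over $\mathcal{A}_1$ and the bookkeeping of the low-degree regime are details the paper leaves implicit, but they do not change the argument.
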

\begin{proof}
A direct application of Propositions~\ref{prop16}, \ref{thm-22} and \ref{thm-23} to Lemma~\ref{lem30} leads to the desired  conclusion.
\end{proof}
Similarly, we also have the proposition that follows.
\begin{proposition}\label{coro33}
If $\mathcal{A}_{1}/\mathcal{k}$ is a totally split algebra (resp. Kummer extension or Artin extension) and $d = O(r_1)$, then $C_{\mathcal{k}}(\mu_2^{(d)})=\widetilde{O}(d^{\omega-1} r_{1}^{-1} r_{2}^{2})$.
\end{proposition}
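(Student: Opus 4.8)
The plan is to mirror the structure of Proposition~\ref{coro31}, but apply the second reduction in Lemma~\ref{lem30} rather than the first. Specifically, the hypothesis is now that the \emph{bottom} layer $\mathcal{A}_1/\mathcal{k}$ is one of the three special types (totally split, Kummer, or Artin), and that $d = O(r_1)$. The key observation is that the second part of Lemma~\ref{lem30} requires $\mathcal{A}_2 = \mathcal{A}_1[a]$ for some $a \in \mathcal{A}_1$; this is automatic here because $\mathcal{A}_2$ is a finite \'{e}tale $\mathcal{k}$-algebra over an infinite field, so by the primitive element theorem (Theorem~\ref{thm:primitive element thm}) we may write $\mathcal{A}_2 = \mathcal{k}[a]$ for some $a$, and in particular $a \in \mathcal{A}_1$ is not actually what is needed — rather, what Lemma~\ref{lem30} really uses is a primitive generator over $\mathcal{A}_1$; since $\mathcal{A}_2/\mathcal{A}_1$ is itself \'{e}tale, such a generator exists. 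So the second bound in Lemma~\ref{lem30} applies and gives $C_{\mathcal{k}}(\mu_d^{(2)}) = O\!\left( (r_2/r_1)^2\, C_{\mathcal{k}}(\mu_d^{(1)}) \right)$.

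Next I would bound $C_{\mathcal{k}}(\mu_d^{(1)})$. Since $d = O(r_1)$, we are in the low-degree regime for the bottom algebra $\mathcal{A}_1$. If $d < r_1/3$ we may apply the relevant one of Propositions~\ref{prop16}, \ref{thm-22}, \ref{thm-23} directly to conclude $C_{\mathcal{k}}(\mu_d^{(1)}) = \widetilde{O}(d^{\omega-1} r_1)$; for the boundary range $r_1/3 \le d = O(r_1)$ one splits each input skew polynomial of degree $d$ into $O(1)$ blocks of degree $< r_1/3$ and recombines (exactly the $3\times 3$ block trick already used inside Algorithms~\ref{alg-3} and \ref{alg-6}), which only changes the estimate by a constant factor. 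Hence $C_{\mathcal{k}}(\mu_d^{(1)}) = \widetilde{O}(d^{\omega-1} r_1)$ in the whole range $d = O(r_1)$.

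Combining the two displays yields
\[
C_{\mathcal{k}}(\mu_d^{(2)}) = O\!\left( \Big(\tfrac{r_2}{r_1}\Big)^{2} \widetilde{O}(d^{\omega-1} r_1) \right) = \widetilde{O}\!\left( d^{\omega-1}\, r_1^{-1}\, r_2^{2} \right),
\]
which is exactly the claimed bound. I would then simply remark that the three cases (totally split, Kummer, Artin) differ only in which of Propositions~\ref{prop16}--\ref{thm-23} is invoked, so a single uniform argument covers all of them.

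The main obstacle I anticipate is a bookkeeping one rather than a conceptual one: making sure the second reduction of Lemma~\ref{lem30} is genuinely applicable, i.e., that we have a primitive element of $\mathcal{A}_2$ over $\mathcal{A}_1$ with the right degree $r_2/r_1$, and that the automorphism $\sigma$ restricts compatibly so that $\mathcal{A}_1[x,\sigma]^d$ really is the algebra appearing in that lemma. Once that compatibility is in place, the rest is a direct substitution. A secondary point to check is that the factor $r_1^{3-\omega}$ in Proposition~\ref{coro31} becomes $r_1^{-1}$ here precisely because we pay $(r_2/r_1)^2$ instead of $r_1^3$ in the base change — so the asymmetry between Propositions~\ref{coro31} and \ref{coro33} is an honest consequence of which layer is ``nice'', and the statement should be presented with that contrast made explicit.
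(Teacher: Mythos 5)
Your argument is exactly the one the paper intends: Proposition~\ref{coro33} is given no separate proof beyond the word ``similarly'', the implicit argument being precisely the second reduction of Lemma~\ref{lem30} (paying $(r_2/r_1)^2$ multiplications in $\mathcal{A}_1[x,\sigma]^d$) combined with Propositions~\ref{prop16}, \ref{thm-22} and \ref{thm-23} to get $C_{\mathcal{k}}(\mu_d^{(1)})=\widetilde{O}(d^{\omega-1}r_1)$, which yields $\widetilde{O}(d^{\omega-1}r_1^{-1}r_2^{2})$. Your extra care about the boundary range $r_1/3\le d=O(r_1)$ and about the primitive-generator hypothesis $\mathcal{A}_2=\mathcal{A}_1[a]$ of Lemma~\ref{lem30} only fills in details the paper leaves implicit; the route is the same.
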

\section*{Conclusion}
In this paper, we establish the inequality 
\[
\rank_{\mathcal{k}}(
\mu_d) \ge d\min\{d,r\}^{\omega-2} r,
\] 
where $\rank_{\mathcal{k}}(
\mu_d) $ is the bilinear complexity  of multiplying two skew polynomials in $\mathcal{A}[x,\sigma]$ of degree at most $d$ and $\mathcal{A}$ is a $\langle \sigma \rangle$-Galois algebra over $\mathcal{k}$ of dimension $r$. More importantly, this provides us a lower bound for the total complexity $C_{\mathcal{k}}(
\mu_d)$ of skew polynomial multiplication since $C_{\mathcal{k}}(
\mu_d) \ge \rank_{\mathcal{k}}(
\mu_d)$. We prove the quasi-optimality of this lower bound by presenting algorithms for special cases, including totally split algebras, Kummer extensions, Artin extensions and towers of these algebras. The complexity of our algorithms coincides with the conjectured upper bound in \cite{2017Fast}, which equals to our lower bound up to a log factor. We also prove that 
\[
\operatorname{A-rank}_{\mathcal{k}}(N) = \widetilde{O}( d^{\omega - 1} r).
\]
Here $\operatorname{A-rank}_{\mathcal{k}}(N)$ denotes the average of the bilinear complexity of simultaneously multiplying $N=\Omega(r)$ pairs of skew polynomials in $\mathcal{A}[x,\sigma]$ of degree at most $d \ll r$. 

For the future work, although our quasi-optimal lower bound together with the algorithm in \cite{2017Fast} completely determines (up to a log factor) $C_{\mathcal{k}} (\mu_d)$ for $d \ge r$, the upper bound of degree $d \ll r$ skew polynomial multiplication is still unknown in general. Namely, we do not know if there exists an algorithm of complexity $\widetilde{O}(d^{\omega - 1}r)$ that computes the multiplication of elements in $\mathcal{A}[x,\sigma]^d$ ($d\ll r$) for any $r$-dimensional $\langle \sigma \rangle$-Galois algebra $\mathcal{A}$. Moreover, results in this paper imply that if $d \ll r$, then 
\[
\operatorname{A-rank}_{\mathcal{k}}(N) = \widetilde{O}( d^{\omega - 1} r),\quad   d^{\omega-1} r \le \rank_{\mathcal{k}}(
\mu_d). 
\]
By definition, we also have $\operatorname{A-rank}_{\mathcal{k}}(N) \le \rank_{\mathcal{k}}(
\mu_d)$. However, it is unknown whether the equality holds.
\bibliographystyle{ACM-Reference-Format}
\bibliography{sample-base}
\end{document}